\newtheorem{definition}{Definition}
\newtheorem{theorem}{Theorem}
\newtheorem{lemma}{Lemma}
\newtheorem{corollary}{Corollary}
\newenvironment{proof}{\noindent \textbf{Proof}\ }{$\Box$ \bigskip}
\newtheorem{remark}{Remark}
\newenvironment{algorithm}
  {\begin{enumerate}\setlength{\itemsep}{-0.1em}}
  {\end{enumerate}}
\newlength{\aindent}\setlength{\aindent}{1em}
\newlength{\awidth}\setlength{\awidth}{\textwidth}\addtolength{\awidth}{-5em}
\newcommand{\aitem}[2]{
  \addtolength{\awidth}{-#1\aindent}%
  \item \hspace{2em}\hspace{#1\aindent} \parbox[t]{\awidth}{\parindent=-3em #2}%
  \addtolength{\awidth}{#1\aindent}%
  }
\title{Efficient algorithms for enumerating maximal common subsequences of two strings}
\author{Miyuji Hirota\thanks{miyuji.hirota.p8@dc.tohoku.ac.jp} \and Yoshifumi Sakai\thanks{yoshifumi.sakai.c7@tohoku.ac.jp (Corresponding author)}}
\date{Graduate School of Agricultural Science, Tohoku University}
\begin{document}
\maketitle

\begin{abstract}
When searching extensively for significant common subsequences of two strings, one can consider maximal common subsequences (MCSs) as constituting the smallest set that encompasses all common subsequences, where a common subsequence is called maximal, if there exists no common subsequence that is longer than it and has it as a subsequence.
For any positive integer $n$, this article considers the problem of enumerating MCSs of two strings both of length $O(n)$.
Let a $(p, s, d)$-algorithm output all distinct MCSs each in $O(d)$ time after performing an $O(p)$-time preprocessing to construct a data structure of size $O(s)$.
We propose $(n^3, n^3, n)$-, $(n^2, n^2, n)$-, and $(n^2, n, n \log n)$-algorithms.
Although the $O(n^3, n^3, n)$-algorithm is inferior to the $(n^2, n^2, n)$-algorithm in terms of efficiency, the data structure created by it permits access to all distinct MCSs without explicit enumeration and is hence suitable for efficient exploration of certain special MCSs.
The $(n^2, n^2, n)$- and $(n^2, n, n \log n)$-algorithms are modifications of the $(\sigma n^2 \log n, n^2, \sigma n \log n)$-algorithm of [Conte et al., Algorithmica 84 (2022) 757--783], where $\sigma$ is the size of the alphabet over which the two strings are drawn.
\end{abstract}

\section{Introduction}\label{sec introduction}

To analyze sequential data consisting of characters, comparing two strings to search for significant patterns they share is a fundamental task.
One of the most classic and well-studied such patterns is the longest common subsequence (LCS).
The LCS of strings $X$ and $Y$ is defined as the longest string that commonly appears in $X$ and $Y$ as their subsequence.
Here, a subsequence of a string is obtained from the string by deleting any number of characters at any position not necessarily contiguous.
The LCS of $X$ and $Y$ is not necessarily unique; in general, there may exist many.
The LCS problem consists of finding an arbitrary one of the LCSs for given strings $X$ and $Y$.
As is well known, for any strings $X$ and $Y$ both of length $O(n)$, the dynamic programming (DP) algorithm solves the LCS problem in $O(n^2)$ time~\cite{WF}.
The time complexity of this algorithm is almost optimal in the sense that for any positive real number $\varepsilon$, there exists no algorithm that solves the LCS problem in $O(n^{2 - \varepsilon})$ time, unless the strong exponential time hypothesis (SETH) is false~\cite{ABW}.
The fastest algorithm known so far solves the LCS problem based on the four-Russians technique in $O(n / \log^2 n)$ time~\cite{MP}, where we adopt the unit-cost RAM model in this article.
For strings $X$ and $Y$ parameterized not only by $n$ but also by values such as the alphabet size, the LCS length, the number of matches, the number of the dominant matches, etc., algorithms specific to those parameters have also been proposed~\cite{Apo,AG,CP,GH,HS,IR,Mye,NKY,Ric}.
The space complexity of the DP algorithm of \cite{WF}, which is $O(n^2)$, can be reduced to $O(n)$ by the divide-and-conquer method without increasing the time complexity~\cite{Hir}.
Generalizing the number of strings to be compared from two to any, the LCS problem for multiple strings was shown to be NP-hard~\cite{Mai}.

When searching for meaningful common subsequences of strings $X$ and $Y$, if the conditions to be satisfied by the common subsequence to be found are known in advance, then simply finding an LCS does not necessarily yield the desired common subsequence.
As common subsequences to be found in particular such situations, conditional LCSs are considered.
For example, given a string $P$ as the required pattern to be taken into account in searching for the common subsequences, the constrained LCS (or SEQ-IC-LCS) problem~\cite{Tsa} consists of finding an arbitrary longest subsequence common to $X$ and $Y$ that has $P$ as its subsequence.
By modifying the DP algorithm of \cite{WF} for the (non-conditional) LCS problem so as to work on a three-dimensional DP table, this conditional LCS problem was shown to be solvable in $O(n^3)$ time~\cite{CSF+}.
Other conditional LCS problems with respect to $P$ that were shown to be solvable in $O(n^3)$ time or faster include the restricted LCS (or SEQ-EC-LCS) problem~\cite{CC,GHL+}, the STR-IC-LCS problem~\cite{CC,Deo}, and the STR-EC-LCS problem~\cite{WWW+}.

Suppose that we want to consider all common subsequences of strings $X$ and $Y$ to search for some important common structures shared by the strings.
For any common subsequence $Z$ of $X$ and $Y$, any subsequence $\tilde{Z}$ of $Z$ is also a common subsequence of $X$ and $Y$, so we will treat $Z$ as simultaneously representing $\tilde{Z}$.
For example, if $X = \mathtt{acbcded}$ and $Y = \mathtt{edeabcb}$, then we want to consider $\mathtt{a}$, $\mathtt{b}$, $\mathtt{c}$, $\mathtt{d}$, $\mathtt{e}$, $\mathtt{ab}$, $\mathtt{ac}$, $\mathtt{bc}$, $\mathtt{cb}$, $\mathtt{de}$, $\mathtt{ed}$, $\mathtt{abc}$, and $\mathtt{acb}$, but it is sufficient to explicitly consider only $\mathtt{de}$, $\mathtt{ed}$, $\mathtt{abc}$, and $\mathtt{acb}$.
Under this perspective, what common subsequences should we explicitly consider as the minimum set to represent all common subsequences?
If only all LCSs are considered explicitly, not all common subsequences are necessarily represented.
In other words, the requirement for maximum length is too restrictive.
For example, the previous concrete $X$ and $Y$ have $\mathtt{abc}$ and $\mathtt{acb}$ as LCSs, but neither represents $\mathtt{d}$, $\mathtt{e}$, $\mathtt{de}$, or $\mathtt{ed}$.
As a common subsequence that satisfies a loose alternative to the requirement of maximum length, let a \emph{maximal common subsequence} (an \emph{MCS}) be a common subsequence that is no longer a common subsequence no matter what character is inserted in any position.
From this definition, any common subsequence is represented by at least one MCS.
Conversely, any MCS is not represented by any common subsequence other than it.
Thus, our intended minimum set consists only of all MCSs.

As seen above, MCSs can be regarded as constituting the smallest set that represents all common subsequences.
Despite this useful feature, MCSs have not been studied very well at this time.
A few known results are as follows.
The shortest MCS problem consists of finding an arbitrary MCS of $X$ and $Y$ that has the least length.
This problem was shown to be solvable in $O(n^3)$ time and in $O(n^3)$ space~\cite{FIM}.
Given a common subsequence $P$ of $X$ and $Y$ arbitrarily, the constrained MCS problem consists of finding an arbitrary MCS of $X$ and $Y$ that has $P$ as its subsequence.
This problem can be solved in $O(n \log n)$ time and in $O(n)$ space\footnote{
In \cite{Sak} it is claimed that the MCS problem can be solved in $O(n \sqrt{\log n / \log \log n})$ time and $O(n)$ space by using the data structure of Beame and Fich~\cite{BFich} structure.
However, this is incorrect because it does not take into account the time to construct the data structure nor the space to store it.
The execution time and required space presented in the text are established by replacing their data structure with a naive data structure $\mathit{Index}_{\mathit{next/prev}}$, which is introduced in Section~\ref{sec pre}.
}
and whether a given common subsequence $Z$ of $X$ and $Y$ is maximal or not can be determined in $O(n)$ time \cite{Sak}.
Recently, Conte~et~al.~\cite{CGP+} showed that MCSs can be enumerated with a polynomial-time delay.
After performing an $O(\sigma n^2 \log n)$-time preprocessing to prepare a certain data structure of size $O(n^2)$, their algorithm outputs all distinct MCSs of $X$ and $Y$ each in $O(\sigma n \log n)$ time, where $\sigma$ is the number of characters in the alphabet.
With this result of Conte~et~al.~\cite{CGP+} we have for the first time a way to access the list of all MCSs.
The purpose of this article is to explore more efficient ways to enumerate MCSs with the goal of providing easier access to all MCSs.

\subsection{Our contribution}

To design algorithms for enumerating MCSs, we adopt two approaches, ``the all-in-one data structure approach'' and ``the prefix extension approach.''

In the all-in-one data structure approach, we design a directed acyclic graph (DAG), which we call the all-MCS graph, that represents all MCSs directly in the following sense.
This DAG has a single source vertex, having no incoming edge, and a single sink vertex, having no outgoing edge.
Furthermore, each path from the source vertex to the sink vertex represents a distinct MCS and vice versa, where the $k$th vertex in the path corresponds to the $k$th element of the MCS.
Thus, we can enumerate MCSs by enumerating paths from the source vertex to the sink vertex of the all-MCS graph.
We show that the all-MCS graph, satisfying the above conditions, exists as a DAG of size $O(n^3)$ and also show that this DAG can be constructed in $O(n^3)$ time from $X$ and $Y$. 
In enumeration using the all-MCS graph, the delay time is superior to the algorithm of Conte~et~al.~\cite{CGP+} but inferior in the preprocessing time and required space.
Perhaps the greatest strength of the all-MCS graph is that it allows access to all MCSs without explicitly enumerating them.
It is possible to utilize this characteristic to efficiently find certain special MCSs, a few examples of which are mentioned in the remarks.

The prefix extension approach is exactly the one adopted by the algorithm of Conte~et~al.~\cite{CGP+}.
In other words, we try to modify their algorithm to be more efficient.
Their approach is to build each MCS by repeatedly appending a valid character to the current prefix.
Since the valid characters to be appended to the prefix are determined by the support of the data structure, the design of it directly affects the efficiency of the algorithm.
We propose two algorithms by replacing the original data structure of Conte et al.~\cite{CGP+} with another.
One algorithm aims to minimize the delay time.
It outputs each MCS in $O(n)$ time after performing an $O(n^2)$-time preprocessing to prepare a data structure of size $O(n^2)$.
The other algorithm aims to minimize the required space to store the data structure.
It outputs each MCS in $O(n \log n)$ time after performing an $O(n^2)$-time preprocessing to prepare a data structure of size $O(n)$.
The efficiency of either algorithm is hence independent of the size $\sigma$ of the alphabet.
Furthermore, these algorithms successfully improve either the delay time of the Conte et al.~\cite{CGP+}'s algorithm by a factor of $\log n$ or the required space by a factor of $n$.
The efficiency of these proposed algorithms may appear to outperform the algorithm of Conte et al.~\cite{CGP+}.
However, since the efficiency of their algorithm can be evaluated using other parameters in addition to $n$ and $\sigma$, strictly speaking, the performance of our algorithms is not comparable to theirs.

This article is organized as follows.
Section~\ref{sec pre} defines our problem formally and introduces notations and terminology used in this article.
Section~\ref{sec all-in-one} proposes the all-in-one data structure approach algorithm by defining the all-MCS graph.
Section~\ref{sec ext} proposes the prefix extension approach algorithms by introducing the algorithm of Conte et al.~\cite{CGP+} as the basis for the modification in Section~\ref{sec Conte} and designing the data structures adopted by our algorithms in Sections~\ref{sec Enum221} and \ref{sec Enum211}.
Section~\ref{sec conc} concludes this article.

\section{Preliminaries}\label{sec pre}

For any sequences $S$ and $S'$, let $S \circ S'$ denote the sequence obtained by concatenating $S'$ after $S$.
For any sequence $S$, let $|S|$ denote the number of elements composing $S$.
For any index $k$ with $1 \leq k \leq |S|$, let $S[k]$ denote the $k$th element of $S$, so that $S = S[1] \circ S[2] \circ \cdots \circ S[|S|]$.
A \emph{subsequence} of $S$ is the sequence obtained from $S$ by deleting any number of elements at any position not necessarily contiguous, i.e., $S[k_1] \circ S[k_2] \circ \cdots \circ S[k_\ell]$ for some length $\ell$ with $0 \leq \ell \leq |S|$ and any $\ell$ indices $k_1,k_2,\dots,k_\ell$ with $1 \leq k_1 < k_2 < \cdots < k_\ell \leq |S|$.
We say that sequence $S$ \emph{contains} sequence $S'$, if $S'$ is a subsequence of $S$.
For any indices $k$ and $l$ with $1 \leq k \leq l \leq |S|$, let $S[k : l]$ denote the contiguous subsequence $S[k] \circ S[k + 1] \circ \cdots \circ S[l]$ of $S$.
For convenience, $S[k : k - 1]$ with $1 \leq k \leq |S| + 1$ denotes the empty contiguous subsequence of $S$.
Any contiguous subsequence $S[1 : k]$ with $0 \leq k \leq |S|$ is called a \emph{prefix} of $S$ and is denoted by $S \langle k]$.
Any contiguous subsequence $S[k : |S|]$ with $1 \leq k \leq |S| + 1$ is called a \emph{suffix} of $S$ and is denoted by $S[k \rangle$.
A \emph{string} is a sequence whose elements are characters in an alphabet.

Let $X$ and $Y$ be arbitrary strings over an alphabet $\Sigma$ of $\sigma$ characters.
Any string that both $X$ and $Y$ contain is called a \emph{common subsequence} of $X$ and $Y$.
We say that $X$ and $Y$ \emph{share} $Z$, if $Z$ is a common subsequece of $X$ and $Y$.
A \emph{maximal common subsequence} (an \emph{MCS}) of $X$ and $Y$ is a common subsequence of $X$ and $Y$ in which inserting any character no longer yields a common subsequence of $X$ and $Y$.
We consider the problem of enumerating MCSs of any strings $X$ and $Y$ both of length $O(n)$, where $n$ is an arbitrary positive integer.
Any algorithm that solves this problem would have to find each of all MCSs of $X$ and $Y$ exactly once within a certain delay time, perhaps after a certain preprocessing.
We call this problem the \emph{MCS enumeration problem}.
Let an $(f_{\mathrm{p}}(n, \sigma), f_{\mathrm{s}}(n, \sigma), f_{\mathrm{d}}(n, \sigma))$-\emph{algorithm} solve this problem, if it performs an $O(f_{\mathrm{p}}(n, \sigma))$-time preprocessing, uses $O(f_{\mathrm{s}}(n, \sigma))$ space, and  outputs all distinct MCSs of $X$ and $Y$ one by one each in $O(f_{\mathrm{d}}(n, \sigma))$ time.
We call $f_{\mathrm{p}}(n, \sigma)$, $f_{\mathrm{s}}(n, \sigma)$, and $f_{\mathrm{d}}(n, \sigma)$ the \emph{preprocessing-time, space, and delay-time complexities} of the algorithm, respectively.

In what follows, for convenience, we assume without loss of generality that $\Sigma = \{ 1,2,\dots,\sigma \}$, $X[1] = Y[1] = 1$, $X[|X|] = Y[|Y|] = \sigma$, both $X[2 : |X| -1]$ and $Y[2 : |Y| - 1]$ are strings over $\{ 2,3,\dots,\sigma - 1 \}$, and $\sigma \leq |X| + |Y| - 2 = O(n)$.
Note that $Z$ is an MCS of $X$ and $Y$ if and only if $Z[1] = 1$, $Z[|Z|] = \sigma$, and $Z[2 : |Z| - 1]$ is an MCS of $X[2 : |X| - 1]$ and $Y[2 : |Y| - 1]$.
Another assumption is that any sequence $S$ maintained by an algorithm is implemented as a one-dimensional array of $O(|S|)$ elements.
Therefore, any element $S[k]$ of $S$ with $1 \leq k \leq |S|$ can be accessed in $O(1)$ time, the first or last element of $S$ can be deleted in  $O(1)$ time, and any additional element can be appended or prepended to $S$ in $O(1)$ amortized time.

Below, we introduce notations and terminology that are used to design our algorithms.

For any string $W$ in $\{ X, Y \}$, any character $c$ with $1 \leq c \leq \sigma$, and any index $h$ with $1 \leq h \leq |W|$, let $\mathit{next}_W(c, h)$ ($\mathit{prev}_W(c, h)$) denote the least (resp. greatest) index $h'$ with $h < h' \leq |W|$ (resp. $1 \leq h' < h$) such that $W[h'] = c$, if any, or $|W| + 1$ (resp. $0$), otherwise. 
Let queries of any of these indices be called \emph{next/prev-queries}.
To support next/prev-queries, we consider the following two data structures.
One is $\mathit{Table}_{\mathit{next/prev}}$, which consists of pairs of sequences $\mathit{next}_W(c, 1) \circ \mathit{next}_W(c, 2) \circ \cdots \circ \mathit{next}_W(c, |W|)$ and $\mathit{prev}_W(c, 1) \circ \mathit{prev}_W(c, 2) \circ \cdots \circ \mathit{prev}_W(c, |W|)$ for all strings $W$ in $\{ X, Y \}$ and all characters $c$ with $1 \leq c \leq \sigma$.
This data structure is $O(n^2)$-time constructible, is of size $O(n^2)$, and supports $O(1)$-time next/prev-queries by working as the lookup table.
The other data structure is $\mathit{Index}_{next/prev}$, which consists of sequences $h_1 \circ h_2 \circ \cdots \circ h_\ell$ for all strings $W$ in $\{ X, Y \}$ and all characters $c$ with $1 \leq c \leq \sigma$, where $h_1 \circ h_2 \circ \cdots \circ h_\ell$ is the sequence of all indices $h$ such that $W[h] = c$ in ascending order.
This data structure is $O(n)$-time constructible, is of size $O(n)$, and supports $O(\log n)$-time next/prev-queries by performing a binary search on one of the sequences.
Although we consider only these simple data structures, they do not represent a bottleneck in any of the preprocessing-time, space, and delay-time complexities for the algorithms we propose.

Let a \emph{match} be a pair $(i, j)$ of indices with $1 \leq i \leq |X|$ and $1 \leq j \leq |Y|$ such that $X[i] = Y[j]$.
For any match $w$, we use $i_w$ and $j_w$ to denote the indices such that $w = (i_w, j_w)$ and use $c_w$ to denote the character common to $X[i_w]$ and $Y[j_w]$.
In addition, we use $d_w$ (resp. $a_w$) to denote the \emph{diagonal} (resp. \emph{anti-diagonal}) \emph{coordinate} $j_w - i_w$ (resp. $i_w + j_w$) of $w$ by considering it as a point on a two-dimensional grid.
For any matches $u$ and $v$, let $u < v$ (resp. $u \leq v$) mean that $i_u < i_v$ and $j_u < j_v$ (resp. $i_u \leq i_v$ and $j_u \leq j_v$).
Let $u \lneq v$ mean that $u \leq v$ and at least $i_u < i_v$ or $j_u < j_v$.
Furthermore, let $u \prec v$ mean that $u < v$ and there exists no match $w$ such that $u < w < v$.

For any character $c$ with $1 \leq c \leq \sigma$ and any match $w$ such that $X[i_w + 1 \rangle$ and $Y[j_w + 1 \rangle$ (resp. $X \langle i_w - 1]$ and $Y \langle j_w - 1]$) share $c$, let $\mathit{next}(c, w)$ (resp. $\mathit{prev}(c, w)$) denote the match $(\mathit{next}_X(c, i_w), \mathit{next}_Y(c, j_w))$ (resp. $(\mathit{prev}_X(c, i_w), \mathit{prev}_Y(c, j_w))$).
For any string $Z$ that $X$ and $Y$ share, let $\mathit{pref}(Z)$ (resp. $\mathit{suff}(Z)$) denote the match $(i, j)$ such that $X \langle i]$ (resp. $X[i \rangle$) is the shortest prefix (resp. suffix) of $X$ that contains $Z$ and $j$ satisfies the same condition as $i$ with respect to $Y$.
Hence, for any character $c$ and string $Z$ such that $X$ and $Y$ share $Z \circ c$ (resp. $c \circ Z$), $\mathit{pref}(Z \circ c) = \mathit{next}(c, \mathit{pref}(Z))$ (resp. $\mathit{suff}(c \circ Z) = \mathit{prev}(c, \mathit{suff}(Z))$).
Note that for any strings $Z'$ and $Z''$ both shared by $X$ and $Y$, $X$ and $Y$ share $Z' \circ Z''$ if and only if $\mathit{pref}(Z') < \mathit{suff}(Z'')$.
Let any match that is $\mathit{pref}(Z)$ (resp. $\mathit{suff}(Z)$) for some string $Z$ be called a \emph{pref-match} (resp. \emph{suff-match}).
Note that for any match $w$, $w$ is a pref-match (resp. suff-match) if and only if either $w = (1, 1)$ (resp. $w = (|X|, |Y|)$) or there exists a pref-match (resp. suff-match) $w'$ such that $w = \mathit{next}(c_w, w')$ (resp. $w = \mathit{prev}(c_w, w')$).

For any sequence $S$ of integers, let $\mathit{RMQ}_S$ denote the \emph{range minimum query} (\emph{RMQ}) data structure \cite{BF} for $S$, which can be constructed in $O(|S|)$ time from $S$ and supports $O(1)$-time queries of $\mathit{RMQ}_S(k' : k'')$ for any indices $k'$ and $k''$ with $1 \leq k' \leq k'' \leq |S|$, where $\mathit{RMQ}_S(k' : k'')$ is the greatest index $k$ with $k' \leq k \leq k''$ such that any element in $S[k' : k - 1]$ is greater than $S[k]$.
For convenience, we sometimes use $-S$ to denote the sequence $-S[1] \circ -S[2] \circ \cdots \circ -S[|S|]$, so that $\mathit{RMQ}_{-S}$ can be used to support \emph{range maximum queries} in the sense that $\mathit{RMQ}_{-S}(k' : k'')$ is the greatest index $k$ with $k' \leq k \leq k''$ such that any element in $S[k' : k - 1]$ is less than $S[k]$.

\section{All-in-one data structure approach algorithm}\label{sec all-in-one}

This section proposes an $(n^3, n^3, n)$-algorithm that solves the MCS enumeration problem. 

We design the proposed algorithm based on the following lemma, which redefines MCSs.

\begin{lemma}[\cite{Sak}]\label{lem MCS}
For any string $Z$ that $X$ and $Y$ share, $Z$ is an MCS of $X$ and $Y$ if and only if $\mathit{pref}(Z \langle k]) \prec \mathit{suff}(Z[k + 1 \rangle)$ for any index $k$ with $0 \leq k \leq |Z|$.
\end{lemma}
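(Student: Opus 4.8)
The plan is to unfold the definition of maximality into a statement about single-character insertions, and then translate each such insertion into the existence of an intermediate match, using the characterization of shareability through $\mathit{pref}$ and $\mathit{suff}$ stated just above the lemma. I fix a string $Z$ shared by $X$ and $Y$, and for each index $k$ with $0 \le k \le |Z|$ I abbreviate $P = Z\langle k]$ and $Q = Z[k+1\rangle$, so that inserting a character $c$ at position $k$ produces $P \circ c \circ Q$. By definition, $Z$ fails to be an MCS exactly when for some $k$ and some $c$ the string $P \circ c \circ Q$ is again shared. Since $P \circ Q = Z$ is shared, the quoted equivalence already gives $\mathit{pref}(P) < \mathit{suff}(Q)$ for every $k$; hence the condition $\mathit{pref}(P) \prec \mathit{suff}(Q)$ reduces to the nonexistence of a match $w$ with $\mathit{pref}(P) < w < \mathit{suff}(Q)$. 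The entire lemma therefore collapses to a single equivalence, proved for each fixed $k$: some insertion at position $k$ keeps $Z$ shared if and only if there is a match strictly between $\mathit{pref}(P)$ and $\mathit{suff}(Q)$.

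I would establish this equivalence in two directions. For the ``only if'' direction, suppose $P \circ c \circ Q$ is shared; then $P \circ c$ and $Q$ are both shared (being subsequences of it), so the quoted equivalence yields $\mathit{pref}(P \circ c) < \mathit{suff}(Q)$. Setting $w = \mathit{pref}(P \circ c) = \mathit{next}(c, \mathit{pref}(P))$, the definition of $\mathit{next}$ forces both coordinates of $w$ to strictly exceed those of $\mathit{pref}(P)$, so $\mathit{pref}(P) < w < \mathit{suff}(Q)$, exhibiting the desired intermediate match. For the ``if'' direction, suppose such a $w$ exists and put $c = c_w$. Since $X\langle i_{\mathit{pref}(P)}]$ already contains $P$ and $i_w$ strictly exceeds $i_{\mathit{pref}(P)}$ with $X[i_w] = c$, the prefix $X\langle i_w]$ contains $P \circ c$, and symmetrically for $Y$; hence $P \circ c$ is shared and $\mathit{pref}(P \circ c) \le w$. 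Combined with $w < \mathit{suff}(Q)$ this gives $\mathit{pref}(P \circ c) < \mathit{suff}(Q)$, so the quoted equivalence makes $P \circ c \circ Q$ shared, i.e.\ inserting $c$ at position $k$ keeps $Z$ a common subsequence.

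Assembling the two directions, $Z$ is an MCS iff no position $k$ admits such an insertion iff for every $k$ there is no match strictly between $\mathit{pref}(Z\langle k])$ and $\mathit{suff}(Z[k+1\rangle)$, which, given the always-available strict inequality $\mathit{pref}(Z\langle k]) < \mathit{suff}(Z[k+1\rangle)$, is exactly $\mathit{pref}(Z\langle k]) \prec \mathit{suff}(Z[k+1\rangle)$ for all $k$. The step I expect to require the most care is the bookkeeping of strict versus non-strict inequalities when passing between $w$ and $\mathit{pref}(P \circ c)$: in the ``if'' direction one only obtains $\mathit{pref}(P \circ c) \le w$, and must lean on the strictness of $w < \mathit{suff}(Q)$ to recover the strict inequality demanded by the shareability criterion, while in the ``only if'' direction one must confirm that $\mathit{next}$ genuinely lands strictly above $\mathit{pref}(P)$ so that $w$ is strictly, not merely weakly, between the two endpoints.
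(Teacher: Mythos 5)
Your proposal is correct and takes essentially the same route as the paper's proof: both unfold maximality into the existence of a single-character insertion derivative $Z\langle k] \circ c \circ Z[k+1\rangle$ and translate it, via the stated criterion that $X$ and $Y$ share $Z' \circ Z''$ if and only if $\mathit{pref}(Z') < \mathit{suff}(Z'')$, into the existence of a match strictly between $\mathit{pref}(Z\langle k])$ and $\mathit{suff}(Z[k+1\rangle)$. Your careful bookkeeping of $\mathit{pref}(P \circ c) = \mathit{next}(c, \mathit{pref}(P)) \leq w$ versus the strict inequality $w < \mathit{suff}(Q)$ simply makes explicit what the paper's terser argument leaves implicit.
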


\begin{proof}
If there exists an index $k$ with $0 \leq k \leq |Z|$ such that $\mathit{pref}(Z \langle k]) \prec \mathit{suff}(Z[k + 1 \rangle)$ does not hold, then there exists a match $v$ such that $\mathit{pref}(Z \langle k]) < v < \mathit{suff}(Z[k + 1 \rangle)$, implying that $X$ and $Y$ share $Z \langle k] \circ c_v \circ Z[k + 1 \rangle$; otherwise, for any index $k$ with $1 \leq k \leq |Z|$ and any character $c$ with $1 \leq c \leq \sigma$, $X$ and $Y$ do not share $Z \langle k] \circ c \circ Z[k + 1 \rangle$, because there exists no match $v$ such that $\mathit{pref}(Z \langle k]) < v < \mathit{suff}(Z[k + 1 \rangle)$.
\end{proof}

The proposed algorithm uses a directed acyclic graph (DAG) such that each of certain paths represents a distinct MCS of $X$ and $Y$ and vice versa.
We call this DAG the \emph{all-MCS graph} and define it as follows (see also Figure~\ref{fig DAG}).

\begin{figure}[t]
\centering
\includegraphics[width=9.5cm]{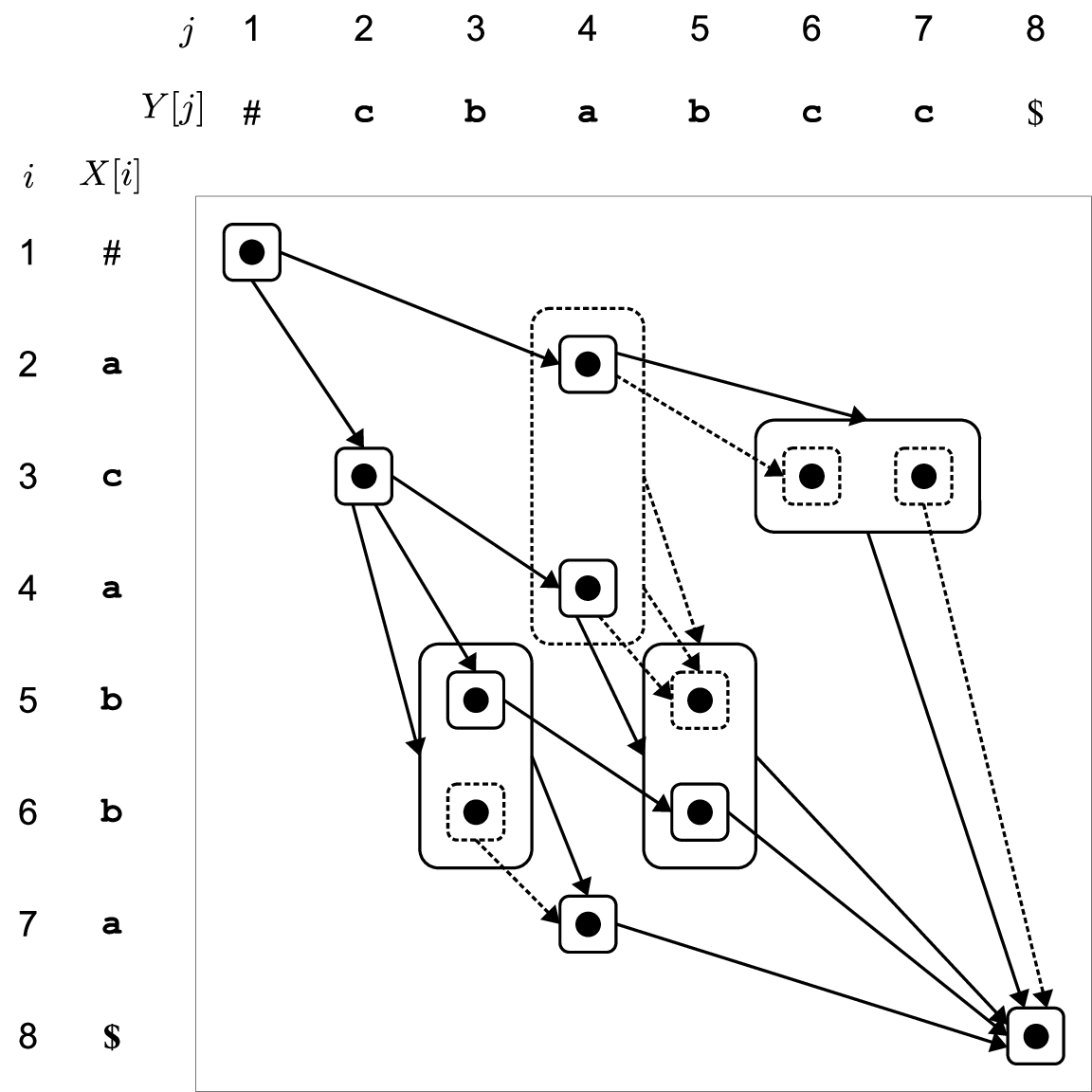}
\caption{
DAG $G$ for $X = \mathtt{\#acabba\$}$ and $Y = \mathtt{\#cbabcc\$}$ (with characters $1,2,3,4,5 \ (= \sigma)$ respectively represented by $\mathtt{\#,a,b,c,\$}$ and the concatenation operators $\circ$ omitted), where each match $w$ is indicated by a bullet at position $(i_w, j_w)$, each vertex $(w, w')$ is indicated by a rounded rectangle or square surrounding matches $w$ and $w'$, each edge is indicated by an arrow, and deleting all dotted vertices and edges yields the all-MCS graph
}
\label{fig DAG}
\end{figure}

\begin{definition}\label{def DAG}
Let $G$ be the DAG consisting of edges from vertex $(\mathit{prev}(c_u, v), u)$ to vertex $(v, \mathit{next}(c_v, u))$ for all pairs of matches $u$ and $v$ such that $u \prec v$.
The all-MCS graph is defined as the DAG that consists of all edges in $G$ through which a path from $((1, 1), (1, 1))$ to $((|X|, |Y|), (|X|, |Y|))$ in $G$ passes.
\end{definition}

\begin{lemma}\label{lem all-MCS}
The all-MCS graph has a path consisting of $\ell - 1$ edges from $(v_{k - 1}, u_k)$ to $(v_k, u_{k + 1})$ for all indices $k$ with $1 \leq k \leq \ell - 1$ such that $v_0 = u_1 = (1, 1)$ and $v_{\ell - 1} = u_\ell = (|X|, |Y|)$ if and only if $c_{u_1} \circ c_{u_2} \circ \cdots \circ c_{u_\ell}$ is an MCS of $X$ and $Y$.
\end{lemma}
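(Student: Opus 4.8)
The plan is to set up a single dictionary that drives both implications: along any source-to-sink path I claim $u_k = \mathit{pref}(Z\langle k])$ and $v_k = \mathit{suff}(Z[k + 1\rangle)$ for all $k$, where $Z = c_{u_1} \circ c_{u_2} \circ \cdots \circ c_{u_\ell}$. Once this is in place, the edge-wise relation $u_k \prec v_k$ is exactly the condition $\mathit{pref}(Z\langle k]) \prec \mathit{suff}(Z[k + 1\rangle)$ appearing in Lemma~\ref{lem MCS}, and the statement reduces to that lemma. To read off the dictionary I match the edge from $(v_{k-1}, u_k)$ to $(v_k, u_{k+1})$ against the template $(\mathit{prev}(c_u, v), u) \to (v, \mathit{next}(c_v, u))$ of Definition~\ref{def DAG} with $u = u_k$ and $v = v_k$; this yields $v_{k-1} = \mathit{prev}(c_{u_k}, v_k)$, $u_{k+1} = \mathit{next}(c_{v_k}, u_k)$, and $u_k \prec v_k$, together with the character identities $c_{u_{k+1}} = c_{v_k}$ and $c_{u_k} = c_{v_{k-1}}$, since $\mathit{next}(c, w)$ and $\mathit{prev}(c, w)$ are matches carrying the character $c$.

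For the implication that a path yields an MCS, I would verify the dictionary by induction. Starting from $u_1 = (1,1)$ with $c_{u_1} = 1$ and $\mathit{pref}(1) = (1,1)$, I push $\mathit{pref}(Z\langle k] \circ c) = \mathit{next}(c, \mathit{pref}(Z\langle k]))$ upward; each step is legitimate because $u_k \prec v_k$ forces an occurrence of $c_{u_{k+1}} = c_{v_k}$ strictly to the right of $u_k$ in both $X$ and $Y$, so $Z\langle k + 1]$ is again a common subsequence and $\mathit{pref}(Z\langle k + 1]) = u_{k+1}$. A mirror-image downward induction from $v_{\ell-1} = (|X|,|Y|)$ using $\mathit{suff}(c \circ Z) = \mathit{prev}(c, \mathit{suff}(Z))$ gives $v_k = \mathit{suff}(Z[k + 1\rangle)$. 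The $\ell - 1$ edges then deliver $\mathit{pref}(Z\langle k]) \prec \mathit{suff}(Z[k + 1\rangle)$ for $1 \leq k \leq \ell - 1$, and Lemma~\ref{lem MCS} concludes that $Z$ is an MCS once the two endpoints are handled.

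For the converse, given an MCS $Z$ I define $u_k = \mathit{pref}(Z\langle k])$ and $v_k = \mathit{suff}(Z[k + 1\rangle)$, whence $c_{u_k} = Z[k]$ and $c_{v_k} = Z[k + 1]$. Lemma~\ref{lem MCS} furnishes $u_k \prec v_k$ for $1 \leq k \leq \ell - 1$, so each pair $(u_k, v_k)$ spawns an edge of $G$; the recursions $\mathit{prev}(c_{u_k}, v_k) = \mathit{suff}(Z[k\rangle) = v_{k-1}$ and $\mathit{next}(c_{v_k}, u_k) = \mathit{pref}(Z\langle k + 1]) = u_{k+1}$ show that this edge runs from $(v_{k-1}, u_k)$ to $(v_k, u_{k+1})$, as required. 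Because the assembled walk goes from $((1,1),(1,1))$ to $((|X|,|Y|),(|X|,|Y|))$, every edge on it lies on a source-to-sink path of $G$ and hence, by Definition~\ref{def DAG}, belongs to the all-MCS graph.

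I expect the only real friction to be the endpoint bookkeeping. A path supplies the relation $u_k \prec v_k$ only for $1 \leq k \leq \ell - 1$, whereas Lemma~\ref{lem MCS} also requires it at $k = 0$ and $k = \ell$, where the virtual matches $\mathit{pref}(Z\langle 0]) = (0,0)$ and $\mathit{suff}(Z[\ell + 1\rangle) = (|X| + 1, |Y| + 1)$ intervene. Here I would lean on the standing normalization $X[1] = Y[1] = 1$ and $X[|X|] = Y[|Y|] = \sigma$: since $1$ occurs in $X$ and $Y$ only at the first position and $\sigma$ only at the last, $Z[1] = 1$ and $Z[\ell] = \sigma$ force $\mathit{suff}(Z) = (1,1)$ and $\mathit{pref}(Z) = (|X|,|Y|)$, after which $(0,0) \prec (1,1)$ and $(|X|,|Y|) \prec (|X| + 1, |Y| + 1)$ hold trivially because no match can sit strictly between adjacent grid corners. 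The same facts also confirm the boundary values $v_0 = u_1 = (1,1)$ and $v_{\ell-1} = u_\ell = (|X|,|Y|)$ that tie the path to the source and sink.
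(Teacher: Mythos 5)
Your proof is correct and follows essentially the same route as the paper's: both directions rest on the dictionary $u_k = \mathit{pref}(Z\langle k])$, $v_k = \mathit{suff}(Z[k+1\rangle)$, verified inductively via the edge identities $v_{k-1} = \mathit{prev}(c_{u_k}, v_k)$ and $u_{k+1} = \mathit{next}(c_{v_k}, u_k)$, and then reduce maximality to Lemma~\ref{lem MCS}. If anything, you are slightly more careful than the paper at the boundary indices $k = 0$ and $k = \ell$, where you explicitly invoke the normalization $X[1] = Y[1] = 1$, $X[|X|] = Y[|Y|] = \sigma$ to dispose of the conditions involving the virtual matches $(0,0)$ and $(|X|+1, |Y|+1)$.
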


\begin{proof}
To show the ``if'' part, let $Z$ be an arbitrary MCS of $X$ and $Y$.
For any index $k$ with $0 \leq k \leq |Z|$, let $u_k = \mathit{pref}(Z \langle k])$ and $v_k = \mathit{suff}(Z[k + 1 \rangle)$, so that $c_{u_1} \circ c_{u_2} \circ \cdots \circ c_{u_{|Z|}} = c_{v_0} \circ c_{v_1} \circ \cdots \circ c_{v_{|Z| - 1}} = Z$.
Since $v_0 = u_1 = (1, 1)$ and $v_{|Z| - 1} = u_{|Z|} = (|X|, |Y|)$ due to $Z[1] = 1$ and $Z[|Z|] = \sigma$, it suffices to show that for any index $k$ with $1 \leq k \leq |Z| - 1$, $G$ has an edge from $(v_{k - 1}, u_k)$ to $(v_k, u_{k + 1})$.
Since $Z[k \rangle = c_{u_k} \circ Z[k + 1 \rangle$ (resp. $Z \langle k + 1] = Z \langle k] \circ c_{v_k}$), $v_{k - 1} = \mathit{prev}(c_{u_k}, v_k)$ (resp. $u_{k + 1} = \mathit{next}(c_{v_k}, u_k)$).
Furthermore, $u_k \prec v_k$ due to Lemma~\ref{lem MCS}.
Thus, $G$ has an edge from $(v_{k - 1}, u_k)$ to $(v_k, u_{k + 1})$.

To show the ``only if'' part, consider an arbitrary path in $G$ that consists of $\ell - 1$ edges from $(v_{k - 1}, u_k)$ to $(v_k, u_{k + 1})$ for all indices $k$ with $1 \leq k \leq \ell - 1$ such that $v_0 = u_1 = (1, 1)$ and $v_{\ell - 1} = u_\ell = (|X|, |Y|)$ and let $Z = c_{u_1} \circ c_{u_2} \circ \cdots \circ c_{u_\ell} \ (= c_{v_0} \circ c_{v_1} \circ \cdots \circ c_{v_{\ell - 1}})$.
It can be verified by induction that for any index $k$ with $0 \leq k \leq |Z|$, $u_k = \mathit{pref}(Z \langle k])$ and $v_k = \mathit{suff}(Z[k + 1 \rangle)$.
Since $u_k \prec v_k$ for any index $k$ with $1 \leq k \leq |Z|$ due to definition of the all-MCS graph, it follows from Lemma~\ref{lem MCS} that $Z$ is an MCS of $X$ and $Y$.
\end{proof}

\begin{figure}
\centering
\begin{algorithm}
\aitem{1}{$(i, j) \leftarrow (|X|, j_u + 1)$;}
\aitem{1}{while $j \leq |Y|$,}
\aitem{2}{if $X[i] = Y[j]$, then}
\aitem{3}{output $(i, j)$;}
\aitem{2}{$(i, j) \leftarrow \left\{
\begin{array}{ll}
(\mathit{prev}_X(Y[j], i), j) & \mbox{if $\mathit{prev}_X(Y[j], i) > i_u$;} \\
(i, j + 1) & \mbox{otherwise.}
\end{array}
\right.$}
\end{algorithm}
\caption{
Procedure $\mathsf{FindEdges}(u)$}
\label{algo findEdges}
\end{figure}

The proposed algorithm constructs the all-MCS graph based on the following lemma.

\begin{lemma}\label{lem DAG}
The all-MCS graph can be constructed in $O(n^3)$ time and $O(n^3)$ space.
\end{lemma}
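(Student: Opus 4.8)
The plan is to first construct the entire DAG $G$ of Definition~\ref{def DAG} and then carve out the all-MCS graph from it by discarding every edge that lies on no path from $((1,1),(1,1))$ to $((|X|,|Y|),(|X|,|Y|))$. Since $G$ consists of exactly one edge $(\mathit{prev}(c_u, v), u) \to (v, \mathit{next}(c_v, u))$ for each ordered pair of matches with $u \prec v$, building $G$ reduces to enumerating all such pairs. To this end I would first build $\mathit{Table}_{next/prev}$ in $O(n^2)$ time and space so that every next/prev-query, and hence every $\mathit{prev}(c_u, v)$ and $\mathit{next}(c_v, u)$, is answered in $O(1)$ time. Then I would enumerate the matches $u$---of which there are at most $|X|\cdot|Y| = O(n^2)$---and, for each, invoke $\mathsf{FindEdges}(u)$ (Figure~\ref{algo findEdges}) to list its $\prec$-successors $v$, emitting the corresponding edge for each one.

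The cost analysis rests on two claims about $\mathsf{FindEdges}(u)$: that it runs in $O(n)$ time and that it outputs precisely the matches $v$ with $u \prec v$. The time bound is the easy half: the pair $(i,j)$ it maintains never increases in $i$ and never decreases in $j$, starting from $(|X|, j_u + 1)$, so the while loop executes $O(|X| + |Y|) = O(n)$ iterations, each $O(1)$ by the lookup table. Consequently each match contributes $O(n)$ edges and $G$ has $O(n^3)$ edges in total, constructible in $O(n^3)$ time; storing its edges, together with the $O(n^3)$ vertices that occur as their endpoints (indexed by the four-tuple of coordinates, e.g.\ via a dictionary) and the $O(n^2)$ table, uses $O(n^3)$ space. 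I expect the harder half---verifying that the staircase traced by $\mathsf{FindEdges}(u)$ visits exactly the $\prec$-minimal matches dominating $u$---to be the main obstacle: one must argue that moving up within a column while $\mathit{prev}_X(Y[j], i) > i_u$ and stepping right otherwise sweeps out precisely the matches $v > u$ whose open enclosing rectangle $(i_u, i_v) \times (j_u, j_v)$ contains no match, and in particular that this set has size $O(n)$ (which the time bound already forces, since the number of outputs is capped by the number of iterations).

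Finally, pruning $G$ down to the all-MCS graph is routine. An edge $a \to b$ survives if and only if $a$ is reachable from the source $((1,1),(1,1))$ and $b$ can reach the sink $((|X|,|Y|),(|X|,|Y|))$. I would compute the forward-reachable set by a traversal of $G$ from the source and the sink-reaching set by a traversal of the reverse of $G$ from the sink, each costing $O(|V| + |E|) = O(n^3)$, and then retain only the edges whose tail is forward-reachable and whose head is sink-reaching. By Definition~\ref{def DAG} the retained edges constitute exactly the all-MCS graph, and the whole procedure runs in $O(n^3)$ time within $O(n^3)$ space.
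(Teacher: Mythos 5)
Your proposal is correct and matches the paper's proof essentially step for step: construct $\mathit{Table}_{\mathit{next/prev}}$, enumerate the $O(n^2)$ matches $u$ and run $\mathsf{FindEdges}(u)$ to emit the edges of $G$ (the staircase correctness you flag as the main obstacle is exactly what the paper dispatches with a one-sentence inductive loop invariant holding just before each match test), then prune to the edges lying on source-to-sink paths via breadth-first traversals in $O(n^3)$ time, the paper doing this in two sequential stages ($G \to G' \to G''$) where you intersect a forward-reachable set and a sink-coreaching set, which is equivalent. The only other difference is cosmetic: where you store vertices in a dictionary keyed by four-tuples, the paper observes that in any vertex $(v', u)$ the match $v'$ shares a row or column with $u$, so vertices can be addressed directly in a three-dimensional array $G[i, j, d]$ indexed by $u = (i, j)$ and the diagonal coordinate $d = d_{v'}$, giving deterministic $O(1)$ access rather than hashed lookups (a detail that also underpins the constant-time adjacency access used later in Lemma~\ref{lem cubic}).
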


\begin{proof}
The all-MCS graph is obtained by constructing $G$ and modifying it as follows.

First of all, we construct $\mathit{Table}_{\mathit{next/prev}}$ of size $O(n^2)$ in $O(n^2)$ time so as to support $O(1)$-time next/prev-queries.
For any vertex $(v', u)$ in $G$, let $G(v', u)$ denote the sequence of all vertices $(v, u')$ such that $G$ has an edge from $(v', u)$ to $(v, u')$ in an arbitrary order, which represents the set of all outgoing edges from $(v', u)$.
Since $u \prec v$ and $v' = \mathit{prev}(c_u, v)$ for any such edge, $u \leq v'$ and at least $i_{v'} = i_u$ or $j_{v'} = j_u$.
Hence, if $d_{v'} \leq d_u$, then $v' = (i_u + (d_u - d_{v'}), j_u)$; otherwise, $v' = (i_u, j_u + (d_{v'} - d_u))$.
Based on this observation, we implement $G$ as a three-dimensional array of $O(n^3)$ elements $G[i, j, d]$ with $1 \leq i \leq |X|$, $1 \leq j \leq |Y|$, and $1 - |X| \leq d \leq |Y| - 1$, where $G[i, j, d]$ is $G(v', u)$, if $u = (i, j)$ and $d_{v'} = d$ for some vertex $(v', u)$ in $G$, or the empty sequence, otherwise.
This array can be constructed by initializing each element $G[i, j, d]$ to the empty sequence and appending $(v, \mathit{next}(c_v, u))$ to $G[i_u, j_u, d_{\mathit{prev}(c_u, v)}]$ for each pair of matches $u$ and $v$ such that $u \prec v$.
For any match $u$, Procedure $\mathsf{findEdges}(u)$ in Figure~\ref{algo findEdges} outputs all matches $v$ such that $u \prec v$ in $O(n)$ time because, by induction, just before any execution of line 3 of the procedure, for any index $i'$ with $1 \leq i' \leq |X|$, $u \prec (i', j)$ if and only if $i_u < i'$ and $X[i'] = Y[j]$. 
Since the number of matches $u$ is $O(n^2)$, $G$ can be constructed in $O(n^3)$ time and $O(n^3)$ space.

Next, we construct the graph, denoted by $G'$, that consists of all edges from $(v', u)$ to $(v, u')$ in $G$ such that $G$ has a path from $(1, 1)$ to $(v', u)$.
Using the breadth-first search algorithm, all such edges in $G$ can be determined in $O(n^3)$ time.
For any vertex $(v, u')$, let $G'(v, u')$ denote the sequence of all vertices $(v', u)$ such that $G'$ has an edge from $(v', u)$ to $(v, u')$ in an arbitrary order, which represents the set of all incoming edges to $(v, u')$.
We implement $G'$ as the array of elements $G'[i, j, d]$ in almost the same way as $G$.
The only difference is that $G'[i, j, d]$ is $G'(v, u')$, if $u' = (i, j)$ and $d_v = d$ for some vertex $(v, u')$ in $G'$.
The reason for this is to allow the breadth-first search algorithm to find all edges from $(v', u)$ to $(v, u')$ in $G'$ such that $G'$ has a path from $(v, u')$ to $((|X|, |Y|), (|X|, |Y|))$ in $O(n^3)$ time.
Obviously $G'$ can be constructed from $G$ in $O(n^3)$ time.

Finally, we construct the graph $G''$ that consists of all edges from $(v' u)$ to $(u, v')$ in $G'$ such that $G'$ has a path from $(v, u')$ to $((|X|, |Y|), (|X|, |Y|))$, which is hence the all-MCS graph.
For any vertex $(v', u)$ in $G''$, let $G''(v', u)$ denote the sequence of all vertices $(v, u')$ such that $G''$ has an edge from $(v', u)$ to $(v, u')$ in an arbitrary order.
We implement $G''$ as the array of elements $G''[i, j, d]$ in the same way as $G$.
Since the breadth-first search algorithm determines all edges from $(v', u)$ to $(v, u')$ in $G'$ such that $G'$ has a path from $(v, u')$ to $((|X|, |Y|), (|X|, |Y|))$ in $O(n^3)$ time, $G''$ can be constructed in $O(n^3)$ time.
\end{proof}

Let $G''$ be the array in the proof of Lemma~\ref{lem DAG}, which is our implementation of the all-MCS graph.
For any path $P$ in the all-MCS graph from $((1, 1), (1, 1))$ to $((|X|, |Y|), (|X|, |Y|))$, let $\mathit{id}(P)$ denote the sequence $r_1 \circ r_2 \circ \cdots \circ r_{\ell - 1}$ such that for any index $k$ with $1 \leq k \leq \ell - 1$, $(v_k, u_{k + 1})$ is the $r_k$th element of $G''(v_{k - 1}, u_k)$, where $P$ consists of $\ell - 1$ edges from $(v_{k - 1}, u_k)$ to $(v_k, u_{k + 1})$ for all indices $k$ with $1 \leq k \leq \ell - 1$ such that $v_0 = u_1 = (1, 1)$ and $v_{\ell - 1} = u_\ell = (|X|, |Y|)$.
Furthermore, let $\mathit{mcs}(P)$ denote the string $c_{u_1} \circ c_{u_2} \circ \cdots \circ c_{u_\ell}$.
The proposed algorithm outputs all distinct MCSs of $X$ and $Y$ based on the following lemma using $G''$.

\begin{lemma}\label{lem cubic}
If $G''$ in the proof of Lemma~\ref{lem DAG} is available, then all distinct MCSs of $X$ and $Y$ can be obtained one by one each in $O(n)$ time.
\end{lemma}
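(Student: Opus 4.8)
The plan is to enumerate all source-to-sink paths of the all-MCS graph by a standard backtracking traversal, relying on Lemma~\ref{lem all-MCS} to guarantee that this lists every MCS exactly once. Recall from the ``only if'' part of the proof of Lemma~\ref{lem all-MCS} that along any path from $((1, 1), (1, 1))$ to $((|X|, |Y|), (|X|, |Y|))$ the vertices satisfy $u_k = \mathit{pref}(Z \langle k])$ and $v_k = \mathit{suff}(Z[k + 1 \rangle)$ for $Z = \mathit{mcs}(P)$; hence the map $P \mapsto \mathit{mcs}(P)$ is a bijection between such paths and MCSs of $X$ and $Y$. Enumerating the paths therefore produces each distinct MCS exactly once, with no need for a separate duplicate check.

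First I would produce the lexicographically smallest path with respect to $\mathit{id}$: starting from the source $((1, 1), (1, 1))$, repeatedly follow the first element of $G''(v_{k - 1}, u_k)$ until the sink is reached, maintaining the current path as a stack together with the chosen edge index $r_k$ at each vertex. Outputting $\mathit{mcs}(P) = c_{u_1} \circ c_{u_2} \circ \cdots \circ c_{u_\ell}$ is done by reading the characters $c_{u_k}$ off the stack. To advance to the next path, I would scan the stack from the sink end toward the source to locate the largest index $k$ with $r_k < |G''(v_{k - 1}, u_k)|$, discard the vertices above position $k$, increment $r_k$ to select the next outgoing edge, and then greedily re-extend to the sink by always taking first edges. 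When no such $k$ exists, the enumeration is complete.

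The delay bound hinges on two facts. Because every MCS of $X$ and $Y$ is a common subsequence, its length is at most $\min(|X|, |Y|) = O(n)$, so every path has $O(n)$ edges; the backward scan and the subsequent character output are thus $O(n)$. The forward re-extension is $O(n)$ provided it never gets stuck, and this is exactly guaranteed by the construction of $G''$ in the proof of Lemma~\ref{lem DAG}: $G''$ retains only edges lying on some source-to-sink path, so every vertex reachable from the source has an outgoing edge whose greedy continuation reaches the sink within $O(n)$ steps. Consequently each advance traverses $O(n)$ edges in total, giving $O(n)$ delay per output.

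The main obstacle I anticipate is arguing that the delay is $O(n)$ in the worst case rather than merely amortized. The point to verify carefully is that a single advance never does more than $O(n)$ work: the backward scan touches at most the current path length, and the forward re-extension---by the no-dead-end property of $G''$---adds at most $O(n)$ vertices before hitting the sink, so backtracking can never cascade beyond one path length. I would make this explicit by charging the cost of each advance to the $O(n)$ edges of the two paths it connects.
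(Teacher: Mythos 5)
Your proposal is correct and matches the paper's proof, which simply asserts that $G''$ yields all source-to-sink paths in lexicographic order of $\mathit{id}(P)$ each in $O(n)$ time ``in a straightforward way'' and invokes Lemma~\ref{lem all-MCS}; your backtracking traversal with greedy re-extension is exactly that straightforward way, spelled out. Your added details---injectivity of $P \mapsto \mathit{mcs}(P)$ via the vertex-sequence characterization, the no-dead-end property of $G''$, and the worst-case (not amortized) $O(n)$ accounting per advance---are all sound elaborations of what the paper leaves implicit.
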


\begin{proof}
Array $G''$ allows us to obtain all distinct paths $P$ in the all-MCS graph from $((1, 1), (1, 1))$ to $((|X|, |Y|), (|X|. |Y|))$ in lexicographical order of $\mathit{id}(P)$, each in $O(n)$ time in a straightforward way.
String $\mathit{mcs}(P)$ for any such $P$ can be determined in $O(n)$ time.
Thus, the lemma follows from Lemma~\ref{lem all-MCS}.
\end{proof}

Let Algorithm $\mathsf{Enum331}$ be the algorithm that constructs our implementation $G''$ of the all-MCS graph based on Lemma~\ref{lem DAG} and outputs all distinct MCSs of $X$ and $Y$ according to Lemma~\ref{lem cubic}.
We immediately obtain the following theorem.

\begin{theorem}\label{theo Enum331} 
Algorithm $\mathsf{Enum331}$ is an $(n^3, n^3, n)$-algorithm that solves the MCS enumeration problem.
\end{theorem}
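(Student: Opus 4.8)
The plan is to read all three complexity bounds directly off the lemmas already proved, since by its very definition $\mathsf{Enum331}$ is the concatenation of two phases: a preprocessing phase that builds our implementation $G''$ of the all-MCS graph, and an output phase that traverses $G''$. First I would identify the preprocessing phase with the construction in Lemma~\ref{lem DAG}, which runs in $O(n^3)$ time and yields a structure of size $O(n^3)$; this fixes the preprocessing-time complexity $f_{\mathrm{p}}(n,\sigma)=n^3$ and the space complexity $f_{\mathrm{s}}(n,\sigma)=n^3$. Then I would identify the output phase with the enumeration of Lemma~\ref{lem cubic}, which, given $G''$, emits each MCS in $O(n)$ time, fixing the delay-time complexity $f_{\mathrm{d}}(n,\sigma)=n$. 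Each bound is manifestly independent of $\sigma$, as the claimed triple requires.

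The one substantive requirement of the MCS enumeration problem is that the algorithm output every MCS of $X$ and $Y$ \emph{exactly once}, and this is precisely what Lemma~\ref{lem cubic} already asserts when it states that \emph{all distinct} MCSs are obtained. To see why this holds I would trace the distinctness back to Lemma~\ref{lem all-MCS}: the map $P \mapsto \mathit{mcs}(P)$ is a bijection between the source-to-sink paths of the all-MCS graph and the MCSs of $X$ and $Y$. The ``if'' direction of Lemma~\ref{lem all-MCS} gives surjectivity, while injectivity follows from its proof, where along any such path the vertices are forced to satisfy $u_k = \mathit{pref}(\mathit{mcs}(P)\langle k])$ and $v_k = \mathit{suff}(\mathit{mcs}(P)[k+1\rangle)$; since these are uniquely determined by the string $\mathit{mcs}(P)$, two distinct paths can never share the same MCS. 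Because Lemma~\ref{lem cubic} enumerates all distinct paths once each, ordered lexicographically by $\mathit{id}(P)$, every MCS is produced exactly once.

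Combining the three bounds then yields that $\mathsf{Enum331}$ is an $(n^3, n^3, n)$-algorithm. I expect no genuine obstacle here: the heavy lifting has already been carried out in Lemmas~\ref{lem DAG}, \ref{lem all-MCS}, and \ref{lem cubic}, so the theorem is essentially their immediate corollary. The only point meriting a moment's care is the exactly-once guarantee, that is, the injectivity of $P \mapsto \mathit{mcs}(P)$; but as noted this is already settled by the vertex-determinacy established in the proof of Lemma~\ref{lem all-MCS}, so the argument closes without further work.
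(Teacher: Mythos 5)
Your proposal is correct and follows essentially the same route as the paper, which states the theorem without a separate proof as an immediate consequence of Lemma~\ref{lem DAG} (the $O(n^3)$-time, $O(n^3)$-space construction of $G''$) and Lemma~\ref{lem cubic} (the $O(n)$-delay enumeration), with correctness resting on Lemma~\ref{lem all-MCS}. Your explicit justification of the exactly-once guarantee via the vertex-determinacy $u_k = \mathit{pref}(Z \langle k])$, $v_k = \mathit{suff}(Z[k + 1 \rangle)$ is precisely the induction already contained in the paper's proof of the ``only if'' part of Lemma~\ref{lem all-MCS}, so you are simply making explicit what the paper leaves implicit.
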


\begin{remark}
The delay time achieved by Algorithm $\mathsf{Enum331}$ to output each MCS $Z$ is $O(|Z|)$ rather than $O(n)$.
In the next section, we propose an $(n^2, n^2, n)$-algorithm, which performs more efficiently with respect to preprocessing time and required space than Algorithm $\mathsf{Enum331}$ but does not necessarily output each MCS $Z$ in $O(|Z|)$ time.
\end{remark}

\begin{remark}
The all-MCS graph constructed by Algorithm $\mathsf{Enum331}$ can be used not only for enumerating (non-conditional) MCSs but also for finding one of certain particular MCSs or enumerating them.
Such particular MCSs include, for example, \emph{quasi-LCSs} and \emph{most stable MCSs}.
A quasi-LCS is defined as one of the longest MCSs that is not an LCS.
Since no quasi-LCS can be obtained only by deleting characters from any LCS, we can think of quasi-LCSs as alternatives to LCSs in searching for meaningful common subsequences.
A most stable MCS is defined as an MCS $Z$ that has the greatest number of indices $k$ with $1 \leq k \leq |Z|$ such that $\mathit{pref}(Z \langle k]) = \mathit{suff}(Z[k \rangle)$.
LCSs may have only few characters whose positions in the strings are uniquely determined while most stable MCSs have a maximum number of such characters.
Therefore, most stable MCSs could be used to find positional correspondences between the strings.
For example, if we consider the same $X$ and $Y$ as in Figure~\ref{fig DAG}, then $\mathtt{\#ac\$}$ (resp. $\mathtt{\#cbb\$}$) is the only quasi-LCS (resp. most stable MCS) that $X$ and $Y$ have.
It is easy to design a DP algorithm that finds a quasi-LCS (resp. most stable MCS) in $O(n^3)$ time.
The DP table constructed by the algorithm consists of the maximum and second maximum numbers of vertices (resp. the maximum number of vertices $(w, w')$ with $w = w'$) through which a path from $((1, 1), (1, 1))$ to $(v', u)$ passes for all vertices $(v', u)$ in the all-MCS graph.
Since each traceback path corresponds to a distinct quasi-LCS (resp. most stable MCS) and vice versa, this DP table can also be used for enumeration.
\end{remark}

\section{Prefix extension approach algorithms}\label{sec ext}

This section modifies the $(\sigma n^2 \log n, n^2, \sigma n \log n)$-algorithm of Conte~et~al.~\cite{CGP+} for the MCS enumeration problem to obtain $(n^2, n^2, n)$- and $(n^2, n, n \log n)$-algorithms.

\subsection{Conte~et~al.~\cite{CGP+}'s prefix-extensible character test}\label{sec Conte}

We first introduce the approach adopted by Conte~et~al.~\cite{CGP+} to solve the MCS enumeration problem in our terminology.

Let any string $Z'$ that is a prefix of some MCS of $X$ and $Y$ be called an \emph{MCS-prefix}.
For any MCS-prefix $Z'$, let a character $c$ be $Z'$-\emph{extensible}, if $Z' \circ c$ is also an MCS-prefix.
If $Z'$-extensible characters can be determined somehow, then we can find all distinct MCSs of $X$ and $Y$ one by one in lexicographical order by executing Algorithm $\mathsf{Basic}$ in Figure~\ref{algo basic} (see also Figure~\ref{fig trie}).
To adopt this straightforward algorithmic approach, Conte~et~al.~\cite{CGP+} developed a clever way to test which characters are $Z'$-extensible.

\begin{figure}
\centering
\begin{algorithm}
\aitem{1}{Let $Z'$ be the string consisting only of a single character $1$;}
\aitem{1}{while $Z'$ is nonempty,}
\aitem{2}{if $Z'[|Z'|] \neq \sigma$, then}
\aitem{3}{append the least $Z'$-extensible character to $Z'$;}
\aitem{2}{otherwise,}
\aitem{3}{output $Z'$;}
\aitem{3}{while $Z'[|Z'|]$ is the greatest $Z' \langle |Z'| - 1]$-extensible character,}
\aitem{4}{delete the last element from $Z'$;}
\aitem{3}{if $Z'$ is nonempty, then}
\aitem{4}{replace the last element of $Z'$ with the least $Z \langle |Z'| - 1]$-extensible character that is greater than $Z'[|Z'|]$.}
\end{algorithm}
\caption{
Algorithm $\mathsf{Basic}$}
\label{algo basic}
\end{figure}

\begin{figure}
\centering
\includegraphics[width=10cm]{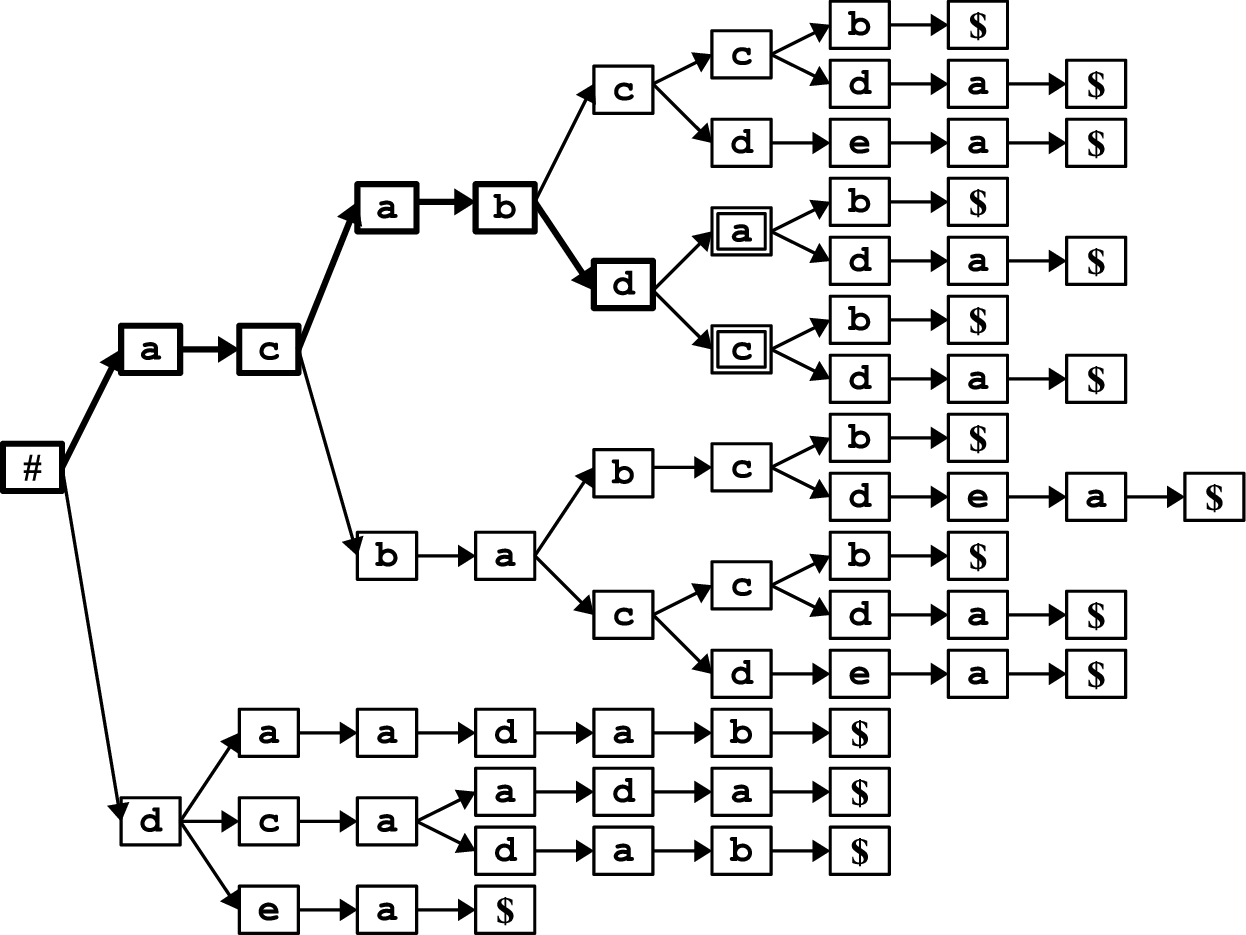}
\caption{
Trie representing all MCSs of $X = \mathtt{\#dacabdacbcbdea\$}$ and $Y = \mathtt{\#acbabcdecaadab\$}$ (with characters $1,2,3,4,5,6,7 \ (= \sigma)$ respectively represented by $\mathtt{\#,a,b,c,d,e,\$}$ and the concatenation operators $\circ$ omitted) in lexicographical order, on which Algorithm $\mathsf{Basic}$ in Figure~\ref{algo basic} performs a preorder tree walk to enumerate them, where, for example, the path indicated by thick nodes and edges represents an MCS-prefix $Z' = \mathtt{\#acabd}$, which is obtained by line 10 of the algorithm after $\mathtt{\#acabcdea\$}$ is output as the third MCS in the enumeration, and the $Z'$-extensible characters, which are $\mathtt{a}$ and $\mathtt{c}$, are indicated by double-edged nodes
}
\label{fig trie}
\end{figure}

\begin{figure}[t]
\centering
\includegraphics[width=10cm]{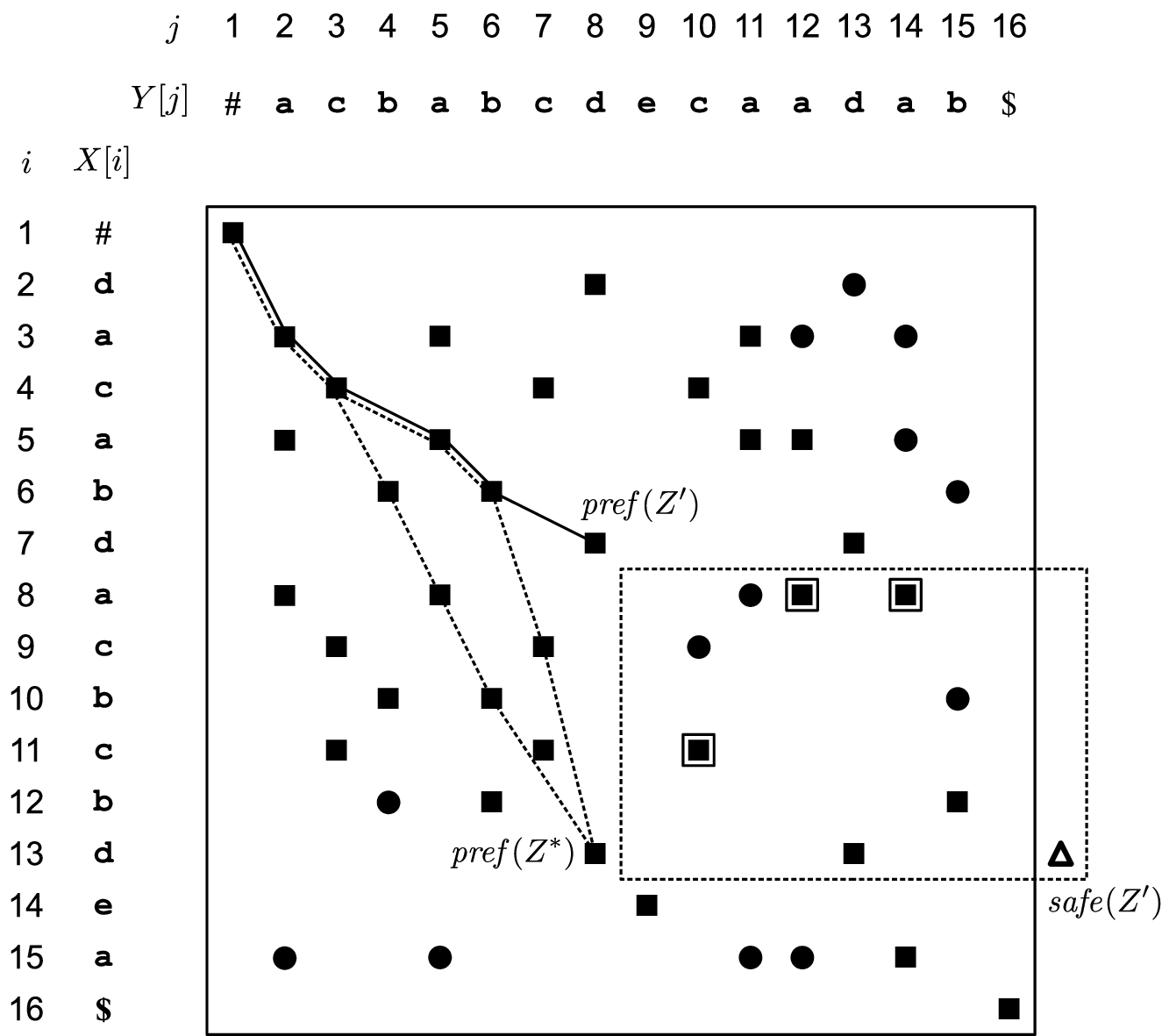}
\caption{
Suff-matches (indicated by square bullets), the other matches (indicated by circular bullets), and the region of matches $v$ such that $\mathit{pref}(Z') < v \leq \mathit{safe}(Z')$ (surrounded by a dotted rectangle) for the same $X$, $Y$, and $Z'$ as Figure~\ref{fig trie}, where $Z^*$ is any of $\mathtt{\#acabcd}$ and $\mathtt{\#acbabd}$, which are the single-character insertion derivatives of $Z'$ such that $i_{\mathit{safe}(Z')} = i_{\mathit{pref}(Z^*)}$, the triangular open bullet indicates $\mathit{safe}(Z')$, which is a virtual match in this concrete example, and the bullets indicating suff-matches $v$ with $\mathit{pref}(Z') \prec v \leq \mathit{safe}(Z')$ are double-edged 
}
\label{fig ext}
\end{figure}

For any string $Z$ that $X$ and $Y$ share, let any string $Z \langle k] \circ c \circ Z[k + 1 \rangle$ with $0 \leq k \leq |Z|$ and $1 \leq c \leq \sigma$ that $X$ and $Y$ share be called a \emph{single-character insertion derivative} of $Z$, so that $Z$ is an MCS of $X$ and $Y$ if and only if there exists no single-character insertion derivative of $Z$.
For any MCS-prefix $Z'$, let $\mathit{safe}(Z')$ denote the (possibly virtual) match $(i, j)$, where $i$ (resp. $j$) is the least possible index such that $\mathit{pref}(Z^*) = (i, j_{\mathit{pref}(Z')})$ (resp. $\mathit{pref}(Z^*) = (i_{\mathit{pref}(Z')}, j)$) for some single-character insertion derivative $Z^*$ of $Z'$, if any, or $|X| + 1$ (resp. $|Y| + 1$), otherwise.
Although $\mathit{safe}(Z')$ is not a match when $i_{\mathit{safe}(Z')} = |X| + 1$ or $j_{\mathit{safe}(Z')} = |Y| + 1$, we treat it as a virtual match to allow us to use notations $w < \mathit{safe}(Z')$ and $w \leq \mathit{safe}(Z')$ for any match $w$ (see Figure~\ref{fig ext}).
Conte et~al.~\cite{CGP+} revealed which characters are $Z'$-extensible by their relationship to $\mathit{safe}(Z')$.

\begin{lemma}[\cite{CGP+}]\label{lem ext}
For any MCS-prefix $Z'$ and any character $c$, $c$ is $Z'$-extensible if and only if there exists a suff-match $v$ such that $\mathit{pref}(Z') \prec v \leq \mathit{safe}(Z')$ and $c_v = c$.
\end{lemma}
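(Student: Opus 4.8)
The plan is to characterize $Z'$-extensibility through the MCS criterion of Lemma~\ref{lem MCS} applied to a hypothetical completion of $Z' \circ c$, and then to show that the three families of conditions this criterion imposes correspond respectively to $\mathit{pref}(Z') \prec v$, to $v \le \mathit{safe}(Z')$, and to $v$ being a suff-match. Throughout I write $u = \mathit{pref}(Z')$. First I would record a structural observation: if $X$ and $Y$ share $Z = Z' \circ c \circ W$ and $v = \mathit{suff}(c \circ W)$, then for every index $k$ with $0 \le k \le |Z'|$ the match $\mathit{suff}(Z[k + 1 \rangle)$ is obtained from $v$ by iterated $\mathit{prev}$ along $Z'[k + 1 : |Z'|]$ (using $\mathit{suff}(c' \circ Z) = \mathit{prev}(c', \mathit{suff}(Z))$), so it depends on $W$ only through $v$, whereas $\mathit{pref}(Z \langle k]) = \mathit{pref}(Z' \langle k])$ depends only on $Z'$. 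Hence the ``internal'' conditions of Lemma~\ref{lem MCS} at junctions $k < |Z'|$ are functions of $v$ alone, the ``boundary'' condition at $k = |Z'|$ is exactly $u \prec v$, and the remaining conditions at $k > |Z'|$ concern only the tail $c \circ W$.

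For the ``only if'' direction I would take a completion $Z = Z' \circ c \circ W$ that is an MCS and set $v = \mathit{suff}(c \circ W)$. Then $v$ is a suff-match with $c_v = c$, and the boundary condition gives $\mathit{pref}(Z') \prec v$ at once. To obtain $v \le \mathit{safe}(Z')$ I would argue contrapositively: if $v > \mathit{safe}(Z')$, then by the definition of $\mathit{safe}(Z')$ some single-character insertion derivative $Z^*$ of $Z'$ realizes a prefix-match lying strictly inside the region cut off by $v$; transporting the inserted character back into $Z$ at the corresponding junction $k < |Z'|$ produces a match $w$ with $\mathit{pref}(Z' \langle k]) < w < \mathit{suff}(Z[k + 1 \rangle)$, where one uses that appending $c \circ W$ only moves suffix-matches leftward so that the share condition for $Z^*$ survives the truncation. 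This $w$ violates the internal condition of Lemma~\ref{lem MCS} for $Z$ and contradicts its maximality.

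For the ``if'' direction I would, given a suff-match $v$ with $c_v = c$, $\mathit{pref}(Z') \prec v$, and $v \le \mathit{safe}(Z')$, build an explicit MCS of the form $Z' \circ c \circ W$. Since $v$ is a suff-match with $c_v = c$, unfolding the recursive characterization of suff-matches ($v = \mathit{prev}(c_v, v')$ for a suff-match $v'$, iterated up to $(|X|, |Y|)$) yields a tail $c \circ W$ with $\mathit{suff}(c \circ W) = v$ for which the conditions of Lemma~\ref{lem MCS} at junctions $k \ge |Z'|$ hold; this is exactly the step where being a suff-match is indispensable, and it is the mirror of the fact that every MCS-prefix extends to an MCS. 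The boundary condition holds because $\mathit{pref}(Z') \prec v$, and the internal conditions at junctions $k < |Z'|$ hold because $v \le \mathit{safe}(Z')$, via the reverse of the correspondence used above. By Lemma~\ref{lem MCS}, $Z' \circ c \circ W$ is then an MCS, so $Z' \circ c$ is an MCS-prefix and $c$ is $Z'$-extensible.

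The main obstacle is the equivalence between the internal conditions at junctions $k < |Z'|$ and the single inequality $v \le \mathit{safe}(Z')$. What makes it delicate is that these internal conditions see the whole extended suffix rather than $Z'$ in isolation, yet $\mathit{safe}(Z')$ is defined purely from single-character insertion derivatives of $Z'$; bridging the two requires the coordinate bookkeeping that matches the prefix-match $(i, j_u)$ or $(i_u, j)$ of a least-reaching derivative against the position of $v$, and that checks the share condition is preserved when a derivative is truncated from $Z$ back to $Z'$. Establishing that this least-reaching derivative determines exactly the corner $\mathit{safe}(Z')$, in both directions, is the technical heart of the argument, while the construction of the maximal tail from a suff-match in the ``if'' direction is comparatively routine.
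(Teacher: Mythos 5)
Your ``only if'' direction is essentially the paper's argument and is sound modulo a small imprecision: the negation of $v \leq \mathit{safe}(Z')$ is not ``$v > \mathit{safe}(Z')$'' (these are grid points under a partial order) but ``$i_v > i_{\mathit{safe}(Z')}$ or $j_v > j_{\mathit{safe}(Z')}$''; in either case the definition of $\mathit{safe}(Z')$ supplies a derivative $Z^*$ whose pref-match sits on the row or column of $\mathit{pref}(Z')$, and one needs $\mathit{pref}(Z') \prec v$ (both coordinates of $v$ strictly beyond $\mathit{pref}(Z')$) to conclude $\mathit{pref}(Z^*) < v$ and hence that $Z^* \circ c \circ W$ is shared. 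Similarly, your internal conditions at junctions $k < |Z'|$ need \emph{both} $v \leq \mathit{safe}(Z')$ and $\mathit{pref}(Z') \prec v$ (a three-case analysis on whether $\mathit{pref}(Z^*)$ shares a coordinate with $\mathit{pref}(Z')$), not $v \leq \mathit{safe}(Z')$ alone. These are repairable.

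The genuine gap is in your ``if'' direction, precisely at the step you dismiss as ``comparatively routine.'' Unfolding the recursive characterization of suff-matches from $v$ up to $(|X|, |Y|)$ yields \emph{some} string $T$ with $\mathit{suff}(T) = v$, but nothing forces the conditions of Lemma~\ref{lem MCS} to hold at junctions inside the tail. Concretely, with $X = Y = \mathtt{\#abc\$}$ the chain $(3,3) = \mathit{prev}(\mathtt{b}, (5,5))$ is a legitimate unfolding and gives the tail $\mathtt{b\$}$ with $\mathit{suff}(\mathtt{b\$}) = (3,3)$, yet the junction between $\mathtt{b}$ and $\mathtt{\$}$ admits the insertion of $\mathtt{c}$, so no string ending in this tail is an MCS. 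Worse, your construction starts from an \emph{arbitrary} witness $v$, and an MCS whose tail's suff-match equals $v$ need not exist when $v$ is not $\lneq$-minimal (in Figure~\ref{fig ext}, $(8,14)$ is a witness dominated by the witness $(8,12)$). The paper resolves exactly this by a canonical double choice that your proposal lacks: it takes $v$ to be a suff-match with $c_v = c$ and $\mathit{pref}(Z') < v$ such that no suff-match $v'$ satisfies $\mathit{pref}(Z') < v' \lneq v$ and $c_{v'} = c$, and takes $Z''$ to be a \emph{longest} string with $\mathit{suff}(Z'') = v$. Then any single-character insertion into the tail either is a front insertion, killed by $\mathit{pref}(Z') \prec v$, or produces a string $Z^{**}$ containing $Z''$, whence $\mathit{suff}(Z^{**}) \leq v$, with equality excluded by the maximality of $|Z''|$, so $\mathit{suff}(Z^{**}) \lneq v$ is a $c$-suff-match contradicting the minimality of $v$. (One must also check that if any witness exists then this minimal $v$ is itself still a witness, which holds but deserves a line.) This minimality/maximality interplay is the technical heart of the ``if'' direction, and your proposal as written has no mechanism to replace it.
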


\begin{proof}
We only consider the case where there exists a match $v$ such that $\mathit{pref}(Z') < v$ and $c_v = c$ because otherwise the lemma holds.
Let $v$ be an arbitrary suff-match with $\mathit{pref}(Z') < v$ and $c_v = c$ such that there exists no suff-match $v'$ with $\mathit{pref}(Z') < v' \lneq v$ and $c_{v'} = c$.
Let $Z''$ be an arbitrary longest string such that $\mathit{suff}(Z'') = v$.

To prove the ``if'' part of the lemma, suppose that $\mathit{pref}(Z') \prec \mathit{suff}(Z'') \leq \mathit{safe}(Z')$.
For any single-character insertion derivative $Z^{**}$ of $Z''$, if $\mathit{pref}(Z') < \mathit{suff}(Z^{**})$, then $\mathit{suff}(Z^{**}) < \mathit{suff}(Z'')$ due to definition of $v$ and $Z''$, which contradicts that $\mathit{pref}(Z') \prec \mathit{suff}(Z'')$.
On the other hand, for any single-character insertion derivative $Z^*$ of $Z'$, $\mathit{pref}(Z') \lneq \mathit{pref}(Z^*)$ because $Z'$ is an MCS-prefix.
This implies from definition of $\mathit{safe}(Z')$ that if $\mathit{pref}(Z^*) < \mathit{safe}(Z')$, then $\mathit{pref}(Z') < \mathit{pref}(Z^*)$.
Hence, from $\mathit{suff}(Z'') \leq \mathit{safe}(Z')$, $\mathit{pref}(Z^*) < \mathit{suff}(Z'')$ contradicts that $\mathit{pref}(Z') \prec \mathit{suff}(Z'')$.
Thus, there exists no single-character insertion derivative of $Z' \circ Z''$.

To prove the ``only if'' part of the lemma, suppose that there exists no single-character insertion derivative of $Z' \circ Z''$.
This immediately implies that $\mathit{pref}(Z') \prec \mathit{suff}(Z'')$.
Furthermore, $\mathit{suff}(Z'') \leq \mathit{safe}(Z')$ because otherwise it follows from definition of $\mathit{safe}(Z')$ that there exists a single-character insertion derivative $Z^*$ of $Z'$ such that $X$ and $Y$ share $Z^* \circ Z''$, a contradiction.
\end{proof}

Conte et~al.~\cite{CGP+} also gave an inductive method for updating $\mathit{safe}(Z')$ to $\mathit{safe}(Z' \circ c)$ for any $Z'$-extensible character $c$ as follows.

\begin{lemma}[\cite{CGP+}]\label{lem safe}
For any MCS-prefix $Z'$ and any $Z'$-extensible character $c$, $i_{\mathit{safe}(Z' \circ c)}$ is the minimum of the following at most three indices.
One is $|X| + 1$.
Another is the minimum of $\mathit{next}_X(c, \mathit{next}_X(Y[j], i_{\mathit{pref}(Z')}))$ over all indices $j$ with $j_{\mathit{pref}(Z')} < j < j_{\mathit{pref}(Z' \circ c)}$ such that $\mathit{next}_X(Y[j], i_{\mathit{pref}(Z')}) \leq |X|$, if any.
The other is $\mathit{next}_X(c, i_{\mathit{safe}(Z')})$, if $i_{\mathit{safe}(Z')} \leq |X|$.
Index $j_{\mathit{safe}(Z' \circ c)}$ can be determined analogously by exchanging the roles of $X$ and $Y$.
\end{lemma}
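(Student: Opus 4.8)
The plan is to unfold the definition of $i_{\mathit{safe}(Z' \circ c)}$ and to match each of the three listed indices to a family of single-character insertion derivatives of $Z' \circ c$. Write $p = \mathit{pref}(Z')$ and $p' = \mathit{pref}(Z' \circ c) = \mathit{next}(c, p)$, so that $i_{p'} = \mathit{next}_X(c, i_p)$ and $j_{p'} = \mathit{next}_Y(c, j_p)$, and recall that by definition $i_{\mathit{safe}(Z' \circ c)}$ is the least $i$-coordinate of $\mathit{pref}(Z^{**})$ over all single-character insertion derivatives $Z^{**}$ of $Z' \circ c$ with $j_{\mathit{pref}(Z^{**})} = j_{p'}$, defaulting to $|X| + 1$ when none exists; this default is exactly the first index. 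First I would classify each such $Z^{**}$ by whether the inserted character falls after the trailing $c$ or not. If it falls after, then $Z^{**} = Z' \circ c \circ c'$ and $\mathit{pref}(Z^{**}) = \mathit{next}(c', p')$ has $j$-coordinate $\mathit{next}_Y(c', j_{p'}) > j_{p'}$, so such derivatives never meet the constraint and can be discarded. Every remaining derivative has the form $Z^{**} = Z^* \circ c$ with $Z^*$ a single-character insertion derivative of $Z'$, whence $\mathit{pref}(Z^{**}) = \mathit{next}(c, \mathit{pref}(Z^*))$.

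Writing $q = \mathit{pref}(Z^*)$, the next step is to translate the constraint and the objective. Since $Z'$ is a subsequence of $Z^*$ we have $q \ge p$, and because $j_{p'}$ is the first $c$-position after $j_p$ in $Y$, the requirement $\mathit{next}_Y(c, j_q) = j_{p'}$ is equivalent to $j_p \le j_q < j_{p'}$; the quantity to minimize, $i_{\mathit{pref}(Z^{**})} = \mathit{next}_X(c, i_q)$, is nondecreasing in $i_q$. I would then split into the cases $j_q = j_p$ and $j_p < j_q < j_{p'}$. In the first case, the least possible $i_q$ is exactly $i_{\mathit{safe}(Z')}$ by the definition of $\mathit{safe}$ (and no such $Z^*$ exists precisely when $i_{\mathit{safe}(Z')} = |X| + 1$), so applying $\mathit{next}_X(c, \cdot)$ yields the third index. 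In the second case, the key observation is that $q = \mathit{pref}(Z^*)$ is itself a match whose common character equals the last character of $Z^*$, i.e. $X[i_q] = Y[j_q]$; granting $i_q > i_p$, it follows that $i_q$ is an occurrence of $Y[j_q]$ strictly after $i_p$, hence $i_q \ge \mathit{next}_X(Y[j_q], i_p)$ and $\mathit{next}_X(c, i_q) \ge \mathit{next}_X(c, \mathit{next}_X(Y[j_q], i_p))$, which is the term of the second index for $j = j_q$ (with $\mathit{next}_X(Y[j_q], i_p) \le i_q \le |X|$ ensuring admissibility).

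To upgrade these bounds to an equality I would exhibit matching derivatives realizing each index: the derivative $Z^* \circ c$ for a $Z^*$ attaining $i_{\mathit{safe}(Z')}$ realizes the third index, and $Z' \circ Y[j] \circ c$ for each admissible $j$ realizes the corresponding term of the second index, both having $\mathit{pref}$ with $j$-coordinate $j_{p'}$; taking the overall minimum then gives the stated formula, and the claim for $j_{\mathit{safe}(Z' \circ c)}$ follows by the symmetric argument with $X$ and $Y$ interchanged.

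The main obstacle is justifying $i_q > i_p$ in the case $j_p < j_q < j_{p'}$; without it the bound collapses, since an insertion keeping the $X$-coordinate pinned at $i_p$ would instead contribute $i_{p'} = \mathit{next}_X(c, i_p)$, which is smaller than the third index. This is precisely where both the MCS-prefix hypothesis and Lemma~\ref{lem MCS} are indispensable. The plan is to argue by contradiction: assuming $i_q = i_p$ with $j_p < j_q < j_{p'}$, I would use that $c$ is $Z'$-extensible, so $Z' \circ c$ is an MCS-prefix, to fix an MCS $Z = Z' \circ c \circ R$ and set $s = \mathit{suff}(c \circ R)$. Lemma~\ref{lem MCS} applied to $Z$ at index $|Z'|$ gives $p \prec s$, so $i_s > i_p$; moreover the common character of $s$ is $c$ and $j_s > j_p$, forcing $j_s \ge \mathit{next}_Y(c, j_p) = j_{p'} > j_q$. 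Then $q = (i_p, j_q) < s = \mathit{suff}(c \circ R)$, so $X$ and $Y$ share $Z^* \circ (c \circ R)$, which is $Z$ with one extra character inserted inside its prefix $Z'$ --- contradicting that $Z$ is an MCS. Turning this chain into a rigorous proof, and carefully handling the greedy-matching bookkeeping behind $q = \mathit{pref}(Z^*)$ being a match with $X[i_q] = Y[j_q]$, is the crux of the argument.
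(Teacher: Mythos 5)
Your proof is correct, and its skeleton coincides with the paper's: both unfold the definition of $\mathit{safe}(Z' \circ c)$, discard insertions after the trailing $c$, identify the remaining derivatives of $Z' \circ c$ with strings $Z^* \circ c$ where $Z^*$ is a single-character insertion derivative of $Z'$ whose $\mathit{pref}$ has $j$-coordinate in $[j_{\mathit{pref}(Z')}, j_{\mathit{pref}(Z' \circ c)})$, split on whether $j_{\mathit{pref}(Z^*)} = j_{\mathit{pref}(Z')}$ or is strictly greater, and settle the first case by the definition of $\mathit{safe}(Z')$, yielding the third index. You diverge only in the second case, and in a way that makes your write-up more complete than the paper's. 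The paper's proof flatly asserts that every such $Z^*$ equals $Z' \circ Y[j]$ for some $j$ with $j_{\mathit{pref}(Z')} < j < j_{\mathit{pref}(Z' \circ c)}$, with no justification; you instead prove the weaker but sufficient two-sided bound --- each $Z^*$ is dominated by the term at $j = j_{\mathit{pref}(Z^*)}$, since $\mathit{pref}(Z^*)$ is a match on the last character of $Z^*$, and each admissible term is realized by $Z' \circ Y[j] \circ c$, whose $\mathit{pref}$ you correctly verify lands at $j$-coordinate $j_{\mathit{pref}(Z' \circ c)}$ --- so you never need the exact structural characterization. The one nontrivial fact your route requires, $i_{\mathit{pref}(Z^*)} > i_{\mathit{pref}(Z')}$, is exactly what the paper's assertion silently presupposes, and your contradiction argument for it is valid: fixing an MCS $Z = Z' \circ c \circ R$, Lemma~\ref{lem MCS} at index $|Z'|$ gives $\mathit{pref}(Z') \prec \mathit{suff}(c \circ R)$; since the character of $\mathit{suff}(c \circ R)$ is $c$, its $j$-coordinate is at least $j_{\mathit{pref}(Z' \circ c)}$, so $i_{\mathit{pref}(Z^*)} = i_{\mathit{pref}(Z')}$ would force $\mathit{pref}(Z^*) < \mathit{suff}(c \circ R)$ and make $Z^* \circ c \circ R$ a shared single-character insertion derivative of $Z$, contradicting maximality. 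This is precisely where the MCS-prefix and extensibility hypotheses enter, and the paper leaves it implicit; your version also handles the overflow bookkeeping (terms equal to $|X| + 1$ being absorbed by the first index) that the paper glosses over. In short: same route, with the paper's one unproved structural step replaced by a domination-plus-realization argument that you justify in full.
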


\begin{proof}
By symmetry, we show only that the condition of $i_{\mathit{safe}(Z' \circ c)}$ in the lemma holds.
Let $\mathcal{Z}^*$ be the set of all single-character insertion derivatives $Z^*$ of $Z'$ such that $j_{\mathit{pref}(Z^* \circ c)} = j_{\mathit{pref}(Z' \circ c)}$.
If $\mathcal{Z}^*$ is empty, then $i_{\mathit{safe}(Z' \circ c)} = |X| + 1$; otherwise, $i_{\mathit{safe}(Z' \circ c)}$ is the minimum of $\mathit{next}_X(c, i_{\mathit{pref}(Z^*)})$ over all strings $Z^*$ in $\mathcal{Z}^*$.
For any $Z^*$ in $\mathcal{Z}^*$, $j_{\mathit{pref}(Z^*)} \geq j_{\mathit{pref}(Z')}$ and if $j_{\mathit{pref}(Z^*)} > j_{\mathit{pref}(Z')}$, then $Z^* = Z' \circ Y[j]$ for some index $j$ with $j_{\mathit{pref}(Z')} < j < j_{\mathit{pref}(Z' \circ c)}$ such that $\mathit{next}_X(Y[j], i_{\mathit{pref}(Z')}) \leq |X|$.
The minimum of $\mathit{next}_X(c, i_{\mathit{pref}(Z^*)})$ over all strings $Z^*$ in $\mathcal{Z}^*$ such that $j_{\mathit{pref}(Z^*)} > j_{\mathit{pref}(Z')}$ is the minimum of $\mathit{next}_X(c, \mathit{next}_X(Y[j], i_{\mathit{pref}(Z')}))$ over all indices $j$ with $j_{\mathit{pref}(Z')} < j < j_{\mathit{pref}(Z' \circ c)}$.
On the other hand, it follows from definition of $\mathit{safe}(Z')$ that if $i_{\mathit{safe}(Z')} \leq |X|$, then the minimum of $\mathit{next}_X(c, i_{\mathit{pref}(Z^*)})$ over all strings $Z^*$ in $\mathcal{Z}^*$ such that $j_{\mathit{pref}(Z^*)} = j_{\mathit{pref}(Z')}$ is $\mathit{next}_X(c, i_{\mathit{safe}(Z')})$; otherwise, there exists no such $Z^*$.
\end{proof}

Based on Lemma~\ref{lem ext}, for any MCS-prefix $Z'$, let any suff-match $v$ such that $\mathit{pref}(Z') \prec v \leq \mathit{safe}(Z')$ be called a \emph{witness} (or \emph{$c_v$-witness}) \emph{of $Z'$-extensibility}.
In addition, let any witness $v$ of $Z'$-extensibility be called \emph{prominent}, if there exists no witness $v'$ of $Z'$-extensibility such that $v' \lneq v$.
For example, in the case of Figure~\ref{fig ext}, there exist three witnesses, $(8, 12)$, $(8, 14)$, and $(11, 10)$, of $Z'$-extensibility and only $(8, 12)$ and $(11, 10)$ are prominent due to $(8, 12) \lneq (8, 14)$.
Let $\mathit{ext}(Z')$ denote the sequence of all prominent witnesses of $Z'$-extensibility in an arbitrary order.
The reason why we introduce this sequence is as follows.
For any character $c$ with $1 \leq c \leq \sigma$ and any $c$-witness $v$ of $Z'$-extensibility, at least $i_v = i_{\mathit{pref}(Z' \circ c)}$ or $j_v = j_{\mathit{pref}(Z' \circ c)}$ due to $\mathit{pref}(Z') \prec v$.
This implies that there exist at most two prominent $c$-witnesses of $Z'$-extensibility.
Furthermore, there exists no prominent $c$-witness of $Z'$-extensibility if and only if there exists no $c$-witness of $Z'$-extensibility.
This implies from Lemma~\ref{lem ext} that $c$ is $Z'$-extensible if and only if there exists a prominent $c$-witness of $Z'$-extensibility. 
Thus, to implement Algorithm $\mathsf{Basic}$ so as to run efficiently, we can concentrate on designing an efficient data structure that supports queries of $\mathit{ext}(Z')$ for any MCS-prefix $Z'$. 
This is because if $\mathit{ext}(Z')$ is available, then all $Z'$-extensible characters can be determined in time linear in the number of them.

The $(\sigma n^2 \log n, n^2, \sigma n \log n)$-algorithm of Conte~et~al.~\cite{CGP+} can be thought of as adopting an $O(n^2)$-space data structure that supports $O(\sigma \log n)$-time queries of $\mathit{ext}(Z')$, which stores all suff-matches to search for a candidate of each of two possible prominent $c$-witnesses of $Z'$-extensibility in $O(\log n)$ time.
To propose $(n^2, n^2, n)$- and $O(n^2, n, n \log n)$-algorithms, we develop different data structures to efficiently support queries of $\mathit{ext}(Z')$.

\begin{remark}
The delay time achieved by the algorithm of Conte~et~al.~\cite{CGP+} to output each MCS $Z$ is $O(\sigma |Z| \log n)$ rather than $O(\sigma n \log n)$.
The space complexity of the algorithm is $O(\sigma n/\log n + R)$ rather than $O(n^2)$, where $R$ is the number of suff-matches.
The preprocessing-time complexity is $O(\sigma n + \sigma M \log n)$ rather than $O(\sigma n^2 \log n)$, where $M$ is the number of matches.
In contrast, our algorithms proposed in the subsequent sections are designed so that the efficiency depends only on $n$, independent of $\sigma$, $|Z|$, $R$, or $M$.
Therefore, strictly speaking, it does not make sense to simply compare the efficiency of their algorithm and ours only with respect to $n$.
\end{remark}

\subsection{$O(n^2, n^2, n)$-algorithm}\label{sec Enum221}

As an implementation of Algorithm $\mathsf{Basic}$ based on Conte~et~al.~\cite{CGP+}'s prefix-extensible character test (Lemmas~\ref{lem ext} and \ref{lem safe}), we propose an $O(n^2, n^2, n)$-algorithm that solves the MCS enumeration problem, which we denote Algorithm $\mathsf{Enum221}$.

Algorithm $\mathsf{Enum221}$ utilizes an $O(n^2)$-time constructible data structure $\mathit{D221}$ supporting $O(|\mathit{ext}(Z')|)$-time queries of $\mathit{ext}(Z')$ for any MCS-prefix $Z'$.
Before proceeding to the design of $\mathit{D221}$, we observe below why this data structure works to output all distinct MCSs of $X$ and $Y$ one by one each in $O(n)$ time.

To simulate Algorithm $\mathsf{Basic}$, Algorithm $\mathsf{Enum221}$ uses $\mathit{Table}_{\mathit{next/prev}}$ to support $O(1)$-time next/prev-queries and maintains $\mathit{pref}(Z' \langle k])$, $\mathit{safe}(Z' \langle k])$, and $\mathit{alt}(Z' \langle k])$ for all indices $k$ with $1 \leq k \leq |Z'|$ as well as $Z'$, where $\mathit{alt}(Z')$ denotes a sequence of all $Z' \langle |Z'| - 1]$-extensible characters that are greater than $Z'[|Z'|]$ in an arbitrary order.
It is easy to simulate line 1 of Algorithm $\mathsf{Basic}$ because $\mathit{pref}(Z') = (1, 1)$, $\mathit{safe}(Z') = (|X| + 1, |Y| + 1)$, and $\mathit{alt}(Z')$ is the empty sequence.
Line 4 is simulated by using $\mathit{D221}$ to obtain $\mathit{ext}(Z')$ in $O(|\mathit{ext}(Z')|)$ time and scanning $\mathit{ext}(Z')$ to both determine the least $Z'$-extensible character $c$ and construct $\mathit{alt}(Z' \circ c)$ in $O(|\mathit{ext}(Z')|)$ time.
Furthermore, $\mathit{Table}_{\mathit{next/prev}}$ is used to obtain $\mathit{pref}(Z' \circ c) = \mathit{next}(c, \mathit{pref}(Z'))$ in $O(1)$ time and determine $\mathit{safe}(Z' \circ c)$ based on Lemma~\ref{lem safe} in $O(a_{\mathit{pref}(Z' \circ c)} - a_{\mathit{pref}(Z')})$ time, where we note that $a_{\mathit{pref}(Z' \circ c)} - a_{\mathit{pref}(Z')} = (i_{\mathit{pref}(Z' \circ c)} - i_{\mathit{pref}(Z')}) + (j_{\mathit{pref}(Z' \circ c)} - j_{\mathit{pref}(Z')})$.
The condition in line 7 holds if and only if $\mathit{alt}(Z')$ is empty.
Line 10 is simulated by scanning $\mathit{alt}(Z')$ to decompose it into the least $Z'$-extensible character $c$ in it and $\mathit{alt}(Z' \langle |Z'| - 1] \circ c)$ in $O(|\mathit{ext}(Z' \langle |Z'| - 1])|)$ time and using $\mathit{Table}_{\mathit{next/prev}}$ to obtain $\mathit{pref}(Z' \langle |Z'| - 1] \circ c) = \mathit{next}(c, \mathit{pref}(Z' \langle |Z'| - 1]))$ in $O(1)$ time and determine $\mathit{safe}(Z' \langle |Z'| - 1] \circ c)$ based on Lemma~\ref{lem safe} in $O(a_{\mathit{pref}(Z' \langle |Z'| - 1] \circ c)} - a_{\mathit{pref}(Z' \langle |Z'| - 1])})$ time.

From the above implementation of Algorithm $\mathsf{Basic}$ adopted by Algorithm $\mathsf{Enum221}$, each MCS $Z'$ of $X$ and $Y$ is obtained in time linear in the sum of $(a_{\mathit{pref}(Z' \langle k + 1])} - a_{\mathit{pref}(Z' \langle k])}) + |\mathit{ext}(Z' \langle k])|$ over all indices $k$ with $1 \leq k \leq |Z'| - 1$.
Since the sum of $a_{\mathit{pref}(Z' \langle k + 1])} - a_{\mathit{pref}(Z' \langle k])}$ over all such indices $k$ is equal to $a_{Z'} - a_{Z'[1]} = a_{(|X|, |Y|)} - a_{(1, 1)} = |X| + |Y| - 2 = O(n)$, if $|\mathit{ext}(Z' \langle k])| = O(a_{\mathit{pref}(Z' \langle k + 1])} - a_{\mathit{pref}(Z' \langle k])})$, then each $Z'$ is output in $O(n)$ time.
The following lemma ensures that this delay-time complexity is actually established.

\begin{lemma}\label{lem Enum221}
For any MCS-prefix $Z'$ and any $Z'$-extensible character $c$, the number of $Z'$-extensible characters is less than $a_{\mathit{pref}(Z' \circ c)} - a_{\mathit{pref}(Z')}$, implying that $|\mathit{ext}(Z')| = O(a_{\mathit{pref}(Z' \circ c)} - a_{\mathit{pref}(Z')})$.
\end{lemma}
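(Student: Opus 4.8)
The plan is to assign to each $Z'$-extensible character $c'$ the match $Q_{c'} = \mathit{pref}(Z' \circ c') = \mathit{next}(c', \mathit{pref}(Z'))$ and to bound the extensible characters through how the coordinates of $Q_{c'}$ are confined relative to those of $Q_c$. First I would record two easy facts. Writing $P = \mathit{pref}(Z')$, the map $c' \mapsto i_{Q_{c'}} = \mathit{next}_X(c', i_P)$ is injective, since distinct characters occur first (after position $i_P$) at distinct positions of $X$, and likewise $c' \mapsto j_{Q_{c'}}$ is injective. Moreover $i_P < i_{Q_{c'}}$ and $j_P < j_{Q_{c'}}$, so $P < Q_{c'}$, and each $Q_{c'}$ is a genuine match because $Z' \circ c'$ is an MCS-prefix.

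The crux is the claim that $Q_c \not< Q_{c'}$ for every $Z'$-extensible character $c'$, and this is the step I expect to be the main obstacle. I would establish it by contradiction using the witness structure. Suppose $Q_c < Q_{c'}$. Since $c'$ is $Z'$-extensible, Lemma~\ref{lem ext} supplies a $c'$-witness $v$, that is, a suff-match with $P \prec v$ and $c_v = c'$; because $X[i_v] = Y[j_v] = c'$ with $i_v > i_P$ and $j_v > j_P$, we get $i_v \ge \mathit{next}_X(c', i_P) = i_{Q_{c'}}$ and $j_v \ge j_{Q_{c'}}$, hence $Q_{c'} \le v$. Combining the inequalities yields $P < Q_c < Q_{c'} \le v$, so $Q_c$ would be a match strictly between $P$ and $v$, contradicting $P \prec v$. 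Therefore $Q_c \not< Q_{c'}$, which by the definition of $<$ means $i_{Q_{c'}} \le i_{Q_c}$ or $j_{Q_{c'}} \le j_{Q_c}$.

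Finally I would count. Let $A$ (resp.\ $B$) be the set of $Z'$-extensible characters $c'$ with $i_{Q_{c'}} \le i_{Q_c}$ (resp.\ $j_{Q_{c'}} \le j_{Q_c}$); by the crux claim every extensible character lies in $A \cup B$. The injection $c' \mapsto i_{Q_{c'}}$ sends $A$ into $\{ i_P + 1, \dots, i_{Q_c} \}$, so $|A| \le i_{Q_c} - i_P$, and symmetrically $|B| \le j_{Q_c} - j_P$. Since $c \in A \cap B$, inclusion--exclusion gives that the number of extensible characters $|A \cup B| \le (i_{Q_c} - i_P) + (j_{Q_c} - j_P) - 1 = a_{Q_c} - a_P - 1 < a_{\mathit{pref}(Z' \circ c)} - a_{\mathit{pref}(Z')}$, as required. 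The concluding bound $|\mathit{ext}(Z')| = O(a_{\mathit{pref}(Z' \circ c)} - a_{\mathit{pref}(Z')})$ then follows because, as already observed, each extensible character has at most two prominent witnesses, so $|\mathit{ext}(Z')|$ is at most twice the number of extensible characters.
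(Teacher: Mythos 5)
Your proposal is correct and takes essentially the same approach as the paper: both arguments reduce to the fact that the matches $\mathit{pref}(Z' \circ c')$ for distinct $Z'$-extensible characters $c'$ occupy pairwise distinct $X$-positions and $Y$-positions and cannot strictly dominate one another --- the paper phrases this as $\mathit{pref}(Z') \prec \mathit{pref}(Z' \circ c')$ and counts along the resulting monotone staircase, while you derive the one-sided incomparability $Q_c \not< Q_{c'}$ from the Lemma~\ref{lem ext} witnesses and count via injections and inclusion--exclusion, which is the same arithmetic. Your concluding step, bounding $|\mathit{ext}(Z')|$ by twice the number of extensible characters, likewise matches the at-most-two-prominent-witnesses observation the paper records just before the lemma.
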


\begin{proof}
Let $(i_1, j_1),(i_2, j_2),\dots,(i_s, j_s)$ be matches $\mathit{pref}(Z' \circ c)$ for all distinct $Z'$-extensible characters $c$ in ascending order of $j_{\mathit{pref}(Z' \circ c)}$ (hence also in descending order of $i_{\mathit{pref}(Z' \circ c)}$ due to $\mathit{pref}(Z') \prec \mathit{pref}(Z' \circ c)$).
For any index $r$ with $1 \leq r \leq s$, both $i_{\mathit{pref}(Z')} < i_s < i_{s - 1} < \cdots < i_r$ and $j_{\mathit{pref}(Z')} < j_1 < j_2 < \cdots < j_r$.
This implies that $a_{(i_r, j_r)} - a_{\mathit{pref}(Z')} = (i_r - i_{\mathit{pref}(Z')}) + (j_r - j_{\mathit{pref}(Z')}) \geq ((s + 1) - r)) + r = s + 1$.
\end{proof}

Lemma~\ref{lem Enum221} immediately yields its corollary as follows.

\begin{corollary}\label{cor Enum221}
If $\mathit{D221}$ supports $O(|\mathit{ext}(Z')|)$-time queries of $\mathit{ext}(Z')$ for any MCS-prefix $Z'$, then Algorithm $\mathsf{Enum221}$ uses $O(n)$ space, excluding space for storing $\mathit{Table}_{\mathit{next/prev}}$ and $\mathit{D221}$, to output all distinct MCSs of $X$ and $Y$ one by one each in $O(n)$ time.
\end{corollary}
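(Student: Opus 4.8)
The plan is to read off both assertions directly from Lemma~\ref{lem Enum221} together with the per-MCS running-time accounting already established in the discussion preceding it, so that the corollary is essentially a bookkeeping consequence. For the delay-time bound I would instantiate Lemma~\ref{lem Enum221} at each proper prefix of the MCS $Z'$ currently being produced: for any index $k$ with $1 \leq k \leq |Z'| - 1$, the character $Z'[k + 1]$ is $Z' \langle k]$-extensible, because $Z' \langle k + 1]$ is itself an MCS-prefix, so the lemma yields $|\mathit{ext}(Z' \langle k])| < a_{\mathit{pref}(Z' \langle k + 1])} - a_{\mathit{pref}(Z' \langle k])}$. Substituting this into the stated per-MCS bound — namely that each $Z'$ is output in time linear in the sum over such $k$ of $(a_{\mathit{pref}(Z' \langle k + 1])} - a_{\mathit{pref}(Z' \langle k])}) + |\mathit{ext}(Z' \langle k])|$ — absorbs the second summand into the first, leaving a telescoping sum of anti-diagonal differences equal to $a_{(|X|, |Y|)} - a_{(1, 1)} = |X| + |Y| - 2 = O(n)$.

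For the space bound I would itemize what the algorithm keeps at any instant with current prefix $Z'$: the string $Z'$ itself, and the arrays $\mathit{pref}(Z' \langle k])$, $\mathit{safe}(Z' \langle k])$, and $\mathit{alt}(Z' \langle k])$ indexed by $k$. Since any MCS-prefix has length at most $\min(|X|, |Y|) = O(n)$, storing $Z'$ and the two match-valued arrays (each entry a single pair of indices) costs $O(n)$. The only term requiring care is the total size of the $\mathit{alt}$ arrays. Here I would observe that $\mathit{alt}(Z' \langle k])$ is, by definition, a subsequence of the $Z' \langle k - 1]$-extensible characters, so $|\mathit{alt}(Z' \langle k])| \leq |\mathit{ext}(Z' \langle k - 1])|$; applying Lemma~\ref{lem Enum221} with the $Z' \langle k - 1]$-extensible character $Z'[k]$ bounds this by $a_{\mathit{pref}(Z' \langle k])} - a_{\mathit{pref}(Z' \langle k - 1])}$, and summing over $k$ telescopes to $O(n)$ exactly as in the delay argument.

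I do not expect a genuine obstacle, since everything reduces to the same telescoping inequality. The one point that must not be glossed over is the space accounting for $\mathit{alt}$: although $\mathit{alt}(Z' \langle k])$ records the characters that would extend $Z' \langle k - 1]$ but were passed over (those exceeding $Z'[k]$), its length is nonetheless controlled by $|\mathit{ext}(Z' \langle k - 1])|$, and the character actually appended at step $k$, namely $Z'[k]$, is a legitimate $Z' \langle k - 1]$-extensible character to which Lemma~\ref{lem Enum221} applies. Once the same telescoping estimate is invoked for both the delay and the space, the corollary follows.
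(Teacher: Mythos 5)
Your proposal is correct and takes essentially the same route as the paper, which offers no separate proof but states that the corollary follows immediately from Lemma~\ref{lem Enum221} together with the per-MCS accounting in the preceding discussion: you instantiate the lemma at each prefix with the $Z' \langle k]$-extensible character $Z'[k + 1]$ and telescope the anti-diagonal differences for both the delay and the space, with your explicit bound on the total size of the $\mathit{alt}$ sequences simply spelling out what the paper leaves implicit. One small imprecision: Lemma~\ref{lem Enum221} bounds the \emph{number of $Z'$-extensible characters} strictly by $a_{\mathit{pref}(Z' \circ c)} - a_{\mathit{pref}(Z')}$, whereas $|\mathit{ext}(Z' \langle k])|$ can be up to twice that number (there may be two prominent $c$-witnesses per character), so your strict inequality $|\mathit{ext}(Z' \langle k])| < a_{\mathit{pref}(Z' \langle k + 1])} - a_{\mathit{pref}(Z' \langle k])}$ should be the lemma's $O(\cdot)$ form --- which is harmless, since the conclusion is asymptotic.
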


To achieve $O(|\mathit{ext}(Z')|)$-time queries of $\mathit{ext}(Z')$ for any MCS-prefix $Z'$, we define data structure $\mathit{D221}$ as follows.

\begin{definition}\label{def D221}
For any index $i$ with $1 \leq i \leq |X|$, let $\mathit{Suff}_i$ denote the sequence of $|Y|$ indices such that for any index $j$ with $1 \leq j \leq |Y|$, $\mathit{Suff}_i[j]$ is the least index $i'$ with $i \leq i' \leq |X|$ such that $(i', j)$ is a suff-match, if any, or $|X| + 1$, otherwise (see Figure~\ref{fig D221}).
Let $\mathit{D221}$ consist of sequences $\mathit{Suff}_i$ and the RMQ data structures $\mathit{RMQ}_{\mathit{Suff}_i}$ for all indices $i$ with $1 \leq i \leq |X|$, which is hence of size $O(n^2)$.
\end{definition}

\begin{figure}[t]
\centering
\includegraphics[width=10cm]{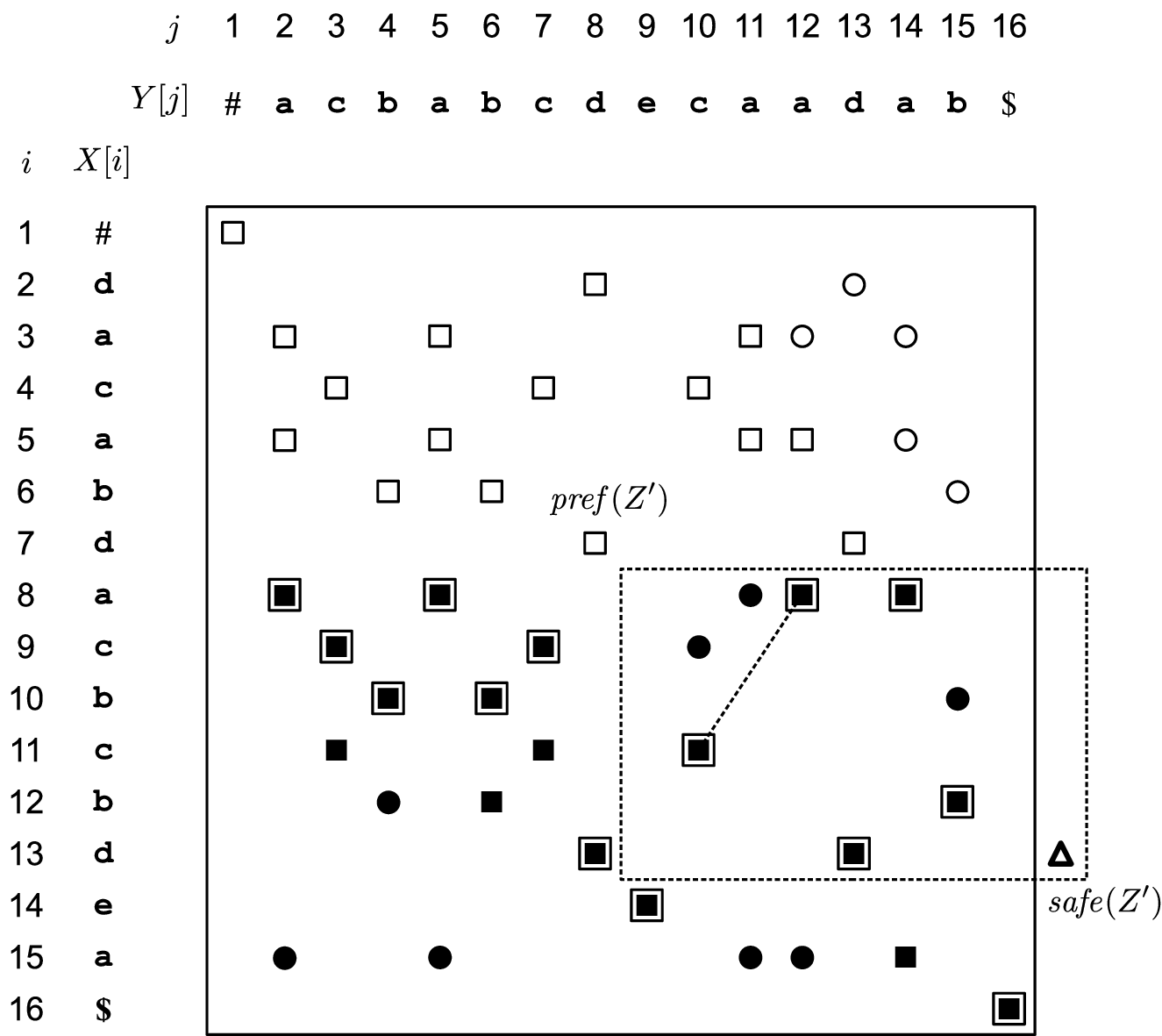}
\caption{
Sequence $\mathit{Suff}_i = 17 \circ 8 \circ 9 \circ 10 \circ 8 \circ 10 \circ 9 \circ 13 \circ 14 \circ 11 \circ 17 \circ 8 \circ 13 \circ 8 \circ 12 \circ 16$ with $i = i_{\mathit{pref}(Z')} + 1 \ (= 8)$ and $\mathit{ext}(Z') = (8, 12) \circ (11, 10)$ for the same $X$, $Y$, and $Z'$ as Figure~\ref{fig trie}, where open (resp. solid) bullets indicate matches $w$ with $i_w < i$ (resp. $i_w \geq i$), double-edged bullets indicate matches $(\mathit{Suff}_i[j], j)$, and the matches $(\mathit{Suff}_i[j], j)$ composing $\mathit{ext}(Z')$ are connected by a dotted line
}
\label{fig D221}
\end{figure}

If $\mathit{D221}$ is available, then each element $v$ in $\mathit{ext}(Z')$ can be obtained in $O(1)$ in descending order of $j_v$ inductively as stated below.

\begin{lemma}\label{lem D221}
Let $Z'$ be an arbitrary MCS-prefix and let $i = i_{\mathit{pref}(Z')} + 1$.
Let $j_1 \circ j_2 \circ \cdots \circ j_\ell$ be the sequence of indices such that $j_1 = \mathit{RMQ}_{\mathit{Suff}_i}(j_{\mathit{pref}(Z')} + 1 : \min(j_{\mathit{safe}(Z')}, |Y|))$, $j_r = \mathit{RMQ}_{\mathit{Suff}_i}(j_{\mathit{pref}(Z')} + 1 : j_{r - 1} - 1)$ for any index $r$ with $2 \leq r \leq \ell$, $\mathit{Suff}_i[j_\ell] \leq \min(i_{\mathit{safe}(Z')}, |X|)$, and either $j_\ell = j_{\mathit{pref}(Z')} + 1$ or $\mathit{Suff}_i[\mathit{RMQ}_{\mathit{Suff}_i}(j_{\mathit{pref}(Z')} + 1 : j_\ell - 1)] > \min(i_{\mathit{safe}(Z')}, |X|)$.
Sequence $\mathit{ext}(Z')$ consists of suff-matches $(\mathit{Suff}_i[j_r], j_r)$ for all indices $r$ with $1 \leq r \leq \ell$.
\end{lemma}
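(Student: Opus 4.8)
The plan is to prove the lemma by first giving a purely combinatorial characterization of the columns occupied by prominent witnesses in terms of the sequence $\mathit{Suff}_i$, and then checking that the RMQ recursion in the statement enumerates exactly those columns. Write $p = \mathit{pref}(Z')$ and $s = \mathit{safe}(Z')$, and recall from Lemma~\ref{lem ext} and the definition of $\mathit{ext}(Z')$ that a prominent witness is a suff-match $v$ with $p \prec v \leq s$ that is minimal under $\lneq$. The first step is to observe that every prominent witness $v$ is the lowest suff-match of its column among rows $\geq i$, i.e.\ $i_v = \mathit{Suff}_i[j_v]$: any lower suff-match in the same column would be $\lneq v$, would still satisfy $p \prec \cdot$ (its open rectangle to $p$ is contained in that of $v$) and $\cdot \leq s$, and would thus be a witness contradicting minimality of $v$. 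A symmetric argument shows that two prominent witnesses have strictly decreasing rows as their columns increase, so their columns are distinct and $v \mapsto j_v$ is injective. Hence it suffices to identify the set of columns $J \in (j_p, \min(j_{s},|Y|)]$ for which $(\mathit{Suff}_i[J], J)$ is a prominent witness.

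The core claim is that such columns are exactly the strict left-to-right minima of $\mathit{Suff}_i$ on $(j_p, \min(j_{s},|Y|)]$ whose value is at most $\min(i_{s},|X|)$, where $J$ is a strict left-to-right minimum if $\mathit{Suff}_i[j'] > \mathit{Suff}_i[J]$ for every $j' \in (j_p, J)$. For the forward direction I would argue contrapositively: if some $j' \in (j_p, J)$ had $\mathit{Suff}_i[j'] \leq \mathit{Suff}_i[J]$, then the suff-match $(\mathit{Suff}_i[j'], j')$ would be either strictly below-left of $v$, violating $p \prec v$, or a same-row witness $\lneq v$, violating minimality. The converse is the delicate part. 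Proving $(\mathit{Suff}_i[J], J) \leq s$ and that it is minimal is routine from the left-to-right-minimum property, but proving $p \prec v$ requires excluding an arbitrary match $w$ --- not merely a suff-match --- from the open rectangle between $p$ and $v$, whereas $\mathit{Suff}_i$ only records suff-matches. The hard part will be bridging this gap, and I expect to do so with the identity $\mathit{suff}(c_w \circ Z'') = \mathit{prev}(c_w, \mathit{suff}(Z''))$: choosing $Z''$ with $\mathit{suff}(Z'') = v$ and prepending $c_w$ produces the suff-match $w^{*} = \mathit{prev}(c_w, v)$ satisfying $p < w^{*} < v$, so its column lies in $(j_p, J)$ with $\mathit{Suff}_i$-value below $\mathit{Suff}_i[J]$, contradicting that $J$ is a strict left-to-right minimum.

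Finally I would verify that the recursion computes precisely these columns. Since $\mathit{RMQ}_{\mathit{Suff}_i}(k':k'')$ returns the leftmost position of the minimum on $[k',k'']$, it returns the rightmost strict left-to-right minimum of that range; feeding back $j_{r-1}-1$ as the new right endpoint therefore walks through the strict left-to-right minima of $(j_p, \min(j_{s},|Y|)]$ in strictly decreasing column order and strictly increasing value. The termination conditions --- $\mathit{Suff}_i[j_\ell] \leq \min(i_{s},|X|)$ together with either $j_\ell = j_p+1$ or the next query exceeding $\min(i_{s},|X|)$ --- cut this walk exactly at the value threshold $\min(i_{s},|X|)$, which is legitimate because the values are monotone along the walk. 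Thus the emitted positions $j_1, \ldots, j_\ell$ are exactly the qualifying columns, and since $\mathit{ext}(Z')$ is taken in an arbitrary order, $(\mathit{Suff}_i[j_1],j_1), \ldots, (\mathit{Suff}_i[j_\ell],j_\ell)$ constitute exactly $\mathit{ext}(Z')$, completing the proof.
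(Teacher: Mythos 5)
Your proposal is correct and takes essentially the same approach as the paper: characterizing the columns of prominent witnesses as the strict left-to-right minima of $\mathit{Suff}_i$ on $(j_{\mathit{pref}(Z')}, \min(j_{\mathit{safe}(Z')}, |Y|)]$ with value at most $\min(i_{\mathit{safe}(Z')}, |X|)$ is exactly the content of the paper's rectangle-emptiness claims, and your walk-plus-threshold analysis mirrors its induction that $j_{v_r} = j_r$. The only notable difference is that you make explicit the bridging step via $\mathit{suff}(c_w \circ Z'') = \mathit{prev}(c_w, v)$, upgrading the absence of suff-matches in the open rectangle to the full condition $\mathit{pref}(Z') \prec v$, a point the paper leaves implicit here (the same trick appears only inside its proof of Lemma~\ref{lem ext}).
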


\begin{proof}
Let $\mathit{ext}(Z')$ consist of prominent witnesses $v_1,v_2,\dots,v_{|\mathit{ext}(Z')|}$ of $Z'$-extensibility with $j_{v_1} > j_{v_2} > \cdots > j_{v_{|\mathit{ext}(Z')|}}$.
Let $r$ be an arbitrary index with $1 \leq r \leq |\mathit{ext}(Z')|$.
Since $v_r$ is prominent, $\mathit{Suff}_i[j_{v_r}] = i_{v_r}$ and there exists no suff-match $v$ such that $j_{\mathit{pref}(Z')} + 1 \leq j_v \leq j_{v_r} - 1$ and $i_{\mathit{pref}(Z')} + 1 \leq i_v \leq i_{v_r}$.
Furthermore, if $r = 1$, then there exists no suff-match $v$ such that $j_{v_1} < j_v \leq \min(j_{\mathit{safe}(Z')}, |Y|)$ and $i_{\mathit{pref}(Z')} + 1 \leq i_v < i_{v_1}$; otherwise, there exists no suff-match $v$ such that $j_{v_r} + 1 \leq j_v \leq j_{v_{r + 1}} - 1$ and $i_{\mathit{pref}(Z')} + 1 \leq i_v < i_{v_r}$.
Therefore, it can be proven by induction that $j_{v_r} = j_r$.
Since $i_{v_{|\mathit{ext}(Z')|}} \leq \min(i_{\mathit{safe}(Z')}, |X|)$ and there exists no suff-match $v$ such that $j_{\mathit{pref}(Z')} + 1 \leq j_v \leq j_{v_{|\mathit{ext}(Z')|}} - 1$ and $i_{\mathit{pref}(Z')} + 1 \leq i_v \leq \min(i_{\mathit{safe}(Z')}, |X|)$, $|\mathit{ext}(Z')| = \ell$.
\end{proof}

According to Lemma~\ref{lem D221}, we can obtain $\mathit{ext}(Z')$ in $O(|\mathit{ext}(Z')|)$ time using $\mathit{D221}$.
Thus, we have the following corollary of the lemma.

\begin{corollary}\label{cor D221}
$\mathit{D221}$ supports $O(|\mathit{ext}(Z')|)$-time queries of $\mathit{ext}(Z')$ for any MCS-prefix $Z'$.
\end{corollary}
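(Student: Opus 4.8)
The plan is to turn the characterization in Lemma~\ref{lem D221} into an explicit enumeration procedure and then to check that each of its steps costs $O(1)$ time, so that the total running time matches the number $\ell = |\mathit{ext}(Z')|$ of suff-matches produced. Since Algorithm $\mathsf{Enum221}$ already maintains $\mathit{pref}(Z')$ and $\mathit{safe}(Z')$, the index $i = i_{\mathit{pref}(Z')} + 1$ and the bounds $\min(j_{\mathit{safe}(Z')}, |Y|)$ and $\min(i_{\mathit{safe}(Z')}, |X|)$ that appear in the lemma are all available in $O(1)$ time. Moreover, $\mathit{D221}$ stores $\mathit{Suff}_i$ (giving $O(1)$-time access to any $\mathit{Suff}_i[j]$) together with $\mathit{RMQ}_{\mathit{Suff}_i}$ (giving $O(1)$-time range-minimum queries), which is exactly what Lemma~\ref{lem D221} calls for; the correct pair is selected in $O(1)$ time from the value of $i$.

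First I would compute $j_1$ with the single query $\mathit{RMQ}_{\mathit{Suff}_i}(j_{\mathit{pref}(Z')} + 1 : \min(j_{\mathit{safe}(Z')}, |Y|))$, then generate $j_2, j_3, \dots$ iteratively by $j_r = \mathit{RMQ}_{\mathit{Suff}_i}(j_{\mathit{pref}(Z')} + 1 : j_{r - 1} - 1)$, each of which is a single RMQ query and hence $O(1)$. At every round I would emit the suff-match $(\mathit{Suff}_i[j_r], j_r)$, which by the lemma is the next prominent witness of $Z'$-extensibility in descending order of column index, and stop exactly when the terminal condition of Lemma~\ref{lem D221} first holds. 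Since the lemma guarantees that the indices obtained in this way are precisely $j_1, \dots, j_\ell$ with $\ell = |\mathit{ext}(Z')|$, and each of the $\ell$ rounds performs a constant amount of work, the query runs in $O(|\mathit{ext}(Z')|)$ time.

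The only genuinely delicate point, and the step I would take care to justify, is that the halting test can be both evaluated in $O(1)$ time and invoked without overshooting, i.e., that at most one RMQ query is wasted beyond the last witness. I expect this to follow from the fact that the values $\mathit{Suff}_i[j_r]$ are non-decreasing in $r$: each successive query ranges over a strict prefix that excludes the previous minimizer, so $\mathit{Suff}_i[j_1] \leq \mathit{Suff}_i[j_2] \leq \cdots$, and hence once $\mathit{Suff}_i[j_r]$ first exceeds $\min(i_{\mathit{safe}(Z')}, |X|)$ no later index can re-enter the admissible region. Testing whether the next candidate $\mathit{RMQ}_{\mathit{Suff}_i}(j_{\mathit{pref}(Z')} + 1 : j_r - 1)$ already violates this bound therefore detects termination with a single extra $O(1)$ query. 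Everything else being immediate from Lemma~\ref{lem D221}, the corollary amounts to this translation of the lemma into a constant-time-per-output enumeration.
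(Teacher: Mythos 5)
Your proposal is correct and matches the paper's approach: the paper derives the corollary directly from Lemma~\ref{lem D221}, observing that each $j_r$ is produced by a single $O(1)$-time query to $\mathit{RMQ}_{\mathit{Suff}_i}$ so that $\mathit{ext}(Z')$ is obtained in $O(|\mathit{ext}(Z')|)$ time. Your additional justification of the stopping rule via the monotonicity of the values $\mathit{Suff}_i[j_1] \leq \mathit{Suff}_i[j_2] \leq \cdots$ (in fact strictly increasing, by the paper's definition of $\mathit{RMQ}$) is sound and merely makes explicit a point the paper leaves implicit.
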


\begin{figure}
\centering
\begin{algorithm}
\aitem{1}{$\mathit{Suff}_i \leftarrow \mathit{Suff}_{i + 1}$;}
\aitem{1}{for each index $j$ from $1$ to $|Y|$,}
\aitem{2}{if $\mathit{Suff}_{i + 1}[j] \neq |X| + 1$, $\mathit{prev}_Y(X[i], j) \geq 1$, and \\ $\mathit{prev}_X(X[i], \mathit{Suff}_{i + 1}[j]) = i$, then}
\aitem{3}{$\mathit{Suff}_i[\mathit{prev}_Y(X[i], j)] \leftarrow i$;}
\aitem{1}{output $\mathit{Suff}_i$.}
\end{algorithm}
\caption{
Procedure $\mathsf{UpdateSuff}(i, \mathit{Suff}_{i + 1})$}
\label{algo UpdateSuff}
\end{figure}

Algorithm $\mathsf{Enum221}$ prepares data structure $\mathit{D221}$ by constructing sequence $\mathit{Suff}_i$ inductively for each index $i$ from $|X| - 1$ to $1$ in descending order using Procedure $\mathsf{UpdateSuff}(i, \mathit{Suff}_{i + 1})$ presented in Figure~\ref{algo UpdateSuff}, where the initial sequence $\mathit{Suff}_{|X|} = (|X| + 1) \circ (|X| + 1) \circ \cdots \circ (|X| + 1) \circ |X|$ is constructed from scratch.
Once each $\mathit{Suff}_i$ is obtained, $\mathit{RMQ}_{\mathit{Suff}_i}$ can be constructed in $O(n)$ time \cite{BF}.

\begin{lemma}\label{lem Suff}
For any index $i$ with $1 \leq i \leq |X| - 1$, if $\mathit{Table}_{\mathit{next/prev}}$ and $\mathit{Suff}_{i + 1}$ are available, then Procedure $\mathsf{UpdateSuff}(i, \mathit{Suff}_{i + 1})$ outputs $\mathit{Suff}_i$ in $O(n)$ time.
\end{lemma}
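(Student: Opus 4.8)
The plan is to dispatch the time bound first, which is immediate, and then to establish correctness by a two-way argument comparing the columns the procedure marks with the columns $j'$ for which $(i, j')$ is a suff-match. For the running time, line~1 copies the $|Y| = O(n)$ entries of $\mathit{Suff}_{i + 1}$, and the loop of lines~2--4 performs $|Y|$ iterations, each doing a constant number of next/prev-queries (answered in $O(1)$ time by $\mathit{Table}_{\mathit{next/prev}}$) together with $O(1)$ array accesses and updates; hence the procedure runs in $O(n)$ time.

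For correctness, recall from Definition~\ref{def D221} that $\mathit{Suff}_i[j']$ should equal $i$ exactly when $(i, j')$ is a suff-match and should otherwise equal $\mathit{Suff}_{i + 1}[j']$, because the least suff-match row in column $j'$ that is at least $i$ either is $i$ itself or coincides with the least such row that is at least $i + 1$. Since $i \leq |X| - 1$, the recursive characterization of suff-matches from Section~\ref{sec pre} says that $(i, j')$ is a suff-match if and only if $X[i] = Y[j']$ and there is a suff-match $w'$ with $\mathit{prev}(X[i], w') = (i, j')$, i.e., with $\mathit{prev}_X(X[i], i_{w'}) = i$ and $\mathit{prev}_Y(X[i], j_{w'}) = j'$. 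Because line~1 copies $\mathit{Suff}_{i + 1}$ wholesale, it suffices to show that lines~2--4 overwrite $\mathit{Suff}_i[j']$ with $i$ at precisely those columns $j'$ where $(i, j')$ is a suff-match.

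First I would argue soundness: whenever the test in line~3 succeeds for a column $j$, the pair $(i, j')$ with $j' = \mathit{prev}_Y(X[i], j)$ is genuinely a suff-match. The witness is the topmost suff-match $(i', j)$ in column $j$, where $i' = \mathit{Suff}_{i + 1}[j] \leq |X|$ by Definition~\ref{def D221}; the test guarantees $\mathit{prev}_X(X[i], i') = i$ and $j' = \mathit{prev}_Y(X[i], j) \geq 1$, so $\mathit{prev}(X[i], (i', j))$ is well defined and equals $(i, j')$, whence $(i, j')$ is a suff-match. Conversely, for completeness, suppose $(i, j')$ is a suff-match, with witness $w'$ as above. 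Processing the column $j = j_{w'}$ (which lies in the loop range since $j_{w'} \leq |Y|$), I would show the test succeeds: we have $\mathit{prev}_Y(X[i], j_{w'}) = j' \geq 1$, and since $\mathit{prev}_X(X[i], i_{w'}) = i$ forces $i_{w'} > i$, the topmost suff-match row $i' = \mathit{Suff}_{i + 1}[j_{w'}]$ satisfies $i + 1 \leq i' \leq i_{w'} \leq |X|$, so $\mathit{Suff}_{i + 1}[j_{w'}] \neq |X| + 1$.

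The one point that needs care, and which I expect to be the main obstacle, is verifying the remaining condition $\mathit{prev}_X(X[i], i') = i$ in the completeness direction, because the topmost row $i' = \mathit{Suff}_{i + 1}[j_{w'}]$ used by the procedure need not equal the witness row $i_{w'}$. This follows because $\mathit{prev}_X(X[i], i_{w'}) = i$ means $X[i]$ has no occurrence strictly between $i$ and $i_{w'}$, and this property is inherited by the smaller interval $(i, i')$ since $i < i' \leq i_{w'}$; together with $X[i] = X[i]$ and $i < i'$ this yields $\mathit{prev}_X(X[i], i') = i$, so line~4 sets $\mathit{Suff}_i[j'] = i$. Finally I would observe that several columns $j$ may share the same value $j' = \mathit{prev}_Y(X[i], j)$, but every successful test writes the identical value $i$ into $\mathit{Suff}_i[j']$ and, by soundness, only at columns that are suff-matches; hence the combined effect of lines~1--4 is exactly the desired sequence $\mathit{Suff}_i$.
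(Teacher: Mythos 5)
Your proof is correct and follows essentially the same route as the paper's: both reduce correctness to the characterization that $(i,j')$ is a suff-match if and only if some column $j$ passes the three-part test in the procedure, using the recursive suff-match characterization from Section~\ref{sec pre} with $(\mathit{Suff}_{i+1}[j], j)$ as witness, plus the $O(1)$-time next/prev-queries for the time bound. The only difference is that you spell out the step the paper leaves implicit, namely that $\mathit{prev}_X(X[i], i_{w'}) = i$ for the original witness row $i_{w'}$ transfers to the topmost stored row $\mathit{Suff}_{i+1}[j_{w'}] \leq i_{w'}$, which is exactly the right detail to verify.
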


\begin{proof}
For any index $j$ with $1 \leq j \leq |Y|$, if $(i, j)$ is a suff-match, then $\mathit{Suff}_i[j] = i$; otherwise, $\mathit{Suff}_i[j] = \mathit{Suff}_{i + 1}[j]$.
Furthermore, for any index $j'$ with $1 \leq j' \leq |Y|$, $(i, j')$ is a suff-match if and only if there exists an index $j$ with $j' < j \leq |Y|$ such that $(\mathit{Suff}_{i + 1}[j], j)$ is a match (i.e., $\mathit{Suff}_{i + 1}[j] \neq |X| + 1$), $\mathit{prev}_Y(X[i], j) = j'$, and $\mathit{prev}_X(X[i], \mathit{Suff}_{i + 1}[j]) = i$.
Thus, the procedure determines $\mathit{Suff}_i$ correctly.
Since $\mathit{Table}_{\mathit{next/prev}}$ supports $O(1)$-time next/prev-queries, the procedure runs in $O(n)$ time.
\end{proof}

Lemma~\ref{lem Suff} immediately implies that the following corollary holds.

\begin{corollary}\label{cor Suff}
$\mathit{D221}$ can be constructed in $O(n^2)$ time.
\end{corollary}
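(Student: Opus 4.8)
The plan is to reduce the claim to a straightforward accounting argument layered on top of Lemma~\ref{lem Suff}, since all the real work of correctly computing a single sequence $\mathit{Suff}_i$ has already been encapsulated in Procedure $\mathsf{UpdateSuff}$. By Definition~\ref{def D221}, $\mathit{D221}$ consists of the $|X| = O(n)$ sequences $\mathit{Suff}_1, \mathit{Suff}_2, \dots, \mathit{Suff}_{|X|}$ together with the $O(n)$ range-minimum structures $\mathit{RMQ}_{\mathit{Suff}_i}$, so it suffices to show that each of these two families can be produced within an $O(n^2)$ total time budget.

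First I would construct $\mathit{Table}_{\mathit{next/prev}}$, which is $O(n^2)$-time constructible as noted in Section~\ref{sec pre}, and keep it resident so that next/prev-queries cost $O(1)$ throughout. Next I would build the base sequence $\mathit{Suff}_{|X|} = (|X|+1) \circ (|X|+1) \circ \cdots \circ (|X|+1) \circ |X|$ directly in $O(n)$ time. I would then iterate $i$ from $|X|-1$ down to $1$, invoking $\mathsf{UpdateSuff}(i, \mathit{Suff}_{i+1})$ to obtain $\mathit{Suff}_i$; Lemma~\ref{lem Suff} guarantees that each invocation is correct and runs in $O(n)$ time, and since there are $O(n)$ iterations this phase costs $O(n^2)$ in total. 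Finally, once each $\mathit{Suff}_i$ is in hand I would construct $\mathit{RMQ}_{\mathit{Suff}_i}$ in $O(n)$ time using the result of \cite{BF}; with $O(n)$ such constructions this phase likewise costs $O(n^2)$. Summing the three phases yields the claimed $O(n^2)$ bound.

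There is no genuinely difficult step here, as the argument is essentially a summation, but I would be careful about two bookkeeping points. The first is that Lemma~\ref{lem Suff} presupposes that $\mathit{Table}_{\mathit{next/prev}}$ is available during every call to $\mathsf{UpdateSuff}$, so I must confirm that its one-time $O(n^2)$ construction is performed before the inductive loop and that its cost is absorbed into the overall budget rather than incurred per iteration. The second is simply verifying that the inductive dependency is well founded: $\mathit{Suff}_i$ depends only on $\mathit{Suff}_{i+1}$, so processing $i$ in descending order always has the needed predecessor ready, and the construction of the RMQ structures can be interleaved with or separated from the construction of the sequences without affecting the asymptotics. With these checks in place, the corollary follows immediately.
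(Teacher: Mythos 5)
Your proposal is correct and matches the paper's own argument essentially verbatim: the paper likewise constructs $\mathit{Suff}_{|X|}$ from scratch, obtains each $\mathit{Suff}_i$ for $i$ from $|X|-1$ down to $1$ via $\mathsf{UpdateSuff}(i, \mathit{Suff}_{i+1})$ in $O(n)$ time per call by Lemma~\ref{lem Suff} (with $\mathit{Table}_{\mathit{next/prev}}$ built once beforehand), and builds each $\mathit{RMQ}_{\mathit{Suff}_i}$ in $O(n)$ time using \cite{BF}, summing to $O(n^2)$. Your two bookkeeping checks (the one-time cost of $\mathit{Table}_{\mathit{next/prev}}$ and the well-foundedness of the descending induction) are sound and consistent with the paper's presentation.
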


Consequently, from Corollaries~\ref{cor Enum221}, \ref{cor D221}, and \ref{cor Suff}, we have the following theorem.

\begin{theorem}\label{theo Enum221} 
Algorithm $\mathsf{Enum221}$ is an $(n^2, n^2, n)$-algorithm that solves the MCS enumeration problem.
\end{theorem}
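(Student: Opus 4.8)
The plan is to assemble the three corollaries already in hand---Corollaries~\ref{cor Enum221}, \ref{cor D221}, and \ref{cor Suff}---together with the stated bounds on $\mathit{Table}_{\mathit{next/prev}}$ from Section~\ref{sec pre}, and verify that they discharge each of the three complexity requirements of an $(n^2, n^2, n)$-algorithm. Since the substantive work has been carried out in the preceding lemmas, the theorem should follow by a direct bookkeeping of preprocessing time, space, and delay.

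First I would account for the preprocessing time. Algorithm $\mathsf{Enum221}$ builds $\mathit{Table}_{\mathit{next/prev}}$, which Section~\ref{sec pre} notes is $O(n^2)$-time constructible, together with $\mathit{D221}$, which Corollary~\ref{cor Suff} shows can be constructed in $O(n^2)$ time; their sum is $O(n^2)$. Next I would bound the space: $\mathit{Table}_{\mathit{next/prev}}$ is of size $O(n^2)$ by Section~\ref{sec pre}, $\mathit{D221}$ is of size $O(n^2)$ by Definition~\ref{def D221}, and the remaining working memory consumed during enumeration is $O(n)$ by Corollary~\ref{cor Enum221}; summing yields $O(n^2)$. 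For the delay, Corollary~\ref{cor D221} guarantees that $\mathit{D221}$ supports $O(|\mathit{ext}(Z')|)$-time queries of $\mathit{ext}(Z')$, which is precisely the hypothesis of Corollary~\ref{cor Enum221}; invoking the latter then gives that all distinct MCSs are output one by one, each in $O(n)$ time. Correctness of the enumeration---that exactly all distinct MCSs are produced---follows because Algorithm $\mathsf{Enum221}$ faithfully simulates Algorithm $\mathsf{Basic}$, whose correctness rests on the characterization of $Z'$-extensible characters in Lemma~\ref{lem ext}, as verified line by line in the discussion preceding Corollary~\ref{cor Enum221}.

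There is no genuinely hard step remaining: the delicate reasoning was already absorbed into Lemmas~\ref{lem ext}, \ref{lem safe}, \ref{lem D221}, and \ref{lem Suff} and their corollaries. The one point that deserves explicit care is that all three bounds must hold simultaneously for one and the same algorithm---in particular, that feeding the unconditional $O(|\mathit{ext}(Z')|)$-query support of Corollary~\ref{cor D221} into the conditional statement of Corollary~\ref{cor Enum221} genuinely discharges that corollary's hypothesis, so that the claimed $O(n)$ delay becomes unconditional rather than contingent. Once this is observed, the theorem is immediate.
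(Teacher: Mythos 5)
Your proposal is correct and follows essentially the same route as the paper, which derives the theorem directly from Corollaries~\ref{cor Enum221}, \ref{cor D221}, and \ref{cor Suff} (together with the $O(n^2)$ bounds on $\mathit{Table}_{\mathit{next/prev}}$ and the size of $\mathit{D221}$ from Definition~\ref{def D221}). Your explicit bookkeeping of the three complexity measures, and your observation that Corollary~\ref{cor D221} unconditionally discharges the hypothesis of Corollary~\ref{cor Enum221}, merely spells out what the paper leaves implicit.
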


\subsection{$O(n^2, n, n \log n)$-algorithm}\label{sec Enum211}

As another implementation of Algorithm $\mathsf{Basic}$ based on Conte~et~al.~\cite{CGP+}'s prefix-extensible character test (Lemmas~\ref{lem ext} and \ref{lem safe}), we propose an $O(n^2, n, n \log n)$-algorithm that solves the MCS enumeration problem, which we denote Algorithm $\mathsf{Enum211}$.

For any MCS-prefix $Z'$, let $\mathit{step}(Z')$ denote the minimum of $a_{\mathit{pref}(Z' \circ c)} - a_{\mathit{pref}(Z')}$ over all $Z'$-extensible characters $c$. 
Algorithm $\mathsf{Enum211}$ is almost the same as Algorithm $\mathsf{Enum221}$.
The only difference is that Algorithm $\mathsf{Enum211}$ uses $\mathit{Index}_{\mathit{next/prev}}$ for supporting $O(\log n)$-time next/prev-queries and a data structure $\mathit{D211}$ of size $O(n)$ that supports $O(\mathit{step}(Z') \log n)$-time queries of $\mathit{ext}(Z')$.
From this, we immediately have a lemma corresponding to Corollary~\ref{cor Enum221} as follows.

\begin{lemma}\label{lem Enum211}
If $\mathit{D211}$ is of size $O(n)$ and supports $O(\mathit{step}(Z') \log n)$-time queries of $\mathit{ext}(Z')$ for any MCS-prefix $Z'$, then Algorithm $\mathsf{Enum211}$ uses $O(n)$ space, including space for storing $\mathit{Index}_{\mathit{next/prev}}$ and $\mathit{D211}$, to output all distinct MCSs of $X$ and $Y$ one by one each in $O(n \log n)$ time.
\end{lemma}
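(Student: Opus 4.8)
The plan is to mirror the analysis that established Corollary~\ref{cor Enum221} for Algorithm $\mathsf{Enum221}$, re-deriving the per-output cost with the two changes introduced by Algorithm $\mathsf{Enum211}$: every next/prev-query now costs $O(\log n)$ rather than $O(1)$ because $\mathit{Index}_{\mathit{next/prev}}$ replaces $\mathit{Table}_{\mathit{next/prev}}$, and each query of $\mathit{ext}(Z')$ now costs $O(\mathit{step}(Z')\log n)$ rather than $O(|\mathit{ext}(Z')|)$ because $\mathit{D211}$ replaces $\mathit{D221}$. Since the simulation of Algorithm $\mathsf{Basic}$ (maintaining $\mathit{pref}$, $\mathit{safe}$, and $\mathit{alt}$ at every level and simulating its lines) is inherited verbatim, I would first recount the cost of processing one prefix $Z'\langle k]$ and appending $Z'[k+1]$: obtaining $\mathit{ext}(Z'\langle k])$ costs $O(\mathit{step}(Z'\langle k])\log n)$, computing $\mathit{pref}(Z'\langle k+1])$ by a next-query costs $O(\log n)$, and determining $\mathit{safe}(Z'\langle k+1])$ via Lemma~\ref{lem safe} costs $O((a_{\mathit{pref}(Z'\langle k+1])} - a_{\mathit{pref}(Z'\langle k])})\log n)$, since that lemma issues $O(a_{\mathit{pref}(Z'\langle k+1])} - a_{\mathit{pref}(Z'\langle k])})$ next-queries. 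Summing over $k$, the time to obtain one MCS $Z'$ is therefore $O\bigl(\log n\sum_{k}[\mathit{step}(Z'\langle k]) + (a_{\mathit{pref}(Z'\langle k+1])} - a_{\mathit{pref}(Z'\langle k])})]\bigr)$.

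The crux is bounding the two sums by $O(n)$. The second sum telescopes to $|X|+|Y|-2 = O(n)$, exactly as in the analysis of Algorithm $\mathsf{Enum221}$. For the first sum, the key observation — and the step I expect to be the main obstacle, since it is precisely what justifies tolerating the weaker $O(\mathit{step}(Z')\log n)$-time $\mathit{ext}$-query — is that the character $Z'[k+1]$ actually appended is itself $Z'\langle k]$-extensible, because $Z'\langle k+1]$ is an MCS-prefix. Hence $\mathit{step}(Z'\langle k])$, being the minimum of $a_{\mathit{pref}(Z'\langle k]\circ c)} - a_{\mathit{pref}(Z'\langle k])}$ over all $Z'\langle k]$-extensible $c$, is at most $a_{\mathit{pref}(Z'\langle k+1])} - a_{\mathit{pref}(Z'\langle k])}$, so $\sum_k \mathit{step}(Z'\langle k]) \le \sum_k (a_{\mathit{pref}(Z'\langle k+1])} - a_{\mathit{pref}(Z'\langle k])}) = O(n)$ by the same telescoping. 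This yields an $O(n\log n)$ bound on the time to obtain each MCS.

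I would then confirm that this per-MCS bound is truly the delay, i.e. that the backtracking performed by lines 7--10 of Algorithm $\mathsf{Basic}$ between two consecutive outputs does not dominate. The while-loop test of line 7 reduces to an emptiness check of the stored $\mathit{alt}(Z')$ and each deletion in line 8 is $O(1)$, so popping the stack costs $O(1)$ per level and $O(n)$ in total; the single execution of line 10 rescans a stored $\mathit{alt}$ in time proportional to its length and recomputes one $\mathit{pref}$/$\mathit{safe}$ pair within the same $O(n\log n)$ budget; and the forward descent to the next MCS costs no more than the full per-MCS bound just derived. Thus each MCS is emitted within delay $O(n\log n)$.

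Finally, for the space bound, $\mathit{Index}_{\mathit{next/prev}}$ and $\mathit{D211}$ each occupy $O(n)$ by hypothesis, and the algorithm additionally stores, at every level $k$ of the current stack, the values $\mathit{pref}(Z'\langle k])$ and $\mathit{safe}(Z'\langle k])$ ($O(1)$ each) together with the sequence $\mathit{alt}(Z'\langle k])$. By Lemma~\ref{lem Enum221} the number of $Z'\langle k-1]$-extensible characters is less than $a_{\mathit{pref}(Z'\langle k])} - a_{\mathit{pref}(Z'\langle k-1])}$, and $|\mathit{alt}(Z'\langle k])|$ is at most that number, so the same telescoping gives $\sum_k |\mathit{alt}(Z'\langle k])| = O(n)$. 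Hence all maintained data fit in $O(n)$ space, establishing the lemma.
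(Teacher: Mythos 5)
Your proof is correct and follows exactly the route the paper intends: the paper states this lemma without an explicit proof, declaring it ``immediate'' from the analysis behind Corollary~\ref{cor Enum221} once the $O(1)$-time next/prev-queries and $O(|\mathit{ext}(Z')|)$-time $\mathit{ext}$-queries are replaced by their $O(\log n)$- and $O(\mathit{step}(Z')\log n)$-time counterparts, and your write-up supplies precisely the details being elided --- in particular the observation that $\mathit{step}(Z'\langle k]) \leq a_{\mathit{pref}(Z'\langle k+1])} - a_{\mathit{pref}(Z'\langle k])}$ (since $Z'[k+1]$ is $Z'\langle k]$-extensible and $\mathit{step}$ is a minimum), so both sums telescope to $O(n)$. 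Your additional checks on backtracking cost and on the $O(n)$ bound for the stored $\mathit{alt}$ sequences via Lemma~\ref{lem Enum221} are likewise faithful to the paper's accounting for Algorithm $\mathsf{Enum221}$.
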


Below we develop data structure $\mathit{D211}$, which is of size $O(n)$ and supports $O(\mathit{step}(Z') \log n)$-queries of $\mathit{ext}(Z')$ for any MCS-prefix $Z'$.
Our idea for achieving the design of such a data structure is to classify all matches into $O(n)$ types based on their ``character-wise diagonal coordinates'' and determine a certain unique suff-match for each type to support $O(1)$-time queries of whether $w$ is a suff-match for any match $w$.
For any match $w$, let $\hat{\imath}_w$ (resp. $\hat{\jmath}_w$) denote the number of indices $i$ (resp. $j$) with $1 \leq i \leq i_w$ (resp. $1 \leq j \leq j_w$) such that $X[i] = c_w$ (resp. $Y[j] = c_w$), so that $i_w$ (resp. $j_w$) is the $\hat{\imath}_w$th (resp. $\hat{\jmath}_w$th) least such index $i$ (resp. $j$).
Furthermore, let $\hat{d}_w$ (resp. $\hat{a}_w$) denote $\hat{\jmath}_w - \hat{\imath}_w$ (resp $\hat{\imath}_w + \hat{\jmath}_w$), which is called the \emph{character-wise diagonal} (resp. \emph{anti-diagonal}) \emph{coordinate} of $w$.
The key observation is stated in the following lemma, which claims that there exists a threshold with respect to the character-wise anti-diagonal coordinate that separates the same type matches into suff-matches and others. 

\begin{lemma}\label{lem monotone}
For any suff-match $w$ and any match $v$ such that $c_v = c_w$, $\hat{d}_v = \hat{d}_w$, and $\hat{a}_v < \hat{a}_w$, $v$ is also a suff-match.
\end{lemma}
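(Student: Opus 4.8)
The plan is to exploit the recursive characterization of suff-matches established earlier, namely that a match $w$ is a suff-match if and only if $w = (|X|, |Y|)$ or $w = \mathit{prev}(c_w, w')$ for some suff-match $w'$. The crucial point is that one step backward along a character-wise diagonal is realized exactly by the operation $w \mapsto \mathit{prev}(c_w, w)$, so that suff-matches will turn out to be ``downward closed'' within a fixed type.

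First I would translate the hypotheses into the character-wise grid for $c = c_w$. Writing $x_1 < \cdots < x_p$ (resp. $y_1 < \cdots < y_q$) for the positions of $c$ in $X$ (resp. $Y$), every match of character $c$ has the form $(x_a, y_b)$ with $\hat{\imath} = a$ and $\hat{\jmath} = b$. Since $c_v = c_w$ and $\hat{d}_v = \hat{d}_w$, the identities $\hat{a}_v + \hat{d}_v = 2\hat{\jmath}_v$ and $\hat{a}_w + \hat{d}_w = 2\hat{\jmath}_w$ force $\hat{a}_w - \hat{a}_v$ to be a positive even integer $2t$, whence $\hat{\imath}_v = \hat{\imath}_w - t$ and $\hat{\jmath}_v = \hat{\jmath}_w - t$ for some $t \geq 1$. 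In other words, $w = (x_a, y_b)$ and $v = (x_{a - t}, y_{b - t})$ are uniquely determined.

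Next I would identify $\mathit{prev}(c_w, w)$ explicitly. Because $i_w = x_a$ and $j_w = y_b$ are themselves occurrences of $c$, their previous occurrences are $\mathit{prev}_X(c, x_a) = x_{a - 1}$ and $\mathit{prev}_Y(c, y_b) = y_{b - 1}$, so $\mathit{prev}(c_w, w) = (x_{a - 1}, y_{b - 1})$ --- precisely the immediate predecessor of $w$ on its character-wise diagonal (its anti-diagonal coordinate drops by exactly $2$). Applying the characterization with the suff-match $w$ playing the role of $w'$ then shows that this predecessor is itself a suff-match. Iterating $t$ times, or equivalently inducting on $\hat{a}_w - \hat{a}_v$, yields that $v = (x_{a - t}, y_{b - t})$ is a suff-match, as claimed.

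I do not anticipate a serious obstacle: the single delicate point is confirming that one backward step along the diagonal coincides with $\mathit{prev}(c, \cdot)$, which hinges on $i_w$ and $j_w$ being occurrences of $c$ (true since $c = c_w$) so that their ranks each drop by exactly one. I would also dispose of the boundary case $w = (|X|, |Y|)$: there $c_w = \sigma$, which by the standing assumptions occurs only at the final position of each string, so $p = q = 1$ and no match $v$ with $\hat{a}_v < \hat{a}_w$ exists on that diagonal, making the case vacuous.
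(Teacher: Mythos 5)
Your proof is correct and is essentially the paper's argument in a different wrapper: the paper prepends $(\hat{a}_w - \hat{a}_v)/2$ copies of $c_w$ to a string $Z''$ with $\mathit{suff}(Z'') = w$ and invokes $\mathit{suff}(c \circ Z) = \mathit{prev}(c, \mathit{suff}(Z))$, which is exactly your iterated step $w \mapsto \mathit{prev}(c_w, w)$ along the character-wise diagonal, packaged via the recursive suff-match characterization rather than an explicit witness string. Your key observation --- that $\mathit{prev}(c_w, \cdot)$ drops both character-wise ranks by exactly one because $i_w$ and $j_w$ are themselves occurrences of $c_w$ --- is the same mechanism the paper relies on implicitly.
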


\begin{proof}
For any string $Z''$ such that $\mathit{suff}(Z'') = w$, $v = \mathit{suff}({c_w}^{(\hat{a}_v - \hat{a}_w) / 2} \circ Z'')$, where ${c_w}^{(\hat{a}_v - \hat{a}_w) / 2}$ is the string consisting only of $(\hat{a}_v - \hat{a}_w) / 2$ copies of $c_w$.
\end{proof}

For any character $c$ and any index $\hat{d}$, let $\mathit{th}(c, \hat{d})$ denote the suff-match $w$ with $c_w = c$ and $\hat{d}_w = \hat{d}$ that has the greatest $\hat{a}_w$, if any, or a virtual suff-match $(\max(0, -\hat{d}), \max(0, \hat{d}))$, otherwise (see Figure~\ref{fig D211}). 
Lemma~\ref{lem monotone} implies that we can determine whether any match $w$ is a suff-match only from $\hat{\imath}_{w^\star}$ (resp. $\hat{\jmath}_{w^\star}$), where $w^\star = \mathit{th}(c_w, \hat{d}_w)$.
That is, $w$ is a suff-match if and only if $\hat{a}_w \leq \hat{a}_{w^\star}$, which holds if and only if $\hat{\imath}_w \leq \hat{\imath}_{w^\star}$ (resp. $\hat{\jmath}_w \leq \hat{\jmath}_{w^\star}$) due to $\hat{d}_w = \hat{d}_{w^\star}$.
Based on this observation, we define $\mathit{D211}$ as follows.

\begin{figure}[t]
\centering
\includegraphics[width=10cm]{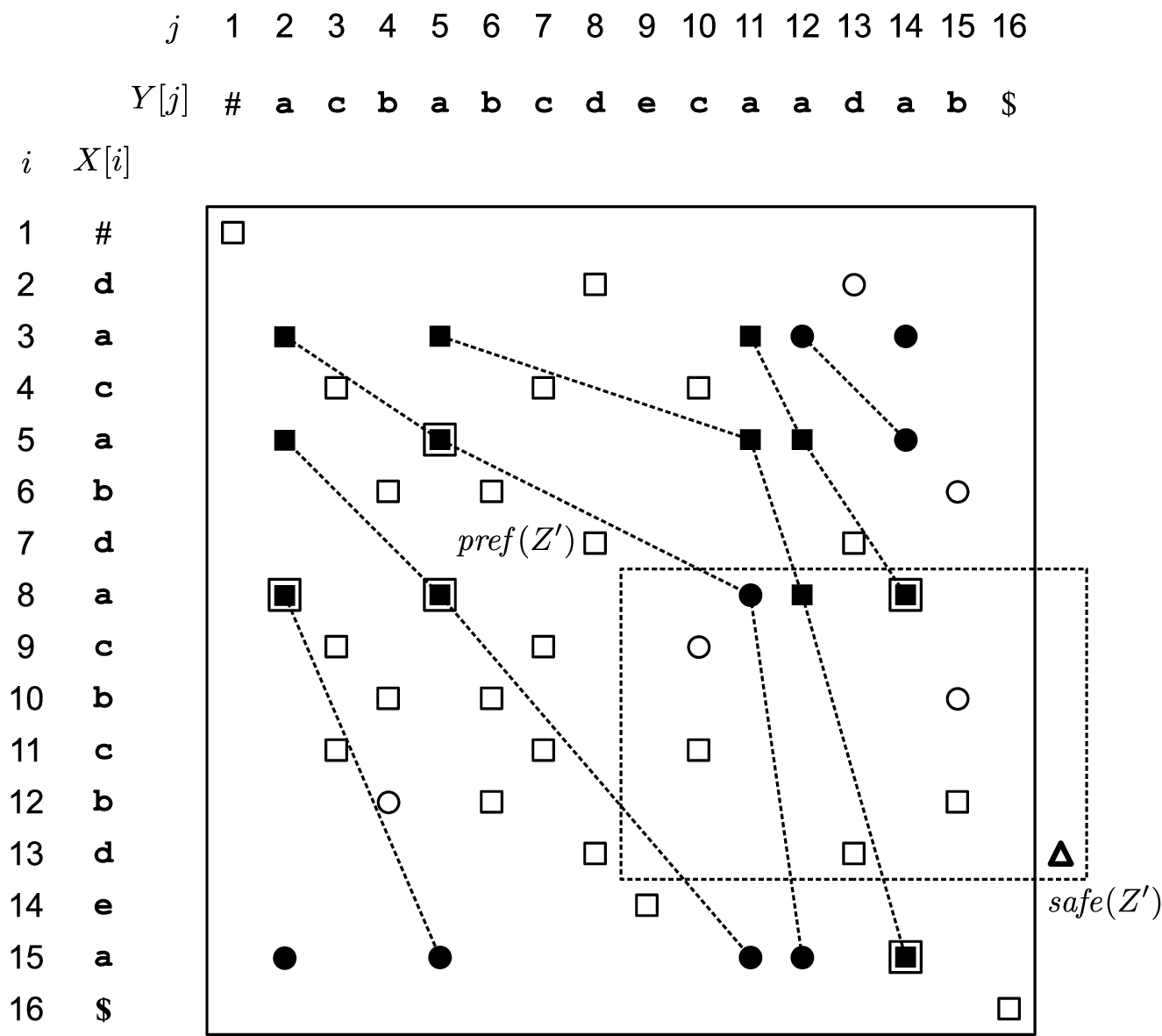}
\caption{
Suff-matches $\mathit{th}(\mathtt{a}, \hat{d})$, excluding virtual ones, with $-3 \leq \hat{d} \leq 4$ (double edged), which defines $\hat{I}^{\mathit{th}}_{\mathtt{a}} = 3 \circ 3 \circ 3 \circ 2 \circ 4 \circ 3 \circ 0 \circ 0$ and $\hat{J}^{\mathit{th}}_{\mathtt{a}} = 0 \circ 1 \circ 2 \circ 2 \circ 5 \circ 5 \circ 3 \circ 4$, for the same $X$, $Y$, and $Z'$ as Figure~\ref{fig trie}, where solid (resp. open) bullets indicate matches $w$ with $c_w = \mathtt{a}$ (resp. $c_w \neq \mathtt{a}$) and each line or polygonal line connects matches $w$ with $c_w = \mathtt{a}$ having the same character-wise diagonal coordinate $\hat{d}_w$
}
\label{fig D211}
\end{figure}

\begin{definition}\label{def D211}
For any character $c$ with $1 \leq c \leq \sigma$, let $\#_{X, c}$ (resp. $\#_{Y, c}$) denote the number of indices $i$ (resp. $j$) with $1 \leq i \leq |X|$ (resp. $1 \leq j \leq |Y|$) such that $X[i] = c$ (resp. $Y[j] = c$).
Furthermore, let $\hat{I}^{\mathit{th}}_c$ (resp. $\hat{J}^{\mathit{th}}_c$) denote the sequence of $\#_{X, c} + \#_{Y, c} - 1$ indices such that $\hat{I}^{\mathit{th}}_c[\hat{d} + \#_{X, c}] = \hat{\imath}_{w^\star}$ (resp. $\hat{J}^{\mathit{th}}_c[\hat{d} + \#_{X, c}] = \hat{\jmath}_{w^\star}$) for any index $\hat{d}$ with $1 - \#_{X, c} \leq \hat{d} \leq \#_{Y, c} - 1$, where $w^\star = \mathit{th}(c, \hat{d})$.
$\mathit{D211}$ consists of sequences $\hat{I}^{\mathit{th}}_c$ and $\hat{J}^{\mathit{th}}_c$ and their range maximum query data structures $\mathit{RMQ}_{-\hat{I}^{\mathit{th}}_c}$ and $\mathit{RMQ}_{-\hat{J}^{\mathit{th}}_c}$ for all characters $c$ with $1 \leq c \leq \sigma$.
(As auxiliary data, $\mathit{D211}$ also contains sequence $\hat{I}$ of length $|X|$ (resp. $\hat{J}$ of length $|Y|$), sequence $\#_X$ (resp. $\#_Y$) of length $\sigma$, and sequences $I_c$ of length $\#_{X, c}$ (resp. $J_c$ of length $\#_{Y, c}$) for all characters $c$ with $1 \leq c \leq \sigma$ such that for any match $w$, $\hat{I}[i_w] = \hat{\imath}_w$ (resp. $\hat{J}[j_w] = \hat{\jmath}_w$), $\#_X[c_w] = \#_{X, c_w}$ (resp. $\#_Y[c_w] = \#_{Y, c_w}$), and $I_{c_w}[\hat{\imath}_w] = i_w$ (resp. $J_{c_w}[\hat{\jmath}_w] = j_w$), which all can be prepared by a single scan of $X$ (resp. $Y$) in $O(n)$ time and $O(n)$ space.)
\end{definition}

For any MCS-prefix $Z'$, let any character $c$ such that $X[i_{\mathit{pref}(Z')} + 1 \rangle$ and $Y[j_{\mathit{pref}(Z')} + 1 \rangle$ share $c$ and $\mathit{pref}(Z') \prec \mathit{pref}(Z' \circ c)$ be called a \emph{$Z'$-extensible character candidate}.
Hence, for any match $w$ with $\mathit{pref}(Z') < w$, if $\mathit{pref}(Z') \prec w$ does not hold, then there exists a $Z'$-extensible character candidate $c$ such that $\mathit{pref}(Z' \circ c) < w$.
This implies that any $Z'$-extensible character is a $Z'$-extensible character candidate.
For any $Z'$-extensible character candidate $c$, there exist at most two candidates of prominent $c$-witnesses of $Z'$-extensibility.
One candidate is the suff-match $v_\leftarrow$ such that $i_{v_\leftarrow} = i_{\mathit{pref}(Z' \circ c)}$ and $j_{v_\leftarrow}$ is the least possible $j$ with $j_{\mathit{pref}(Z' \circ c)} \leq j \leq j_{\mathit{safe}(Z')}$ and $Y[j] = c$, if any.
The other candidate is the suff-match $v_\uparrow$ such that $j_{v_\uparrow} = j_{\mathit{pref}(Z' \circ c)}$ and $i_{v_\uparrow}$ is the least possible $i$ with $i_{\mathit{pref}(Z' \circ c)} \leq i \leq i_{\mathit{safe}(Z')}$ and $X[i] = c$, if any.
We call these at most two existing candidates the \emph{prominent $c$-witness candidates of $Z'$-extensibility}.
Note that for any such candidate $v$, $\mathit{pref}(Z') \prec v$ if and only if $v$ is a prominent $c$-witness of $Z'$-extensibility.
We use $\mathit{D211}$ to search for the prominent $c$-witness candidates of $Z'$-extensibility in $O(\log n)$ time for each $Z'$-extensible character candidates $c$ based on the following lemma.

\begin{lemma}\label{lem v-cand}
For any MCS-prefix $Z'$ and any $Z'$-extensible character candidate $c$, if $\mathit{Index}_{\mathit{next/prev}}$ is available, then $\mathit{D211}$ can be used to search for any of the prominent $c$-witness candidates of $Z'$-extensibility in $O(\log n)$ time.
\end{lemma}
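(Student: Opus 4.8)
The plan is to reduce the two-dimensional search for a suff-match to a one-dimensional threshold query along a fixed character-wise diagonal, and then answer that query in $O(\log n)$ time with the range-maximum structure in $\mathit{D211}$. By symmetry (exchanging the roles of $X$ and $Y$ and using $\hat{J}^{\mathit{th}}_c$ and $\mathit{RMQ}_{-\hat{J}^{\mathit{th}}_c}$ in place of $\hat{I}^{\mathit{th}}_c$ and $\mathit{RMQ}_{-\hat{I}^{\mathit{th}}_c}$), it suffices to show how to find $v_\leftarrow$, that is, the suff-match on row $i_{\mathit{pref}(Z' \circ c)}$ whose column is the least $c$-position $j$ with $j_{\mathit{pref}(Z' \circ c)} \le j \le j_{\mathit{safe}(Z')}$, if any.

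First I would express the suff-match test in character-wise coordinates. Every $c$-match on row $i_{\mathit{pref}(Z' \circ c)}$ shares the same character-wise row index $\hat{\imath} = \hat{I}[i_{\mathit{pref}(Z' \circ c)}]$, obtained in $O(1)$. For a $c$-position $j$, the match $(i_{\mathit{pref}(Z' \circ c)}, j)$ has character-wise diagonal coordinate $\hat{d} = \hat{J}[j] - \hat{\imath}$, and by Lemma~\ref{lem monotone} together with the definition of $\mathit{th}(c, \hat{d})$ it is a suff-match if and only if $\hat{a}_{(i_{\mathit{pref}(Z' \circ c)}, j)} \le \hat{a}_{\mathit{th}(c, \hat{d})}$; since the diagonal coordinates agree, this is equivalent to $\hat{\imath} \le \hat{\imath}_{\mathit{th}(c, \hat{d})} = \hat{I}^{\mathit{th}}_c[\hat{d} + \#_{X, c}]$. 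Because the row is fixed, $j \mapsto \hat{d}$ is strictly increasing over the $c$-positions, so $v_\leftarrow$ corresponds exactly to the least admissible $\hat{d}$ for which this inequality holds.

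Next I would fix the range of $\hat{d}$ to search. The least $c$-position in range is $j_{\mathit{pref}(Z' \circ c)}$ itself (since $Y[j_{\mathit{pref}(Z' \circ c)}] = c$), giving a lower endpoint $\hat{d}_{\mathrm{lo}}$; the greatest is $\mathit{prev}_Y(c, \min(j_{\mathit{safe}(Z')}, |Y|) + 1)$, obtained from $\mathit{Index}_{\mathit{next/prev}}$ in $O(\log n)$ time and converted via $\hat{J}$ into an upper endpoint $\hat{d}_{\mathrm{hi}}$; if $\hat{d}_{\mathrm{lo}} > \hat{d}_{\mathrm{hi}}$ then no candidate exists. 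It then remains to locate the least index $p$ in $[\hat{d}_{\mathrm{lo}} + \#_{X, c}, \hat{d}_{\mathrm{hi}} + \#_{X, c}]$ with $\hat{I}^{\mathit{th}}_c[p] \ge \hat{\imath}$, from which $v_\leftarrow = (i_{\mathit{pref}(Z' \circ c)}, J_c[\hat{\imath} + p - \#_{X, c}])$ is recovered in $O(1)$ using the auxiliary sequences of $\mathit{D211}$.

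The hard part will be this last query, and it is the only place where $\mathit{RMQ}_{-\hat{I}^{\mathit{th}}_c}$ is needed: the sequence $\hat{I}^{\mathit{th}}_c$ is not monotone, so the least index whose value is at least $\hat{\imath}$ cannot be read off directly. I would answer it by the standard range-maximum-guided binary search, maintaining a subinterval of $[\hat{d}_{\mathrm{lo}} + \#_{X, c}, \hat{d}_{\mathrm{hi}} + \#_{X, c}]$, querying the maximum of its left half through $\mathit{RMQ}_{-\hat{I}^{\mathit{th}}_c}$ (reading the value as $\hat{I}^{\mathit{th}}_c$ at the returned position), and recursing into the left half when that maximum is at least $\hat{\imath}$ and into the right half otherwise. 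Each of the $O(\log n)$ steps costs $O(1)$, so the whole search runs in $O(\log n)$ time; the boundary cases (an empty $\hat{d}$-range, $j_{\mathit{safe}(Z')} = |Y| + 1$, and virtual values of $\mathit{th}(c, \hat{d})$) are handled routinely, since the virtual convention fixed in Definition~\ref{def D211} makes the threshold inequality $\hat{\imath} \le \hat{I}^{\mathit{th}}_c[\cdot]$ report ``not a suff-match'' precisely when no real suff-match lies on that diagonal.
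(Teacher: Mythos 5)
Your proposal is correct and takes essentially the same approach as the paper's proof: both reduce the search for $v_\leftarrow$ to finding the least character-wise diagonal $\hat{d}$ in the range delimited by $\mathit{pref}(Z' \circ c)$ and $\mathit{safe}(Z')$ (the endpoints found via $\mathit{Index}_{\mathit{next/prev}}$) satisfying the threshold test $\hat{\imath}_{\mathit{pref}(Z' \circ c)} \leq \hat{I}^{\mathit{th}}_c[\hat{d} + \#_{X, c}]$, justified by Lemma~\ref{lem monotone}, and then locate that diagonal by an RMQ-guided binary search on $\mathit{RMQ}_{-\hat{I}^{\mathit{th}}_c}$ in $O(\log n)$ time. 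The only cosmetic differences are that the paper performs a separate $O(1)$ whole-range RMQ as an existence check and indexes its binary search by $\hat{\jmath}$ rather than by $\hat{d}$, which is merely a shift by $\hat{\imath}_{w_\vdash}$.
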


\begin{proof}
Since the candidate $v_\uparrow$ with $j_{v_\uparrow} = j_{\mathit{pref}(Z' \circ c)}$ can be searched for in a symmetric manner, we show how to search for the candidate $v_\leftarrow$ with $i_{v_\leftarrow} = i_{\mathit{pref}(Z' \circ c)}$ in $O(\log n)$ in time.

Let $w_\vdash = \mathit{pref}(Z' \circ c)$ and let $w_\dashv$ be the match $(i_{\mathit{pref}(Z' \circ c)}, j_\dashv)$, where $j_\dashv$ is the greatest index with $j_{\mathit{pref}(Z' \circ c)} \leq j_\dashv \leq j_{\mathit{safe}(Z')}$ such that $Y[j_\dashv] = c$.
These two matches can be determined in $O(\log n)$ time using $\mathit{Index}_{\mathit{next/prev}}$.
Since $v_\leftarrow$ is the suff-match $v$ having the least possible $j_v$ such that $i_v = i_{w_\vdash}$ and $j_{w_\vdash} \leq j_v \leq j_{w_\dashv}$, $v_\leftarrow$ exists if and only if there exists a suff-match $v$ such that $i_v = i_{w_\vdash}$ and $j_{w_\vdash} \leq j_v \leq j_{w_\dashv}$.
Such a suff-match $v$ exists if and only if there exists a character-wise diagonal coordinate $\hat{d}$ with $\hat{d}_{w_\vdash} \leq \hat{d} \leq \hat{d}_{w_\dashv}$ such that $\hat{\imath}_{w_\vdash} \leq \hat{\imath}_{\mathit{th}(c, \hat{d})}$.
Furthermore, such a $\hat{d}$ exists if and only if 
\[
\hat{\imath}_{w_\vdash} \leq \hat{I}^{\mathit{th}}_c[\mathit{RMQ}_{-\hat{I}^{\mathit{th}}_c}(\hat{d}_{w_\vdash} + \#_{X, c} : \hat{d}_{w_\dashv} + \#_{X, c})].
\]
Thus, $\mathit{D211}$ can be used to determine whether $v_\leftarrow$ exists in $O(1)$ time.
If $v_\leftarrow$ exists, then $\hat{\jmath}_{w_\vdash} \leq \hat{\jmath}_{v_\leftarrow} \leq \hat{\jmath}_{w_\dashv}$.
Hence, we can use $\mathit{D211}$ to determine $\hat{\jmath}_{v_\leftarrow}$ in $O(\log n)$ time by a binary search based on the fact that for any indices $\hat{\jmath}$, $\hat{\jmath}'$, and $\hat{\jmath}''$ with $\hat{\jmath}_{w_\vdash} \leq \hat{\jmath} \leq \hat{\jmath}' \leq \hat{\jmath}'' \leq \hat{\jmath}_{w_\dashv}$ such that $\hat{\jmath} \leq \hat{\jmath}_{v_\leftarrow} \leq \hat{\jmath}''$, if 
\[
\hat{\imath}_{w_\vdash} \leq \hat{I}^{\mathit{th}}_c[\mathit{RMQ}_{-\hat{I}^{\mathit{th}}_c}((\hat{\jmath} - \hat{\imath}_{w_\vdash}) + \#_{X, c} : (\hat{\jmath}' - \hat{\imath}_{w_\vdash}) + \#_{X, c})],
\]
then $\hat{\jmath} \leq \hat{\jmath}_{v_\leftarrow} \leq \hat{\jmath}'$; otherwise, $\hat{\jmath}' + 1 \leq \hat{\jmath}_{v_\leftarrow} \leq \hat{\jmath}''$.
\end{proof}

As stated in the following lemma, the number of $Z'$-extensible character candidates is appropriately small for our purpose, and all are found efficiently.

\begin{lemma}\label{lem c-cand}
For any MCS-prefix $Z'$, the number of $Z'$-extensible character candidates is less than $\mathit{step}(Z')$.
If $\mathit{Index}_{\mathit{next/prev}}$ is available, then the sequence of matches $\mathit{pref}(Z' \circ c)$ for all $Z'$-extensible character candidates $c$ in ascending order of $j_{\mathit{pref}(Z' \circ c)}$ (hence also in descending order of $i_{\mathit{pref}(Z' \circ c)}$) can be determined in $O(\mathit{step}(Z') \log n)$ time. 
\end{lemma}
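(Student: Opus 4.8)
The plan is to prove the two assertions separately: the cardinality bound by reusing the staircase argument behind Lemma~\ref{lem Enum221}, and the construction by walking a suitable \emph{staircase of first matches} with the help of $\mathit{Index}_{\mathit{next/prev}}$.

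For the cardinality bound, write $u = \mathit{pref}(Z')$ and let $(i_1, j_1), (i_2, j_2), \ldots, (i_s, j_s)$ be the matches $\mathit{pref}(Z' \circ c) = \mathit{next}(c, u)$ for the $s$ distinct $Z'$-extensible character candidates $c$, listed in ascending order of $j$. Distinct candidates induce distinct rows and distinct columns, because two first matches sharing a row or a column would share the corresponding character; and $u \prec (i_r, j_r)$ forbids any match strictly between $u$ and $(i_r, j_r)$, so the listing is strictly increasing in $j$ and strictly decreasing in $i$, with every $(i_r, j_r)$ exceeding $u$ in both coordinates. Just as in the proof of Lemma~\ref{lem Enum221} this gives $i_r \geq i_u + (s - r + 1)$ and $j_r \geq j_u + r$, whence $a_{(i_r, j_r)} - a_u \geq s + 1$ for every $r$. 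Since every $Z'$-extensible character is a $Z'$-extensible character candidate, the minimum defining $\mathit{step}(Z')$ ranges over a subfamily of these matches, so $\mathit{step}(Z') \geq s + 1 > s$ (the degenerate case of no extensible character, where $Z'$ is already an MCS, being vacuous).

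For the construction I would first characterise the wanted matches as exactly the minimal elements, under $<$, of the set $C'$ of all first matches $\mathit{next}(c, u)$: if a first match $v$ is not minimal then another first match lies strictly below it, hence strictly between $u$ and $v$, contradicting $u \prec v$; conversely, any match inside the open rectangle below and to the right of $u$ up to a minimal first match $v$ would force a strictly smaller first match, so minimality already yields $u \prec v$. These minimal elements form a staircase with strictly increasing columns and strictly decreasing rows, and they are precisely the strict left-to-right running minima of $\rho(j) = \mathit{next}_X(Y[j], i_u)$; a short running-minimum invariant shows that the column realising each new running minimum is automatically the first occurrence of its character in $Y$ after $j_u$, so no candidate character is reported twice. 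To list the staircase in the required order I would walk it from its smallest-column element while maintaining the current element $(i_r, j_r)$ and the running minimum row $i_r$, taking the next element to be $(\mathit{next}_X(Y[j], i_u), j)$ for the least column $j > j_r$ whose character occurs in $X$ at some row strictly between $i_u$ and $i_r$. I would find this $j$ by running two searches in parallel and halting at whichever finishes first: one scanning the columns $j_r + 1, j_r + 2, \ldots$ in turn, the other scanning the rows strictly between $i_u$ and $i_r$ and taking the least value of $\mathit{next}_Y(X[i'], j_r)$ over them. Every elementary step is a single next/prev query, answered by $\mathit{Index}_{\mathit{next/prev}}$ in $O(\log n)$ time, so one jump costs $O(\min(j_{r + 1} - j_r,\, i_r - i_u)\log n)$.

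The hard part will be the running time. Reaching the first element should cost at most $O(\mathit{step}(Z')\log n)$, since every candidate column is at least the first one, which forces every candidate anti-diagonal, and therefore $\mathit{step}(Z')$, to dominate that initial column distance. The delicate point is to bound the remaining jumps, i.e.\ to show that $\sum_r \min(j_{r + 1} - j_r,\ i_r - i_u) = O(\mathit{step}(Z'))$: each jump must be charged either to the column progress it makes or to the height of the band it scans, and these two must be played off against the anti-diagonal distance of the nearest extensible candidate that defines $\mathit{step}(Z')$. Carrying out this charging so that the parallel-search walk is provably $O(\mathit{step}(Z')\log n)$ is the step I expect to demand the most care.
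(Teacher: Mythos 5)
Your counting half is correct and is exactly the paper's argument: the staircase bound $a_{(i_r, j_r)} - a_{\mathit{pref}(Z')} \geq s + 1$ from Lemma~\ref{lem Enum221} applies to all candidate matches, and since the extensible characters form a subfamily of the candidates, $\mathit{step}(Z') \geq s + 1 > s$. Your characterization of the candidate matches as the strict running minima of $j \mapsto \mathit{next}_X(Y[j], i_{\mathit{pref}(Z')})$ is also sound and is implicit in the paper's construction. The genuine gap is the one you flag yourself, and it is not a missing charging technicality: the bound $\sum_r \min(j_{r+1} - j_r,\, i_r - i_{\mathit{pref}(Z')}) = O(\mathit{step}(Z'))$ for your left-to-right walk is \emph{false}. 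The row bands $(i_{\mathit{pref}(Z')}, i_r)$ you scan are nested, so the same rows are re-scanned at every jump and the row side does not telescope. Concretely, write $u = \mathit{pref}(Z')$, let the first candidate sit at $(i_u + K, j_u + 1)$ and arrange the suffixes of the strings so that its character is $Z'$-extensible, whence $\mathit{step}(Z') \leq K + 1$; then place about $K/2$ further candidates at rows $i_u + K - r$ with column gaps of order $K$ (filler columns can repeat the first candidate's character, so they create no new running minima). The $r$th jump then costs $\Theta(\min(K, K - r)\log n)$, and the whole walk costs $\Theta(K^2 \log n) = \Theta(\mathit{step}(Z')^2 \log n)$, an order of $\mathit{step}(Z')$ worse than required, so no charging scheme can rescue this walk.

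The idea you are missing is the paper's \emph{bidirectional} construction. The paper builds $S$ from both ends simultaneously: it extends the prefix $S\langle p]$ to $S\langle p + 1]$ if $a_{(i_p, j_p)} \geq a_{(i_q, j_q)}$ and the suffix $S[q\rangle$ to $S[q - 1\rangle$ otherwise, stopping when the two ends meet at an element $(i_t, j_t)$. A prefix extension scans columns $j_p + 1, j_p + 2, \dots$ (one next-query each) until $\mathit{next}_X(Y[j], i_u) < i_p$; a suffix extension symmetrically scans rows upward. Hence each column is probed at most once on the prefix side and each row at most once on the suffix side, and the total cost telescopes to $O(((j_t - j_u) + (i_t - i_u))\log n) = O((a_{(i_t, j_t)} - a_u)\log n)$, with no column beyond $j_t$ and no row beyond $i_t$ ever probed individually. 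The comparison rule forces the meeting element to have minimum anti-diagonal among all candidates: a pointer standing at the minimum is never advanced past it, because the rule always moves the endpoint with the larger anti-diagonal. Since the match $\mathit{pref}(Z' \circ c)$ of the extensible character $c$ attaining $\mathit{step}(Z')$ is itself a candidate, $a_{(i_t, j_t)} - a_u \leq \mathit{step}(Z')$, which yields the claimed $O(\mathit{step}(Z')\log n)$. On the bad family above, this construction never scans the large column gaps at all: the candidates to the right of the minimum-anti-diagonal element are reached from the right end by row scans, which is precisely the saving your one-directional walk cannot achieve.
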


\begin{proof}
Let $S = (i_1, j_1) \circ (i_2, j_2) \circ \cdots \circ (i_s, j_s)$ be the sequence of matches in the lemma.
By an argument similar to the proof of Lemma~\ref{lem Enum221}, for any index $r$ with $1 \leq r \leq s$, $a_{(i_r, j_r)} - a_{\mathit{pref}(Z')} \geq s + 1$.
This implies that $s < \mathit{step}(Z')$.

To construct $S$, we repeatedly extend either prefix $S \langle p]$ to $S \langle p + 1]$, if $a_{(i_p, j_p)} \geq a_{(i_q, j_q)}$, or suffix $S[q \rangle$ to $S[q - 1 \rangle$, otherwise, until $S[p] = S[q]$.
Let $(i_t, j_t)$ be determined to extend either $S \langle q - 1]$ to $S \langle q]$ or $S[p + 1 \rangle$ to $S[p \rangle$ in the last iteration.
Due to the condition as to whether $S \langle p]$ or $S[q \rangle$ should be extended, $a_{(i_t, j_t)} - a_{\mathit{pref}(Z')} = \mathit{step}(Z')$.
For any index $p$ with $0 \leq p \leq t - 1$, $(i_{p + 1}, j_{p + 1})$ can be determined in $O((j_{p + 1} - j_p) \log n)$ time by finding $j_{p + 1}$, which is the least index $j$ with $j_p < j$ such that $\mathit{next}_X(Y[j], i_{\mathit{pref}(Z')}) < i_p$, and setting $i_{p + 1}$ to $\mathit{next}_X(Y[j_{p + 1}], i_{\mathit{pref}(Z')})$, where $i_0 = |X| + 1$ and $j_0 = j_{\mathit{pref}(Z')}$.
Therefore, it takes $O((j_t - j_{\mathit{pref}(Z')}) \log n)$ time to obtain $S \langle t]$.
Analogously, it takes $O((i_t - i_{\mathit{pref}(Z')}) \log n)$ time to obtain $S[t \rangle$, completing the proof.
\end{proof}

From Lemmas~\ref{lem v-cand} and \ref{lem c-cand}, we can obtain $\mathit{ext}(Z')$ for any MCS-prefx $Z'$ in $O(\mathit{step}(Z') \log n)$ time as follows.

\begin{lemma}\label{lem D211}
For any MCS-prefix $Z'$, if $\mathit{Index}_{\mathit{next/prev}}$ and $\mathit{D211}$ are available, then $\mathit{ext}(Z')$ can be obtained in $O(\mathit{step}(Z') \log n)$ time.
\end{lemma}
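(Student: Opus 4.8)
The plan is to assemble $\mathit{ext}(Z')$ by combining the two preceding lemmas: Lemma~\ref{lem c-cand}, which enumerates all $Z'$-extensible character candidates, and Lemma~\ref{lem v-cand}, which locates the prominent witness candidates for each such character. Every prominent witness of $Z'$-extensibility carries a $Z'$-extensible character, and every $Z'$-extensible character is in particular a $Z'$-extensible character candidate; hence each element of $\mathit{ext}(Z')$ is one of the prominent $c$-witness candidates for some candidate character $c$. So I would first invoke Lemma~\ref{lem c-cand} to produce, in $O(\mathit{step}(Z') \log n)$ time, the sequence $S$ of matches $\mathit{pref}(Z' \circ c)$ for all $Z'$-extensible character candidates $c$, sorted in ascending order of $j$ (equivalently descending order of $i$); recall that the number $s$ of these candidates is less than $\mathit{step}(Z')$.

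Next, for each candidate character $c$ I would call Lemma~\ref{lem v-cand} to obtain its at most two prominent $c$-witness candidates $v_\leftarrow$ and $v_\uparrow$, each in $O(\log n)$ time. Since $s < \mathit{step}(Z')$, this phase costs $O(\mathit{step}(Z') \log n)$ in total and yields $O(\mathit{step}(Z'))$ candidate witnesses overall.

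It then remains to retain exactly those candidates that are genuine prominent witnesses. By the note preceding Lemma~\ref{lem v-cand}, a prominent $c$-witness candidate $v$ is a prominent $c$-witness if and only if $\mathit{pref}(Z') \prec v$. I would test this against the already-computed candidate staircase $S$ rather than against the whole match set: using the observation that any match $w$ with $\mathit{pref}(Z') < w$ but $\mathit{pref}(Z') \not\prec w$ forces some candidate $c$ with $\mathit{pref}(Z' \circ c) < w$, and conversely that $\mathit{pref}(Z') \prec \mathit{pref}(Z' \circ c) < v$ rules out $\mathit{pref}(Z') \prec v$, the relation $\mathit{pref}(Z') \prec v$ holds if and only if no match of $S$ lies strictly below-left of $v$. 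Because $S$ is a monotone staircase (strictly decreasing in $i$ as $j$ increases), this reduces to a single comparison: for the candidate $v_\leftarrow$ of the $r$-th character, which lies in row $i_{\mathit{pref}(Z' \circ c_r)}$, the matches of $S$ with smaller $i$-coordinate are exactly those indexed $r+1,\dots,s$, and the least of their $j$-coordinates is $j_{\mathit{pref}(Z' \circ c_{r+1})}$; thus $\mathit{pref}(Z') \prec v_\leftarrow$ fails precisely when $j_{\mathit{pref}(Z' \circ c_{r+1})} < j_{v_\leftarrow}$, and symmetrically for $v_\uparrow$. Each test is therefore $O(1)$, and after discarding the rejected candidates and deduplicating the at most two survivors per character I would output the retained suff-matches as $\mathit{ext}(Z')$.

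The dominant cost is the $O(\log n)$ witness search repeated over fewer than $\mathit{step}(Z')$ candidate characters, yielding the claimed $O(\mathit{step}(Z') \log n)$ bound; the candidate enumeration of Lemma~\ref{lem c-cand} matches this, and the filtering adds only $O(\mathit{step}(Z'))$. I expect the main obstacle to be the correctness of the filtering step, specifically confirming that testing $\mathit{pref}(Z') \prec v$ against only the candidate pref-matches of $S$ (rather than against all matches in the open rectangle below-left of $v$) is sound, and pinning down the strict-versus-nonstrict boundary conditions so that the single $O(1)$ comparison captures the covering relation exactly.
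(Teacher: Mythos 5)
Your proposal is correct and follows essentially the same route as the paper's proof: both build the candidate staircase $S$ via Lemma~\ref{lem c-cand} in $O(\mathit{step}(Z')\log n)$ time, fetch the at most two prominent $c$-witness candidates per candidate character via Lemma~\ref{lem v-cand} in $O(\log n)$ each, and then filter each candidate witness $v$ in $O(1)$ time by dominance comparisons against the elements of $S$ adjacent to $\mathit{pref}(Z' \circ c)$. Your single-comparison filtering (checking only $j_{r+1} < j_{v_\leftarrow}$, resp.\ $i_{r-1} < i_{v_\uparrow}$) is just a sharpened form of the paper's test against both $(i_{r-1}, j_{r-1})$ and $(i_{r+1}, j_{r+1})$, and your explicit justification that testing $\mathit{pref}(Z') \prec v$ against $S$ alone is sound spells out what the paper leaves implicit via its remark preceding Lemma~\ref{lem v-cand}.
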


\begin{proof}
Let $S = (i_1, j_1) \circ (i_2, j_2) \circ \cdots \circ (i_s, j_s)$ be the same sequence of matches as in the proof of Lemma~\ref{lem c-cand}, which is hence of length at most $\mathit{step}(Z')$ and obtained in $O(\mathit{step}(Z') \log n)$ time.
Let $r$ be an arbitrary index with $1 \leq r \leq s$.
Let $v$ be an arbitrary prominent $c_{(i_r, j_r)}$-witness candidate of $Z'$-extensibility, which can be obtained in $O(\log n)$ time by Lemma~\ref{lem v-cand}.
There exists an index $p$ (resp. $q$) with $1 \leq p \leq r - 1$ (resp. $r + 1 \leq q \leq s$) such that $(i_p, j_p) < v$ (resp. $(i_q, j_q) < v)$ if and only if $(i_{r - 1}, j_{r - 1}) < v$ (resp. $(i_{r + 1}, j_{r + 1}) < v$).
Thus, whether $v$ is a witness of $Z'$-extensibility can be determined in $O(1)$ time.
\end{proof}

\begin{figure}
\centering
\begin{algorithm}
\aitem{1}{$\mathit{Suff}_i \leftarrow \mathit{Suff}_{i + 1}$;}
\aitem{1}{$i_{\mathit{next}} \leftarrow \mathit{next}_X(X[i], i)$, which is  determined by scanning $X[i + 1 \rangle$;}
\aitem{1}{$j_{\mathit{prev}} \leftarrow 0$;}
\aitem{1}{for each index $j$ from $1$ to $|Y|$,}
\aitem{2}{if $\mathit{Suff}_{i + 1}[j] \neq |X| + 1$, $j_{\mathit{prev}} \geq 1$, and $\mathit{Suff}_{i + 1}[j] \leq i_{\mathit{next}}$, then}
\aitem{3}{$\mathit{Suff}_i[j_{\mathit{prev}}] \leftarrow i$;}
\aitem{2}{if $Y[j] = X[i]$, then}
\aitem{3}{$j_{\mathit{prev}} \leftarrow j$;}
\aitem{1}{output $\mathit{Suff}_i$.}
\end{algorithm}
\caption{
Procedure $\mathsf{UpdateSuff2}(i, \mathit{Suff}_{i + 1})$}
\label{algo UpdateSuff2}
\end{figure}

We construct $\mathit{D211}$ in a straightforward manner by enumerating all suff-matches.
To do this in $O(n^2)$ time and $O(n)$ space, we modify Procedure $\mathsf{UpdateSuff}$ so as to run in $O(n)$ time without using $\mathit{Table}_{\mathit{next/prev}}$ as follows.

\begin{lemma}\label{lem UpdateSuff2}
$\mathit{D211}$ can be constructed in $O(n^2)$ time and $O(n)$ space.
\end{lemma}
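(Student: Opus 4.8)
The plan is to construct $\mathit{D211}$ by enumerating all suff-matches one row at a time, using each to update the threshold suff-matches $\mathit{th}(c, \hat{d})$, and finally building the range maximum query structures. The auxiliary sequences $\hat{I}$, $\hat{J}$, $\#_X$, $\#_Y$, $I_c$, and $J_c$ are prepared in $O(n)$ time and space by a single scan of $X$ and $Y$, as noted in Definition~\ref{def D211}. I would initialize each $\hat{I}^{\mathit{th}}_c$ and $\hat{J}^{\mathit{th}}_c$ to the virtual values $\hat{I}^{\mathit{th}}_c[\hat{d} + \#_{X, c}] = \max(0, -\hat{d})$ and $\hat{J}^{\mathit{th}}_c[\hat{d} + \#_{X, c}] = \max(0, \hat{d})$; since the total length of these sequences, $\sum_c (\#_{X, c} + \#_{Y, c} - 1) = |X| + |Y| - \sigma = O(n)$, they occupy $O(n)$ space in total and are initialized in $O(n)$ time.

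First I would argue that Procedure $\mathsf{UpdateSuff2}(i, \mathit{Suff}_{i + 1})$ correctly outputs $\mathit{Suff}_i$ in $O(n)$ time and $O(n)$ space without consulting $\mathit{Table}_{\mathit{next/prev}}$. As in the proof of Lemma~\ref{lem Suff}, $\mathit{Suff}_i[j'] = i$ exactly when $(i, j')$ is a suff-match, and $(i, j')$ is a suff-match if and only if there is an index $j$ with $j' < j \leq |Y|$ such that $\mathit{Suff}_{i + 1}[j] \neq |X| + 1$, $\mathit{prev}_Y(X[i], j) = j'$, and $\mathit{prev}_X(X[i], \mathit{Suff}_{i + 1}[j]) = i$. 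The two modifications relative to $\mathsf{UpdateSuff}$ are that $\mathit{prev}_Y(X[i], j)$ is tracked inline by the variable $j_{\mathit{prev}}$ and that the test $\mathit{prev}_X(X[i], \mathit{Suff}_{i + 1}[j]) = i$ is replaced by $\mathit{Suff}_{i + 1}[j] \leq i_{\mathit{next}}$ with $i_{\mathit{next}} = \mathit{next}_X(X[i], i)$. Since $j_{\mathit{prev}}$ is updated to $j$ only after the test at the current $j$, its value at the test equals the greatest index below $j$ at which $Y$ equals $X[i]$, i.e. $\mathit{prev}_Y(X[i], j)$. For the second modification, because $i < i + 1 \leq \mathit{Suff}_{i + 1}[j]$ whenever $\mathit{Suff}_{i + 1}[j] \neq |X| + 1$, the condition $\mathit{prev}_X(X[i], \mathit{Suff}_{i + 1}[j]) = i$ holds if and only if no occurrence of $X[i]$ lies strictly between $i$ and $\mathit{Suff}_{i + 1}[j]$, which is exactly $\mathit{next}_X(X[i], i) \geq \mathit{Suff}_{i + 1}[j]$. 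Computing $i_{\mathit{next}}$ by scanning $X[i + 1 \rangle$ costs $O(n)$ time, and the single pass over $j$ adds $O(n)$ more, so each call runs in $O(n)$ time using only the two length-$O(n)$ arrays.

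Next, starting from the base sequence $\mathit{Suff}_{|X|} = (|X| + 1) \circ \cdots \circ (|X| + 1) \circ |X|$, I would compute $\mathit{Suff}_i$ for $i$ from $|X|$ down to $1$, discarding $\mathit{Suff}_{i + 1}$ once $\mathit{Suff}_i$ is available so that only $O(n)$ space is used at any moment. For each row $i$, the suff-matches $(i, j)$ are precisely the indices $j$ with $\mathit{Suff}_i[j] = i$; for each such $(i, j)$ I set $c = X[i]$, $\hat{\imath} = \hat{I}[i]$, $\hat{\jmath} = \hat{J}[j]$, and $\hat{d} = \hat{\jmath} - \hat{\imath}$, and overwrite $\hat{I}^{\mathit{th}}_c[\hat{d} + \#_{X, c}]$ and $\hat{J}^{\mathit{th}}_c[\hat{d} + \#_{X, c}]$ with $\hat{\imath}$ and $\hat{\jmath}$ whenever $\hat{\imath} + \hat{\jmath}$ exceeds the anti-diagonal of the currently stored entry. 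Since every suff-match is found exactly once and processed in $O(1)$ time, and each row contributes $O(n)$ work, this stage correctly computes $\mathit{th}(c, \hat{d})$ for every pair $(c, \hat{d})$ in $O(n^2)$ time. Finally I would build $\mathit{RMQ}_{-\hat{I}^{\mathit{th}}_c}$ and $\mathit{RMQ}_{-\hat{J}^{\mathit{th}}_c}$ for all $c$, each in time linear in its sequence length and hence $O(n)$ in total. Collecting the bounds gives the claimed $O(n^2)$ time and $O(n)$ space.

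The step I expect to be the main obstacle is verifying the correctness of $\mathsf{UpdateSuff2}$, specifically the equivalence of $\mathit{prev}_X(X[i], \mathit{Suff}_{i + 1}[j]) = i$ with the scan-free test $\mathit{Suff}_{i + 1}[j] \leq i_{\mathit{next}}$ and the inline maintenance of $j_{\mathit{prev}}$ as $\mathit{prev}_Y(X[i], j)$; once these are established, the accounting of time and space is routine.
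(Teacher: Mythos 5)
Your proposal matches the paper's proof essentially step for step: enumerate all suff-matches row by row via $\mathsf{UpdateSuff2}$ keeping only $O(n)$ space, update the threshold entries $\hat{I}^{\mathit{th}}_c$ and $\hat{J}^{\mathit{th}}_c$ per suff-match after the same virtual-match initialization, and build the RMQ structures in linear time. Your explicit verification that $\mathit{prev}_X(X[i], \mathit{Suff}_{i+1}[j]) = i$ is equivalent to $\mathit{Suff}_{i+1}[j] \leq i_{\mathit{next}}$ and that $j_{\mathit{prev}}$ tracks $\mathit{prev}_Y(X[i], j)$ correctly spells out what the paper compresses into ``the same argument as the proof of Lemma~\ref{lem Suff},'' so the argument is correct and not a different route.
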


\begin{proof}
We initialize all elements $\hat{I}^{\mathit{th}}_c[\hat{d} + \#_{X, c}]$ (resp. $\hat{J}^{\mathit{th}}_c[\hat{d} + \#_{X, c}]$) of $\hat{I}^{\mathit{th}}_c$ (resp. $\hat{J}^{\mathit{th}}_c$) to $\max(0, -\hat{d})$ (resp. $\max(0, \hat{d})$).
This can be done in $O(n)$ time.
Then, for each suff-match $v$, if $\hat{I}^{\mathit{th}}_{c_v}[\hat{d}_v + \#_{X, c_v}] < \hat{\imath}_v$ (resp. $\hat{J}^{\mathit{th}}_{c_v}[\hat{d}_v + \#_{X, c_v}] < \hat{\jmath}_v$), then the value of $\hat{I}^{\mathit{th}}_{c_v}[\hat{d}_v + \#_{X, c_v}]$ (resp. $\hat{J}^{\mathit{th}}_{c_v}[\hat{d}_v + \#_{X, c_v}]$) is updated to $\hat{\imath}_v$ (resp. $\hat{\jmath}_v$).
All suff-matches are enumerated by inductively constructing $\mathit{Suff}_i$ for each index $i$ with $1 \leq i \leq |X| - 1$ in descending order after $\mathit{Suff}_{|X|}$ is constructed from scratch.
We use Procedure $\mathsf{UpdateSuff2}(i, \mathit{Suff}_{i + 1})$ in Figure~\ref{algo UpdateSuff2} to obtain $\mathit{Suff}_i$ from $\mathit{Suff}_{i + 1}$.
This procedure uses variables $i_{\mathit{next}}$ and $j_{\mathit{prev}}$ to maintain indices $\mathit{next}_X(X[i], i)$ and $\mathit{prev}_Y(X[i], j)$, respectively.
Variable $i_{\mathit{next}}$ is determined by line 2 of the procedure in $O(n)$ time while variable $j_{\mathit{prev}}$ is maintained dynamically according to the value of variable $j$ by lines 3, 7, and 8.
Since $(\mathit{Suff}_{i + 1}[j], j)$ is a match and $\mathit{prev}_X(X[i], \mathit{Suff}_{i + 1}[j]) = i$ if and only if $\mathit{Suff}_{i + 1}[j] \neq |X| + 1$ and $\mathit{Suff}_{i + 1}[j] \leq i_{\mathit{next}}$, we can prove that Procedure $\mathsf{UpdateSuff2}(i, \mathit{Suff}_{i + 1})$ outputs $\mathit{Suff}_i$ in $O(n)$ time by the same argument as the proof of Lemma~\ref{lem Suff}.
Once $\mathit{Suff}_i$ is obtained, all suff-matches $w$ with $i_w = i$ can be extracted from it in $O(n)$ time.
Thus, all suff-matches can be enumerated in $O(n^2)$ time and $O(n)$ space.
\end{proof}

Lemmas~\ref{lem Enum211}, \ref{lem D211}, and \ref{lem UpdateSuff2} immediately yield the following theorem.

\begin{theorem}\label{theo Enum211} 
Algorithm $\mathsf{Enum211}$ is an $(n^2, n, n \log n)$-algorithm that solves the MCS enumeration problem.
\end{theorem}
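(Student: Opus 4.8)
The plan is to derive the theorem by assembling the three preceding lemmas, one for each of the complexities named in the definition of an $(n^2, n, n\log n)$-algorithm. First I would settle the preprocessing. Algorithm $\mathsf{Enum211}$ relies on the two data structures $\mathit{Index}_{\mathit{next/prev}}$ and $\mathit{D211}$. As recalled in Section~\ref{sec pre}, $\mathit{Index}_{\mathit{next/prev}}$ is constructible in $O(n)$ time and has size $O(n)$; by Lemma~\ref{lem UpdateSuff2}, $\mathit{D211}$ is constructible in $O(n^2)$ time and $O(n)$ space, and its construction needs neither of the other structures. Hence the whole preprocessing runs in $O(n) + O(n^2) = O(n^2)$ time, and both structures have size $O(n)$. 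This fixes the preprocessing-time complexity at $O(n^2)$ and supplies the $O(n)$-size premise that the later lemmas require.

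Next I would check that the hypothesis of Lemma~\ref{lem Enum211} is in force. That lemma is conditional on $\mathit{D211}$ having size $O(n)$ and supporting $O(\mathit{step}(Z')\log n)$-time queries of $\mathit{ext}(Z')$ for every MCS-prefix $Z'$. The size requirement is exactly the $O(n)$-space part of Lemma~\ref{lem UpdateSuff2}, and the query-time requirement is precisely the conclusion of Lemma~\ref{lem D211}, whose own precondition that $\mathit{Index}_{\mathit{next/prev}}$ be available is discharged by the preprocessing above. With both conditions verified, I would invoke Lemma~\ref{lem Enum211} to conclude that Algorithm $\mathsf{Enum211}$ runs in $O(n)$ total space---the lemma counts the space for $\mathit{Index}_{\mathit{next/prev}}$ and $\mathit{D211}$ inside this bound---and outputs each distinct MCS of $X$ and $Y$ in $O(n\log n)$ time.

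Putting the two steps together yields preprocessing-time complexity $O(n^2)$, space complexity $O(n)$, and delay-time complexity $O(n\log n)$, which is exactly the claim that $\mathsf{Enum211}$ is an $(n^2, n, n\log n)$-algorithm for the MCS enumeration problem. Because the argument is nothing more than chaining the three lemmas, I do not anticipate a substantive obstacle; the only matter needing care is that the interfaces line up cleanly. Concretely, I would confirm that the per-query bound delivered by Lemma~\ref{lem D211} is verbatim the one assumed by Lemma~\ref{lem Enum211}, and that $\mathit{Index}_{\mathit{next/prev}}$---needed both as the $O(\log n)$-time next/prev-query engine inside the simulation of Algorithm $\mathsf{Basic}$ and as a formal precondition of Lemma~\ref{lem D211}---is in place before any query is issued. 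The telescoping that converts the per-prefix cost $O(\mathit{step}(Z')\log n)$ into the stated $O(n\log n)$ per-MCS delay is already absorbed into Lemma~\ref{lem Enum211}, so it need not be reproduced here.
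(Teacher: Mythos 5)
Your proposal is correct and follows exactly the paper's route: the paper proves this theorem by stating that Lemmas~\ref{lem Enum211}, \ref{lem D211}, and \ref{lem UpdateSuff2} immediately yield it, which is precisely the chaining you carry out (with the $O(n)$-time, $O(n)$-space constructibility of $\mathit{Index}_{\mathit{next/prev}}$ from Section~\ref{sec pre} implicitly folded in). You merely make explicit the interface checks the paper leaves to the reader.
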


\section{Conclusion}\label{sec conc}

This article considered the problem of enumerating maximal common subsequences (MCSs) of two strings. 
For any positive integer $n$ and any pair of strings both of length $O(n)$, $(n^3, n^3, n)$-, $(n^2, n^2, n)$-, and $(n^2, n, n \log n)$-algorithms for this problem were proposed, where an $(f_{\mathrm{p}}(n), f_{\mathrm{s}}(n), f_{\mathrm{d}}(n))$-algorithm outputs all distinct MCSs of the two strings each in $O(f_{\mathrm{d}}(n))$ time after performing an $O(f_{\mathrm{p}}(n))$-time preprocessing to prepare a data structure of size $O(f_{\mathrm{s}}(n))$.
Although the $(n^3, n^3, n)$-time algorithm is inferior to the $(n^2, n^2, n)$-algorithm in terms of efficiency, the data structure constructed by the $(n^3, n^3, n)$-time algorithm allows access to only all distinct MCSs without explicitly enumerating them, so it can be used to efficiently find or enumerate certain special MCSs, such as quasi-LCSs and most stable MCSs.

An interesting question remains as to whether it is possible to efficiently enumerate MCSs for an arbitrary number of strings.
This is because a naive generalization of any of the algorithms proposed in this article would result in a data structure that is exponential in size with respect to the number of strings.
If MCSs of multiple strings are enumerable with the preprocessing-time, space, and delay-time complexities polynomial in the number and length of the strings, then we can treat all common subsequences, including LCSs, which are NP-hard to find, as included in the search scope for significant structures shared by the strings.

\section*{Acknowledgment}

This work was supported by JSPS KAKENHI Grant Number JP23K10975.

\end{document}